\algnewcommand{\lIf}[1]{\State\algorithmicif\ #1\ \algorithmicthen}
\newcommand{\setalglineno}[1]{%
	  \setcounter{ALG@line}{\numexpr#1-1}}
\newtheorem{lemma}{Lemma}
\newtheorem{theorem}{Theorem}
\newtheorem{remarkb}{Remark}
\newtheorem{proposition}{Proposition}
\newtheorem{claim}{Claim}
\newtheorem{observation}{Observation}
\newcommand{\para}[1]{\smallskip\noindent\emph{#1}}
\newcommand{\ls}{\langle}
\newcommand{\rs}{\rangle}
\newcommand{\D}{\mathcal{D}}
\newcommand{\Continue}{\textbf{continue}}
\newcommand{\SCCs}[1]{\ensuremath{\mathsf{SCC}(#1)}}
\newcommand{\EC}[1]{\ensuremath{\mathsf{EC}(#1)}}
\newcommand{\AEC}[1]{\ensuremath{\mathsf{AEC}(#1)}}
\newcommand{\Out}[1]{\ensuremath{\mathit{Out}(#1)}}
\newcommand{\In}[1]{\ensuremath{\mathit{In}(#1)}}
\newcommand{\Attr}[3]{\ensuremath{\mathsf{Attr}_{#1}^{#2}(#3)}}
\newcommand{\Pre}[2]{\ensuremath{\mathsf{Pre}_{#1}(#2)}}
\newcommand{\Post}[2]{\ensuremath{\mathsf{Post}_{#1}(#2)}}
\newcommand{\Set}[1]{\ensuremath{\{#1}\}}
\newcommand{\SCCFind}[2]{\ensuremath{\mathsf{SCC\mbox{-}Find}_{#1}(#2)}}
\newcommand{\Diam}[1]{\ensuremath{\mathsf{diam}(#1)}}
\newcommand{\Pick}[1]{\ensuremath{\mathsf{Pick}(#1)}}
\newcommand{\ROut}[1]{\ensuremath{\mathsf{ROut}(#1)}}
\newcommand{\Parity}[1]{\ensuremath{\mathrm{Parity(#1)}}}
\newcommand{\Reach}[1]{\ensuremath{\mathrm{Reachability(#1)}}}
\newcommand{\GraphReach}[2]{\ensuremath{\mathsf{GraphReach}_{#1}(#2)}}
\newcommand{\minPriority}[1]{\ensuremath{\mathsf{MinPriority}_(#1)}}
\newcommand{\Inf}[1]{\ensuremath{\mathrm{Inf(#1)}}}
\newcommand{\ASW}[1]{\ensuremath{\langle\!\langle1\rangle\!\rangle_{\mathit{almost}}(#1)}}
\newcommand{\WE}{\ensuremath{\mathsf{WE}}}
\newcommand{\A}{\ensuremath{\mathcal{A}}}
\renewcommand{\O}{\widetilde{O}}
\renewcommand{\P}{\ensuremath{\mathcal{P}}}
\newcommand{\aspace}{\ensuremath{\mathit{space}}}
\begin{document}

\title{Symbolic Time and Space Tradeoffs for Probabilistic Verification}
\author{\IEEEauthorblockN{Krishnendu Chatterjee}
	\IEEEauthorblockA{IST Austria, Austria}
	\and
	\IEEEauthorblockN{Wolfgang Dvo\v{r}\'ak}
	\IEEEauthorblockA{Institute of Logic and Computation\\ TU Wien, Austria}
	\and
	\IEEEauthorblockN{Monika Henzinger and Alexander Svozil}
	\IEEEauthorblockA{Theory and Application of Algorithms\\ University of Vienna, Austria}}

\IEEEoverridecommandlockouts
	
\maketitle

\begin{abstract}
We present a faster \emph{symbolic} algorithm for the following central
problem in
probabilistic verification: Compute the maximal end-component (MEC)
decomposition
of Markov decision processes (MDPs).
This problem generalizes the SCC decomposition problem of graphs and closed
recurrent sets of Markov chains.
The model of symbolic algorithms is widely used in formal verification and
model-checking, where access to the input model is restricted to only
symbolic operations (e.g., basic set operations and computation of one-step
neighborhood).
For an input MDP with $n$ vertices and $m$ edges, the classical symbolic
algorithm from the 1990s for  the MEC decomposition requires $O(n^2)$
symbolic operations and $O(1)$ symbolic space.
The only other symbolic algorithm for the MEC decomposition requires
$O(n \sqrt{m})$ symbolic operations and $O(\sqrt{m})$ symbolic space.
The main open question has been whether the worst-case $O(n^2)$ bound for
symbolic operations can be beaten for MEC decomposition computation.
In this work, we answer the open question in affirmative.
We present a symbolic algorithm that requires $\widetilde{O}(n^{1.5})$
symbolic
operations and $\widetilde{O}(\sqrt{n})$ symbolic space.
Moreover, the parametrization of our algorithm provides a trade-off between symbolic operations and symbolic space:
for all $0<\epsilon \leq 1/2$ the symbolic algorithm requires
$\widetilde{O}(n^{2-\epsilon})$ symbolic operations and
$\widetilde{O}(n^{\epsilon})$ symbolic space ($\widetilde{O}(\cdot)$ hides poly-logarithmic factors).

Using our techniques we also present faster algorithms for computing the almost-sure winning regions of $\omega$-regular objectives
for MDPs.
We consider the canonical parity objectives for $\omega$-regular objectives, and for parity objectives
with $d$-priorities we present an algorithm that computes the almost-sure winning region with 
$\widetilde{O}(n^{2-\epsilon})$ symbolic operations and
$\widetilde{O}(n^{\epsilon})$ symbolic space, for all $0 < \epsilon \leq 1/2$.
In contrast, previous approaches require either (a)~$O(n^2\cdot d)$ symbolic operations and $O(\log
n)$ symbolic space;
or (b)~$O(n \sqrt{m} \cdot d)$ symbolic operations and $O(\sqrt{m})$ symbolic space.
Thus we improve the time-space product from $\widetilde{O}(n^2\cdot d)$ to $\widetilde{O}(n^2)$.
\end{abstract}

\section{Introduction}
The verification of probabilistic systems, e.g., randomized protocols, 
or agents in uncertain environments like robot planning is a fundamental
problem in formal methods. 
We study a classical graph algorithmic problem that arises in the verification of
probabilistic systems and present a faster symbolic algorithm for it.
We start with the description of the graph problem and its applications, then 
describe the symbolic model of computation, then previous results, and finally our 
contributions.

\para{MEC decomposition.}
Given a finite directed graph $G=(V,E)$ with a set $V$ of vertices, a set 
$E \subseteq V \times V$ of directed edges, and a partition $(V_1,V_R)$ of $V$,
a nontrivial \emph{end-component (EC)}  is a set $U \subseteq V$ of vertices such that 
(a)~the graph $(U, E \cap U \times U)$ is strongly connected; 
(b)~for all $u \in U \cap V_R$ and all $(u,v) \in E$ we have $v\in  U$; and
(c)~$|U| \geq 2$. 
If $U_1$ and $U_2$ are ECs with $U_1 \cap U_2 
\neq \emptyset$, then $U_1 \cup U_2$ is an EC\@.
A \emph{maximal end-component (MEC)} is an EC that is maximal wrt set inclusion.
Every vertex of $V$ belongs to \emph{at most} one MEC and the 
MEC decomposition consists of all MECs of $V$
and all vertices of $V$ that do not belong to \emph{any} MEC\@.
The MEC decomposition problem generalizes 
the strongly connected component (SCC) decomposition of directed graphs 
($V_R=\emptyset$) and closed recurrent sets for Markov chains ($V_1=\emptyset$).

\para{Applications.} 
In verification of probabilistic systems, the classical model is called 
\emph{Markov decision processes (MDPs)}~\cite{Howard},
where there are two types of vertices.
The vertices in $V_1$ are the regular vertices in a graph algorithmic 
setting, and the vertices in $V_R$ represent random vertices.
MDPs are used to model and solve control problems in systems such as stochastic systems~\cite{FV97},
concurrent probabilistic systems~\cite{CY95}, planning problems in artificial intelligence~\cite{Puterman},
and many problems in verification of probabilistic systems~\cite{BK-book}. 
The MEC decomposition problem is a central algorithmic problem in the
verification of probabilistic systems~\cite{CY95,BK-book} and it is a core component
in all leading tools of probabilistic verification~\cite{PRISM,STORM}. 
Some key applications are as follows:
(a)~the almost-sure reachability problem can be solved in linear time given
the MEC decomposition~\cite{CDHL16};
(b)~verification of MDPs wrt $\omega$-regular properties requires 
MEC decomposition~\cite{CY95,CY90,BK-book,CH11};
(c)~algorithmic analysis of MDPs with quantitative objectives as well
as the combination of $\omega$-regular and quantitative objectives 
requires MEC decomposition~\cite{CH-ILC,BBCFK11,CHJS15};
and (d)~applying learning algorithms to verification requires MEC decomposition 
computation~\cite{KPR18,DHKP17}.

\para{Symbolic model and algorithms.}
In verification, a system consists of variables, and a state of the system corresponds to a set of valuations, one for each variable. 
This naturally induces a directed graph: vertices represent states and the directed edges represent
state transitions. 
However, as the transition systems are huge they are usually not explicitly represented during their analysis. 
Instead they are \emph{implicitly represented} using e.g., binary-decision diagrams (BDDs)~\cite{bryant1986graph,bry92}. 
An elegant theoretical model for algorithms that works on this implicit representation, without considering the specifics of the representation and implementation, 
has been developed, called \emph{symbolic algorithms} (see
e.g.~\cite{BurchCMDH90,ClarkeMCH96,Somenzi99,ClarkeGP99,ClarkeGJLV03,CHLOT18,GentiliniPP08,ChatterjeeHJS13}).
A symbolic algorithm is allowed to use the same mathematical, logical, and memory access operations as a regular RAM algorithm, except for the
access to the input graph: 
It is not given  access to the input graph through an adjacency list or adjacency matrix representation but instead 
\emph{only} through two types of \emph{symbolic operations}:
\begin{compactenum}
\item {\emph{One-step operations Pre and Post:}} Each \emph{predecessor \textsc{Pre}} 
	(resp., \emph{successor \textsc{Post}}) operation is given a set $X$ of vertices and
returns the set of vertices~$Y$ with an edge to (resp., edge from) 
some vertex of~$X$.
\item \emph{Basic set operations:} Each basic set operation is given one or two sets of vertices or edges 
and performs a union, intersection, or complement on these sets.
\end{compactenum}
Symbolic operations are more expensive than the non-symbolic operations and 
thus \emph{symbolic time} is defined as the \emph{number of symbolic operations} of a symbolic algorithm. 
One unit of space is defined as one set (not the size of the set) due to the implicit representation as a BDD\@.
We define \emph{symbolic space} of a symbolic algorithm as the maximal \emph{number of sets} stored simultaneously.
Moreover, as the symbolic model is motivated by the compact representation of huge graphs,
we aim for symbolic algorithms that require sub-linear space.

\para{Previous results and main open question.} 
We summarize the previous results and the main open question.
We denote by $|V| = n$ and $|E| = m$ the number of vertices and edges, respectively.
\begin{compactitem}
\item\para{Standard RAM model algorithms.} 
The computation of the MEC (aka controllable recurrent set in early works) 
decomposition problem has been a central problem since the work of~\cite{CY90,CY95,deAlfaroThesis}.
The classical algorithm for this problem requires $O(n)$ SCC decomposition calls,
and the running time is $O(nm)$.
The above bound was improved to (a)~$O(m\sqrt{m})$ in~\cite{CH11} and (b)~$O(n^2)$ in~\cite{CH14}.
While the above algorithms are deterministic, a randomized algorithm with expected almost-linear
$\widetilde{O}(m)$ running time  has been presented in~\cite{CDHS19}.

\item \emph{Symbolic algorithms.}
The symbolic version of the classical algorithm for MEC decomposition requires $O(n)$ 
symbolic SCC computation. 
Given the $O(n)$ symbolic operations SCC computation algorithm 
from~\cite{GPP03}, we obtain an $O(n^2)$ symbolic operations MEC decomposition 
algorithm, which requires $O(\log n)$ symbolic space.
A symbolic algorithm, based on the algorithm of~\cite{CH11}, was presented in~\cite{CHLOT18},
which requires  $O(n \sqrt{m})$ symbolic operations and  $O(\sqrt{m})$ symbolic space.
\end{compactitem}
The classical algorithm from the 1990s with the linear symbolic-operations SCC 
decomposition algorithm from 2003 gives the $O(n^2)$ symbolic operations bound, 
and since then the main open question for the MEC decomposition problem has been 
whether the worst-case $O(n^2)$ symbolic operations bound can be beaten.

\para{Our contributions.}
\begin{enumerate}
\item In this work we answer the open question in the affirmative. 
Our main result presents a symbolic operation and symbolic space trade-off algorithm that
for any $0<\epsilon \leq 1/2$ requires $\widetilde{O}(n^{2-\epsilon})$ symbolic operations and 
$\widetilde{O}(n^{\epsilon})$ symbolic space. 
In particular, our algorithm for $\epsilon=1/2$ requires $\widetilde{O}(n^{1.5})$ symbolic operations 
and $\widetilde{O}(\sqrt{n})$ symbolic space, which improves both the symbolic operations and 
symbolic space of~\cite{CHLOT18}.

\item We also show that our techniques extend beyond MEC computation and is also applicable to 
almost-sure winning (probability-1 winning) of $\omega$-regular objectives for MDPs. 
We consider parity objectives which are cannonical form to express $\omega$-regular objectives.
For parity objectives with $d$ priorities the previous symbolic algorithms require 
$O(d)$ calls to MEC decomposition; thus leading to bounds such as 
(a)~$O(n^2\cdot d)$ symbolic operations and $O(\log n)$ symbolic space;
or (b)~$O(n \sqrt{m} \cdot d)$ symbolic operations and $O(\sqrt{m})$ symbolic space.
In contrast we present an approach that requires $O(\log d)$ calls to MEC decomposition,
and thus our algorithm requires
$\widetilde{O}(n^{2-\epsilon})$ symbolic operations and
$\widetilde{O}(n^{\epsilon})$ symbolic space, for all $0 < \epsilon \leq 1/2$.
Thus we improve the time-space product from $\O(n^2d)$ to $\widetilde{O}(n^2)$.
\end{enumerate}

\para{Technical contribution.} 
Our main technical contributions are as follows:
(1)~We use a separator technique for the decremental SCC algorithm from~\cite{CHILP16}. 
However, while previous MEC decomposition algorithms for the standard RAM model 
(e.g.~\cite{CDHS19}) use ideas from decremental SCC algorithms, data-structures used in 
decremental SCC algorithms of~\cite{CHILP16,BPW19} 
such as Even-Shiloach trees~\cite{EvenS81} have no symbolic representation.
A key novelty of our algorithm is that instead of basing our algorithm on a decremental algorithm we use 
the incremental MEC decomposition algorithm of~\cite{CH11} 
along with the separator technique. 
Moreover, the algorithms for decremental SCC of~\cite{CHILP16,BPW19} are randomized algorithms, 
in contrast, our symbolic algorithm is deterministic.
(2)~Since our algorithm is based on an incremental algorithm approach, we need to support an 
operation of collapsing ECs even though we do not have access to the graph directly (e.g.\ through an adjaceny list representation), but only have access to the graph through symbolic operations.
(3)~All MEC algorithms in the classic model first decompose the graph into its SCCs and then run on each SCC\@. However,
to achieve sub-linear space we cannot store all SCCs, and we show that our algorithm
has a tail-recursive property that can be utilized to achieve sub-linear space.
With the combination of the above ideas, we beat the long-standing $O(n^2)$ symbolic operations barrier
for the MEC decomposition problem, along with sub-linear symbolic space.

\para{Implications.}
Given that MEC decomposition is a central algorithmic problem for MDPs, our result has 
several implications.
The two most notable examples in probabilistic verification are:
(a)~almost-sure reachability objectives in MDPs can be solved with 
$\widetilde{O}(n^{1.5})$ symbolic operations, improving the previous known
$O(n \sqrt{m})$ symbolic operations bound; and
(b)~almost-sure winning sets for canonical $\omega$-regular objectives such as
parity and Rabin objectives with $d$-colors can be solved with 
$O(d)$ calls to the MEC decomposition followed by a call to almost-sure 
reachability~\cite{deAlfaroThesis,ChaThesis}, and hence our result gives  
an $\widetilde{O}(d n^{1.5})$ symbolic operations bound 
improving the previous known $O(d n \sqrt{m})$ symbolic operations 
bound.

\section{Preliminaries}	

\para{Markov Decision Processes (MDPs).}
A \emph{Markov Decision Process} $P= ((V,E), \ls V_1, V_R \rs,
\delta)$ 
consists of a finite set of vertices $V$ partitioned into 
the player-1 vertices $V_1$ and the random vertices $V_R$, 
a finite set of edges $E \subseteq (V \times V)$, 
and a probabilistic transition function $\delta$.  
We define $|V| = n$ to be the number of vertices and $|E| = m$ to be the
number of edges.

The probabilistic transition function $\delta$ maps every random
vertex in $V_R$ to an element of $\D(V)$, where $\D(V)$ is the set of
probability distributions over the set of vertices $V$. A random vertex $v$ has
an edge to a vertex $w \in V$, i.e.\ $(v,w) \in E$ iff
$\delta(v)[w] > 0$. An edge $e = (u,v)$ is a \emph{random edge} if $u \in V_R$ otherwise it is a \emph{player-1	edge}.
W.l.o.g., we assume $\delta(v)$ to be the uniform distribution over vertices $u$ with $(v,u) \in E$.
We follow the common technical assumption that vertices in the MDP do not have self-loops and 
that every vertex has an outgoing edge.
Given a set $X$ of vertices we define $P[X] = (X,(X\times X) \cap E,\langle X \cap V_1, X \cap V_r \rangle,
\delta')$ where $\delta'$ is again the uniform distribution over vertices $u$ with $(v,u) \in (X \times
X) \cap E$. 

\para{Graphs.}
Graphs are a special case of MDPs with $V_R = \emptyset$.
The sets $\In{v,E}$, $\Out{v,E}$ describe the sets
of predecessors and successors of a vertex $v$.
More formally, $\In{v,E}$ is defined as the set $\{w \in V \mid (w,v) \in E \}$
and $\Out{v,E} = \{ w \in V \mid (v,w) \in E \}$. When $E$ is clear from the context we sometimes write
$\In{v}$ and $\Out{v}$.
The diameter $\Diam{S}$ of a set of vertices $S \subseteq V$ is defined as the largest finite
distance in the graph induced by $S$, which coincides with the usual graph-theoretic definition on
strongly connected graphs. 

\para{Maximal End-Component (MEC) Decomposition.}
An \emph{end-component} (EC) is a set of vertices $X \subseteq V$
s.t. (1)~the subgraph induced by $X$ is strongly connected (i.e., $(X,E \cap (X \times X))$ is
strongly connected) and (2)~all random vertices in $X$ have their outgoing edges in $X$. More
formally, for all $v \in X \cap V_R$ and all $(v,u) \in E$ we have $u \in X$.
In a graph, if (1) holds for a set of vertices $X \subseteq V$
we call the set $X$ a strongly connected subgraph (SCS). 
An end-component or a SCS, is \emph{trivial} if it only
contains a single vertex with no edges. All other end-components and SCSs respectively, are \emph{non-trivial}.
A \emph{maximal end-component} (MEC) is an end-component
which is maximal under set inclusion. 
MECs generalize strongly connected components (SCCs) in graphs (where no random vertices exist) and in a MEC $X$, player~1 can almost-surely reach all vertices $u
\in X$ from every vertex $v \in X$. 
The MEC decomposition of an MDP is the partition of the vertex set into MECs and the set of vertices which do
not belong to any MEC, e.g., a random vertex with no incoming edges.

\subsection{Symbolic Model of Computation}
In the set-based symbolic model, the MDP is 
stored by implicitly represented sets of vertices and edges.
Vertices and edges of the MDP are not accessed explicitly but with
set-based symbolic operations. The resources in the
symbolic model of computation are characterized by the number of set-based
symbolic operations and set-based space. 

\para{Set-Based Symbolic Operations.}
A set-based symbolic algorithm can use the same mathematical, logical and memory
access operations as a regular RAM algorithm, except for the access to the graph.
An input MDP $P= ((V,E), \ls V_1, V_R \rs, \delta)$ can be accessed only by the following types of operations:
\begin{enumerate}
	\item Two sets of vertices or edges can be combined with \emph{basic set operations}:
		$\cup,\cap,\subseteq,\setminus$, $\times$ and $=$.
	\item The \emph{one-step operation} to obtain the predecessors/successors of
		the vertices of $S$ w.r.t.\@ the edge set $E$. In particular, we define the predecessor and successor
		operation over a specific edge set $E$:
		\begin{align*}
			\Pre{E}{S} &= \{ v \in V \mid \Out{v,E} \cap S \neq \emptyset \} \text{ and }\\
			\Post{E}{S} &= \{ v \in V \mid \In{v,E} \cap S \neq \emptyset \}
		\end{align*}
	\item The $\Pick{S}$ operation which returns an arbitrary vertex $v \in S$ and the $|S|$
		operation which returns the cardinality of $|S|$.
		
\end{enumerate}

Notice that the subscript $E$ of the one-step operations is often omitted when the edge set is clear from the context.
As our algorithm deals with different edges sets we make the edge set explicit as a subscript.

\para{Set-based Symbolic Space.}
The basic unit of space for a set-based symbolic algorithm for MDPs are
sets~\cite{BrowneCJLM97,chatterjee2017symbolic}.  For example, a set can
be represented symbolically as one
BDD~\cite{bryant1986graph,bry92,BurchCMDH90,ClarkeMCH96,Somenzi99,ClarkeGP99,ClarkeGJLV03,GentiliniPP08,ChatterjeeHJS13,CHLOT18}
and each such set is considered as unit space.  Consider for example an MDP 
whose state-space consists of valuations of $N$-boolean variables. The
set of all vertices is simply represented as a true BDD\@. Similarly, the set of all
vertices where the $k$th bit is false is represented by a BDD which depending
on the value of the $k$th bit chooses true or false. Again, this set can be
represented as a constant size BDD\@. 
Thus, even large sets of vertices can sometimes be represented as constant-size BDDs.
In general, the size of the smallest BDD representing a set is notoriously hard to
determine and depends on the variable reordering~\cite{ClarkeGP99}.
To obtain a clean theoretical model for the algorithmic analysis,
each set is represented as a unit data structure and
requires unit space.  Thus, for the \emph{space requirements} of a symbolic
algorithm, we count the maximal number of sets the algorithm stores
simultaneously and denote it as the {\em symbolic space}.

\section{Algorithmic Tools}
In this section, we present various algorithmic tools that we use in our algorithm. 

\subsection{Symbolic SCCs Algorithm}\label{ss:symbolicscc}
In~\cite{GPP03} and~\cite{ChatterjeeDHL18} symbolic algorithms and lower bounds for computing the
SCC-decomposition are presented.
Let $D$ be the diameter of $G$ and let $D_C$ be the diameter of 
the SCC $C$. The set $\SCCs{G}$ is a family of sets, where each set contains one SCC of $G$.  
The upper bounds are summarized in Theorem~\ref{thm:SCCs}.
\begin{theorem}[\cite{GPP03,ChatterjeeDHL18}]\label{thm:SCCs}
	The SCCs of a graph $G$ can be computed in
	$\Theta(\min(n, |SCCs(G)|\cdot D, \sum_{C \in \SCCs{G}} (D_C + 1)))$ symbolic operations and $\O(1)$ symbolic space. 
\end{theorem}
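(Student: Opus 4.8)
The plan is to establish the bound via the skeleton-based forward--backward symbolic SCC algorithm of~\cite{GPP03}, and then to sharpen its analysis to obtain the diameter-sensitive terms. First I would recall the core recursive procedure: on the current vertex set, select a pivot $v$ (via the $\Pick{\cdot}$ operation), compute the forward set $F$ of all vertices reachable from $v$ by iterating the successor operation $\Post{E}{\cdot}$ from $\{v\}$ to a fixed point, and simultaneously record a \emph{skeleton} --- a shortest path witnessing the last layer reached --- which will supply the next pivot for the part of the graph lying outside $F$. The SCC of $v$ is then extracted by iterating the predecessor operation $\Pre{E}{\cdot}$ restricted to $F$ to a fixed point, so that this SCC equals the backward closure of $v$ inside $F$. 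The algorithm reports the SCC and recurses on $F \setminus \mathit{SCC}$ and on $V \setminus F$, seeding the latter recursion from the skeleton with a pivot that is a predecessor of the reported SCC.

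For correctness I would verify the standard facts: the extracted set $\mathit{SCC}$, being $F$ intersected with the backward reachable set of $v$, is exactly the SCC containing $v$; every vertex is eventually handled by some recursive call; and the skeleton guarantees that the pivots chosen outside $F$ respect a reverse-topological discovery order, so no SCC is split across calls or missed. These properties are inherited from~\cite{GPP03} and need only local verification here.

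The heart of the argument is the operation count. A forward fixed point launched from a pivot inside an SCC $C$ terminates after a number of successor operations bounded by the forward search depth, which is at most the diameter of the current induced subgraph; the restricted backward fixed point is bounded analogously. I would show that the total symbolic work attributable to discovering $C$ is $O(\Diam{C} + 1)$, so summing the charges over all strongly connected components yields the $\sum_{C \in \SCCs{G}}(\Diam{C} + 1)$ term. Bounding each $\Diam{C}$ by the global diameter $D$ and observing that there are $O(|\SCCs{G}|)$ discovery phases gives the $|\SCCs{G}| \cdot D$ term. A separate amortized charging argument --- in which each successor or predecessor operation is charged either to a vertex that is permanently removed from the working set or to one unit of progress along a skeleton, both quantities bounded by $n$ --- gives the $O(n)$ term. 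Since the three bounds hold independently, taking their minimum yields the claimed upper bound, and the matching lower bounds (hence the $\Theta$) follow from the extremal graph families of~\cite{ChatterjeeDHL18}. I expect the main obstacle to be making the charging scheme simultaneously consistent with all three bounds without double counting a single symbolic operation.

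Finally, for the $\O(1)$ space bound I would argue that at any instant the procedure keeps only a constant number of stored sets: the pivot, the forward set $F$, the skeleton, the current $\mathit{SCC}$, and the remaining working set. The subtlety is that a naive recursive implementation would stack $\Omega(n)$ such sets; instead I would exploit the fact that the skeleton already encodes both the pivot and the entry point needed to resume the recursion outside $F$, which lets the procedure be restructured tail-recursively so that only $\O(1)$ sets are live simultaneously.
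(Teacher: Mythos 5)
First, a point of reference: the paper does not prove Theorem~\ref{thm:SCCs} at all --- it is imported verbatim from~\cite{GPP03} and~\cite{ChatterjeeDHL18}, so your attempt must reconstruct the content of those papers, and the gap in your sketch sits exactly where their contribution lies. Your central claim --- ``the total symbolic work attributable to discovering $C$ is $O(\Diam{C}+1)$'' --- is false if read phase-by-phase and unproven if read as a charging scheme. The forward fixed point launched from a pivot $v$ in an SCC $C$ does not stop at the boundary of $C$: it runs until the entire forward-reachable set of $v$ in the \emph{current} subgraph stabilizes, so its cost is the eccentricity of $v$ there, which can be $\Theta(n)$ even when $\Diam{C}=0$ (a trivial SCC feeding a long directed path). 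Hence the $\sum_{C \in \SCCs{G}}(D_C+1)$ bound cannot be obtained by bounding each SCC's phase separately; the cost of a deep forward search must be spread over the \emph{other} SCCs it traverses, and one must then prove that no SCC is ever charged more than $O(1)$ times over the whole run. That amortization is the actual theorem: it requires showing that the shortest-path skeletons created by all searches cross essentially pairwise-disjoint families of SCCs (each contributing at most $D_C+1$ skeleton vertices), which in turn hinges on the skeleton pivots being consumed in reverse topological order so that their forward cones never re-enter SCCs of earlier skeletons --- a mechanism your sketch does not even set up correctly, since you seed only the recursion \emph{outside} $F$ from the skeleton, whereas the reverse-topological processing of $F\setminus \mathit{SCC}$ is what makes the charges disjoint. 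You explicitly flag this step (``the main obstacle \dots without double counting'') but do not supply it, and everything else in the sketch is routine. The same defect infects your $|\SCCs{G}|\cdot D$ term: distances in residual subgraphs can exceed $D$ (deleting vertices increases finite distances), so per-phase search depths are not bounded by $D$ and this term, too, needs the amortized argument rather than the per-phase bound you state.

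Second, the $\O(1)$ space bound does not follow from tail recursion as you describe. Each call spawns two subproblems, $F\setminus \mathit{SCC}$ and $V\setminus F$; tail-call elimination removes only one frame, and the sibling subproblem must be remembered while the other is processed, so a chain of such splits still forces $\Theta(n)$ simultaneously live sets in the worst case (e.g., when every $V \setminus F$ branch is large and is deferred). The repair used in the cited work --- and relied upon by the present paper when it says logarithmic space suffices ``to maintain the state of the SCC algorithm''~\cite{ChatterjeeDHL18} --- is to recurse first on the smaller of the two parts and tail-recurse on the larger, which bounds the recursion depth by $O(\log n)$ and hence the number of live sets by $O(\log n)=\O(1)$; the skeleton alone, which encodes pivots but not the pending subgraphs, does not achieve this. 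Finally, delegating the lower-bound half of the $\Theta$ to the extremal families of~\cite{ChatterjeeDHL18} is legitimate for a cited result, but it means the upper bound is the only part you are actually proving, and that is precisely the part that is missing.
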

Note that the SCC algorithm of~\cite{GPP03} can be easily adapted
to accept a starting vertex which specifies the 
SCC computed first.
Also, SCCs are output when they are detected by the algorithm
and can be processed before the remaining SCCs of the graph are computed. 
We write $\SCCFind{P}{V',s}$ when we refer to the above described algorithm for computing 
the SCCs of the subgraph with vertices $V' \subseteq V$ and using $s \in V$ as the starting vertex for the algorithm 
in the MDP $P = (V,E,\langle V_1, V_R \rangle, \delta)$. 
When we consider SCCs in an MDP we take \emph{all} edges into account 
and, i.e., also the random edges.

\subsection{Random Attractors}
Given a set of vertices $T$ in an MDP $P$, the random attractor $\Attr{R}{P}{T}$ is a set of vertices
consisting of (1) $T$, (2) random vertices with an edge to some vertex in $\Attr{R}{P}{T}$, (3) player-1 vertices
with all outgoing edges in $\Attr{R}{P}{T}$.  
Formally, given an MDP $P = (V,E, \langle V_1, V_R \rangle, \delta)$, let $T$ be a set of vertices.
The random attractor $\Attr{R}{P}{T}$ of $T$ is defined as: $\Attr{R}{P}{T} = \bigcup_{i\geq 0} A_i$ where $A_0 = T$ and $A_{i+1} = A_i \cup
(\Pre{E}{A_i} \setminus (V_1 \cap \Pre{E}{V \setminus A_i}))$ for all $i>0$. We sometimes refer to $A_i$ as the $i$-th level of the attractor.

\begin{lemma}[\cite{CHLOT18}]\label{lem:attrRunning}
	The random attractor $\Attr{R}{P}{T}$ can be computed with at most $O(|\Attr{R}{P}{T}\setminus{T}|+1)$ many symbolic
	operations.
\end{lemma}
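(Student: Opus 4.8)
The plan is to analyze the natural fixpoint iteration that defines $\Attr{R}{P}{T}$ and to charge its symbolic cost to the vertices it discovers. Concretely, I would compute the attractor by maintaining the current approximation $A_i$, starting from $A_0 = T$, and repeatedly applying the one-step update
\[
A_{i+1} = A_i \cup (\Pre{E}{A_i} \setminus (V_1 \cap \Pre{E}{V \setminus A_i}))
\]
until $A_{i+1} = A_i$, at which point $A_i = \Attr{R}{P}{T}$.

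First I would argue that a single update step costs $O(1)$ symbolic operations. Computing $A_{i+1}$ from $A_i$ uses two \textsc{Pre} operations ($\Pre{E}{A_i}$ and $\Pre{E}{V \setminus A_i}$), one complementation to form $V \setminus A_i$, one intersection with $V_1$, one set difference, and one union — a constant number of symbolic operations — plus one equality test to detect the fixpoint. Crucially, each \textsc{Pre} call is a unit-cost symbolic operation regardless of the cardinality of its argument, so there is no need to restrict attention to a frontier: recomputing over the entire set $A_i$ at every level is already $O(1)$.

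Next I would bound the number of iterations. By construction the sequence $(A_i)_{i \ge 0}$ is non-decreasing, and since $V$ is finite it stabilizes; let $\ell$ be the smallest index with $A_{\ell+1} = A_\ell$. For every $i < \ell$ the update is strictly productive, i.e.\ $A_{i+1} \setminus A_i \neq \emptyset$, since otherwise the fixpoint would already have been reached at step $i$. The sets $A_{i+1} \setminus A_i$ for $0 \le i < \ell$ are pairwise disjoint and each is contained in $\Attr{R}{P}{T} \setminus T$, so $\ell \le |\Attr{R}{P}{T} \setminus T|$. Counting the final iteration that merely verifies the fixpoint, the loop executes $\ell + 1 \le |\Attr{R}{P}{T} \setminus T| + 1$ times. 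Multiplying by the $O(1)$ per-iteration cost yields the claimed $O(|\Attr{R}{P}{T} \setminus T| + 1)$ bound.

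The main thing to get right is the charging argument rather than any heavy computation: one must verify that every non-final level adds at least one genuinely new vertex of the attractor, which is what converts the level-count into a bound in terms of $|\Attr{R}{P}{T} \setminus T|$. The ``$+1$'' in the statement is exactly the cost of the terminating iteration (and covers the degenerate case $\Attr{R}{P}{T} = T$, where no vertex is ever added but one comparison still occurs). A secondary point worth checking is that the displayed update is faithful to the definition of the attractor — in particular that a player-1 vertex enters $A_{i+1}$ only when all of its out-edges already point into $A_i$ (guaranteed by subtracting $V_1 \cap \Pre{E}{V \setminus A_i}$), while a random vertex enters as soon as it has a single edge into $A_i$ — so that the fixpoint indeed equals $\Attr{R}{P}{T}$.
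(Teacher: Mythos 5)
Your proof is correct. The paper itself gives no proof of this lemma---it is quoted from~\cite{CHLOT18}---but your argument (each fixpoint update costs $O(1)$ symbolic operations since one-step and basic set operations are unit cost, every non-final level adds at least one new vertex of $\Attr{R}{P}{T}\setminus T$, and the ``$+1$'' accounts for the terminating equality test) is exactly the standard charging argument behind the cited bound, so there is nothing to add.
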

The lemma below establishes that the random attractor of random vertices with edges out of a strongly connected set
is not included in any end-component and that it can be removed without affecting the ECs of the remaining graph. Hence, we use the lemma to identify vertices that do not belong to any EC\@.
The proof is analogous to the proof in Lemma~2.1 of~\cite{CH11}.

\begin{lemma}[\cite{CH11}]\label{lem:attr_remove}
	Let $P = (V,E,\ls V_1, V_R \rs, \delta)$ be an MDP\@.
	Let $C$ be a strongly connected subset of $V$.
	Let $U = \Set{v \in C \cap V_R \mid \Out{v} \cap (V \setminus C) \neq \emptyset}$ 
	be the random vertices in $C$ with edges out of $C$. 
	Let $Z = \Attr{R}{P}{U} \cap C$. 
	For all non-trivial ECs $X \subseteq C$ in $P$ we have $Z \cap X = \emptyset$.
\end{lemma}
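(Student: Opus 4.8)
The plan is to show that every vertex $z \in Z = \Attr{R}{P}{U} \cap C$ cannot belong to any non-trivial end-component $X \subseteq C$. Since $Z$ is defined as the intersection of the random attractor of $U$ with $C$, I would argue by induction on the attractor levels $A_i$ used to construct $\Attr{R}{P}{U}$, restricted to $C$. The base case handles $A_0 = U$, i.e.\ the random vertices in $C$ that have an outgoing edge leaving $C$. The inductive step handles the two ways a vertex gets pulled into the attractor: a random vertex with some edge into the attractor, or a player-1 vertex with \emph{all} outgoing edges into the attractor.

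For the base case, suppose $z \in U$, so $z \in C \cap V_R$ with an edge $(z,v)$ for some $v \in V \setminus C$. If $z$ were in some non-trivial EC $X \subseteq C$, then by the defining property of an end-component, every outgoing edge of the random vertex $z$ must stay inside $X$, hence inside $C$. This contradicts $(z,v) \in E$ with $v \notin C \supseteq X$. So $z \notin X$. For the inductive step, assume no vertex of $A_i \cap C$ lies in any non-trivial EC $X \subseteq C$, and take $z \in (A_{i+1} \setminus A_i) \cap C$. If $z \in V_R$, then $z$ has an edge to some $w \in A_i$; if $z$ were in a non-trivial EC $X \subseteq C$, then since $z$ is random, $w \in X \subseteq C$, so $w \in A_i \cap C$, contradicting the induction hypothesis. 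If instead $z \in V_1$, then by the attractor definition all of $z$'s outgoing edges land in $A_i$; but $z$ lies in the strongly connected set $X$, so $z$ has at least one successor in $X$ (strong connectivity plus non-triviality guarantees an outgoing edge within $X$), and that successor lies in $A_i \cap C$, again contradicting the induction hypothesis.

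Combining the two cases closes the induction, establishing that no vertex of $\Attr{R}{P}{U} \cap C = Z$ lies in any non-trivial EC $X \subseteq C$, which is exactly $Z \cap X = \emptyset$ for all such $X$. I would phrase the whole argument as: pick the minimal level $i$ at which a hypothetical vertex $z \in Z \cap X$ first enters the attractor, and derive a contradiction at that level using the EC closure property (for random $z$) or strong connectivity of $X$ (for player-1 $z$).

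The main obstacle, and the step requiring the most care, is the player-1 case in the inductive step: I must ensure that a player-1 vertex inside a \emph{non-trivial} strongly connected set $X$ genuinely has a successor within $X$. This is where non-triviality of the EC is essential — a trivial (single-vertex, edgeless) component would have no such successor and the argument would fail, which is precisely why the lemma restricts attention to non-trivial ECs. I would make explicit that strong connectivity of $X$ together with $|X| \geq 2$ forces each vertex of $X$, player-1 or random, to have an outgoing edge within $X$, so the successor in $A_i \cap C$ that triggers the contradiction is guaranteed to exist.
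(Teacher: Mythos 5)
Your proof is correct and follows essentially the same argument the paper relies on: the paper defers to Lemma~2.1 of the cited work \cite{CH11}, whose proof is exactly this induction on attractor levels (random vertices in an EC cannot have edges leaving it, and a player-1 vertex attracted at level $i+1$ has all successors in level $i$, contradicting strong connectivity of a non-trivial EC). Your handling of the player-1 case, using non-triviality to guarantee a successor inside $X$, is precisely the point where the lemma's restriction to non-trivial ECs is needed, so nothing is missing.
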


\subsection{Separators}~\label{ss:separator}
Given a strongly connected set of vertices $X$ with $|X| = n$, a separator is a non-empty set $T \subseteq X$ such that  
the size of every SCC in the graph induced by $X \setminus T$ is small. 
More formally, we call $T \subseteq X$ a $q$-separator if each SCC in the subgraph induced by $X \setminus T$ has at most $n
-q \cdot |T|$ vertices. For example, if we compute a $q = \sqrt{n}$-separator, the SCCs in the
subgraph induced by $X \setminus T$ have size $\leq n - \sqrt{n}$ as $|T|$ is non-empty. In
\cite{CHILP16}, the authors
present an algorithm which computes a $q$-separator when the diameter of $X$ is large. We briefly sketch the symbolic 
version of this algorithm. 

The procedure $\Call{Separator}{X,r,\gamma}$ computes a $q = \lfloor \gamma/(2 \log n) \rfloor$-separator $T$
when $X$ has diameter at least $\gamma$:

\begin{enumerate}
	\item Let $q \gets \lfloor \gamma/(2 \log n) \rfloor$.
	\item Try to compute a BFS tree $K$ of either $X$ or the reversed graph of $X$ of depth at least $\gamma$ 
with an arbitrary vertex $r \in X$ as root. In the symbolic algorithm, we build the BFS trees
with $\Pre{}{\cdot}$ and $\Post{}{\cdot}$ operations.
	\item If the BFS trees of both $X$ and the reversed graph $X$ with root $r$ have less than $\gamma$ levels then return $\emptyset$. Note that the
		diameter of $X$ is then $\leq 2\gamma$.
	\item Let layer $L_i$ of $K$ be the set of vertices $L_i \subseteq X$ with distance $i$ from $r$.
		Due to~\cite[Lemma 6]{CHILP16}, there is a certain layer $L_i$ of the BFS tree $K$ which is a $q$-separator. 
		Intuitively, they argue that removing the layer $L_i$ of $K$ separates $X$ into two parts: (a)
		$\bigcup_{j<i} L_j$ and (b) $\bigcup_{j>i} L_j$
		which cannot be strongly connected anymore due to the fact that $K$ is a BFS tree.
		They show that one can always efficiently find a layer $L_i$ such that both part (a) and part (b) are
		small.
	\item We efficiently find the layer $L_i$ while building $K$.
\end{enumerate}

The detailed symbolic implementation of $\Call{Separator}{X,\gamma}$ is illustrated in Figure~\ref{alg:separator}.

\begin{figure}
 \begin{algorithmic}
	 \Procedure{Separator}{$X,\gamma$}
		 \State $q \gets \lfloor \gamma/(2 \log n) \rfloor;$ $v \gets \Pick{X};$ 
		 \State $i\gets 0;$ $K \gets \Set{v};$ $c \gets 0;$ $L \gets \emptyset;$ $R \gets \emptyset;$
		\While{$(\Post{E}{K} \cap X) \not\subseteq K$}\label{alg:separator:while1}
			\If{$q \leq i \leq \gamma/2$} 
			\State $Z \gets \Post{E}{K}\setminus K \cap X$;
				\lIf{$L = \emptyset$ and $|Z| \leq 2^{i/q-1} $} $L \gets Z$
			\EndIf	
			\If{$\gamma/2 \leq i \leq \gamma - q$ } 
				$Z \gets \Post{E}{K}\setminus K \cap X$;
				\lIf{$|Z| \leq 2^{(\gamma-i)/q-1}$} $R \gets Z$;
			\EndIf	
			\lIf{$i \leq \gamma/2$} $c\gets c+ |\Post{E}{K}\setminus K \cap X|$
			\State $K \gets K \cup (\Post{E}{K} \cap X)$;
			\State $i \gets i +1$;
		\EndWhile	
		\If{$i < \gamma$}
			\State $i\gets 0,\ K \gets \Set{v},\ c\gets 0, L \gets \emptyset, R \gets \emptyset$;
			\While{$i \leq \gamma$ and $(\Pre{E}{K} \cap X) \not\subseteq K$}\label{alg:separator:while2}
				\If{$q \leq i \leq \gamma/2$}
					$Z \gets \Pre{E}{K}\setminus K \cap X$;
					\lIf{$L = \emptyset$ and $|Z| \leq 2^{i/q-1} $} $L \gets Z$;
				\EndIf	
				\If{$\gamma/2 \leq i \leq \gamma - q$} 
					\State $Z \gets \Pre{E}{K}\setminus K \cap X$;
					\lIf{$|Z| \leq 2^{(\gamma-i)/q-1} $} $R \gets Z$;
				\EndIf	
				
				\lIf{$i \leq \gamma/2$}	$c\gets c+ |\Pre{E}{K}\setminus K \cap X|$
				
				\State $K \gets K \cup (\Pre{E}{K} \cap X)$;
				\State $i \gets i +1$;
			\EndWhile	
		\EndIf
		\If{$i < \gamma$}
			\Return{$\emptyset$;} \Comment{$\Diam{X} < 2\gamma$}
		\EndIf	
		\If{$c< |X|/2$}
			\State \Return{$L$};
		\Else
			\State \Return{$R$};
		\EndIf
	\EndProcedure
 \end{algorithmic}
 \caption{If the strongly connected set $X$ has diameter $\gamma$ compute separator with
	 quality $\lfloor \gamma/(2\log k) \rfloor$.}\label{alg:separator}
\end{figure}

The following lemmas summarize useful properties of $\Call{Separator}{X,r,\gamma}$.
\begin{observation}[\cite{CHILP16}]\label{obs:SepSize}
	A $q$-separator $S$ of a graph $G$ with $|V|=n$ vertices contains at most $\frac{n}{q}$ vertices, 
	i.e., $|S| <  \frac{n}{q}$.
\end{observation}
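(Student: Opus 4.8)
The plan is to read the bound directly off the defining inequality of a $q$-separator, using the fact that the strongly connected components of the subgraph induced by $X \setminus S$ partition the remaining vertices. Recall that, in the setting of the observation, we have $X = V$ with $|V| = n$, and that $S$ being a $q$-separator means precisely that every SCC of the subgraph induced by $X \setminus S$ has at most $n - q\cdot|S|$ vertices. The whole argument is a one-line counting estimate once the right quantity is isolated, so I would keep the proof short.

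First I would establish that $X \setminus S$ contains at least one vertex, and hence that at least one SCC exists in the induced subgraph. This follows from the fact that a separator is a non-empty \emph{proper} subset of the strongly connected set it splits; concretely, the set returned by the procedure in Figure~\ref{alg:separator} is a single BFS layer, which is a strict subset of $X$. Any vertex in $X \setminus S$ lies in some SCC of the induced subgraph, which therefore has size at least $1$. Combining this with the defining property, the upper bound $n - q\cdot|S|$ on SCC size must itself be at least $1$, i.e.\ $n - q\cdot|S| \geq 1$. Rearranging gives $q\cdot|S| \leq n - 1 < n$, and dividing by $q$ yields the claimed strict inequality $|S| < n/q$.

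The step I would be most careful about — the only real obstacle in an otherwise immediate argument — is guaranteeing the non-emptiness of $X \setminus S$, equivalently ruling out the degenerate case $S = X$. In that case the induced subgraph has no SCCs, the defining inequality becomes vacuous, and the counting estimate collapses (it would permit $|S| = n \geq n/q$). I would close this gap by appealing to the role a separator plays: its purpose is to break a strongly connected set into strictly smaller pieces, so $S = X$ is excluded by convention, and in the construction of Figure~\ref{alg:separator} the returned set is always a proper BFS layer. With non-emptiness of $X \setminus S$ thus secured, the existence of an SCC of size at least one is guaranteed and the bound $|S| < n/q$ follows.
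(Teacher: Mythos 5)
Your proposal is correct and uses essentially the same counting argument as the paper: the paper assumes for contradiction that $|S| \geq n/q$ and notes that each SCC of the induced subgraph has at most $n - q|S| \leq 0$ vertices while containing at least one vertex, which is exactly your inequality $n - q\cdot|S| \geq 1$ read contrapositively. The only difference is that you explicitly justify the existence of at least one SCC (ruling out the degenerate case $S = X$), a step the paper's proof silently takes for granted.
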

 \begin{proof}
 	Assume for contradiction that $|S| \geq n/q$. Trivially, given a $q$-separator $S$, 
 	each SCC $C_i \in \SCCs{G}$ contains at most $n - q |S|$ vertices and at least one vertex.
 	But then $C_i \leq n - q |S| \leq n - q \frac{n}{q} \leq 0$, a contradiction.
 \end{proof}

\begin{lemma}[\cite{CHILP16}]\label{lem:sepqual}
	Let $X$ be a strongly connected set of vertices with $|X| = k$, let $r \in
	X$, and	let $\gamma$ be an integer such that $q=\lfloor \gamma/(2\log k) \rfloor \geq 1$. Then
	$\Call{Separator}{X,r,\gamma}$ 
	computes a $\lfloor \gamma / (2\log k) \rfloor$-separator
	if there exists a vertex $v \in X$ where the distance
	between $r$ and $v$ is at least $\gamma$. If no
	such vertex $v$ exists, then $\Call{Separator}{X,r,\gamma}$ returns the empty set.
\end{lemma}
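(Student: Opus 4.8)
The plan is to show two things about \Call{Separator}{X,r,\gamma}: (i) if some vertex is at distance at least $\gamma$ from $r$, then the algorithm returns a genuine $q$-separator with $q = \lfloor \gamma/(2\log k)\rfloor$; and (ii) if no such vertex exists, it returns $\emptyset$. The structure follows the BFS-layering argument of~\cite{CHILP16}, and the key object is the layer $L_i = \{u \in X \mid \mathrm{dist}(r,u) = i\}$. First I would fix the BFS direction. The algorithm tries the forward BFS tree from $r$ (using \Post{E}{\cdot}); only if that tree has fewer than $\gamma$ levels does it retry with the reversed graph (using \Pre{E}{\cdot}). I would argue that when a vertex at distance $\geq \gamma$ from $r$ exists, at least one of the two BFS explorations reaches depth $\gamma$, so the final check $i < \gamma$ is not triggered and the algorithm proceeds to return either $L$ or $R$ rather than $\emptyset$.

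Next I would establish that the candidate layer returned is actually a $q$-separator. The core separation fact is that removing a full BFS layer $L_i$ disconnects $\bigcup_{j<i}L_j$ from $\bigcup_{j>i}L_j$ in the induced subgraph, because any edge of $X$ crosses at most one layer boundary in a BFS tree; hence every SCC of $X \setminus L_i$ lies entirely within one of the two parts. So it suffices to bound the sizes of both parts by $k - q\cdot|L_i|$. Here I would invoke the layer-selection argument of~\cite[Lemma 6]{CHILP16}: among the layers indexed $q \leq i \leq \gamma/2$ there must be one whose size is at most $2^{i/q - 1}$ (this is what the variable $L$ captures, recording the first such small layer), and symmetrically among $\gamma/2 \leq i \leq \gamma - q$ one of size at most $2^{(\gamma-i)/q-1}$ (captured by $R$). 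The counter $c$ accumulates the number of vertices strictly before level $\gamma/2$, and the final test $c < |X|/2$ decides whether the ``left'' small layer $L$ or the ``right'' small layer $R$ is the one whose removal leaves both sides small enough; I would verify that in each branch the chosen layer's two sides each have size at most $k - q\cdot|T|$, using the exponential size bounds to control $|T|$ and the placement of the layer (together with $c$ or $k-c$) to control the opposite side.

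For the converse direction (no vertex at distance $\geq \gamma$), I would observe that both BFS explorations terminate with $i < \gamma$: the forward loop exits via its while-condition $(\Post{E}{K}\cap X) \subseteq K$ before reaching depth $\gamma$, and likewise the reverse loop, so both checks $i < \gamma$ succeed and the algorithm reaches \Return{\emptyset}. This also matches the sketch's claim that in this case $\Diam{X} < 2\gamma$. I would take care to confirm that the interleaved forward/reverse structure of the code cannot accidentally return a non-empty $L$ or $R$ when the depth guarantee fails, by checking that the only paths to returning $L$ or $R$ pass the $i < \gamma$ guard.

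The main obstacle I expect is the quantitative size bookkeeping in step two: precisely arguing that the exponential thresholds $2^{i/q-1}$ and $2^{(\gamma-i)/q-1}$ force the existence of a small layer in each half, and that the chosen layer simultaneously bounds \emph{both} of its sides by $k - q\cdot|T|$. The pigeonhole/telescoping counting that yields a small layer (a layer must be small because the layer sizes cannot all grow geometrically without exceeding $k$), and the coupling of the $c < |X|/2$ test to which half is ``heavy,'' are exactly the delicate parts; everything else is the routine translation of BFS reasoning into the symbolic \Pre{}{\cdot}/\Post{}{\cdot} model, which Lemma~\ref{lem:attrRunning}-style neighborhood operations support directly. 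I would lean on \cite[Lemma 6]{CHILP16} for the counting core and devote the bulk of the write-up to confirming the symbolic implementation in Figure~\ref{alg:separator} faithfully tracks $L$, $R$, and $c$ as described.
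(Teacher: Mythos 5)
Your proposal matches the paper's treatment of this statement: the paper does not prove Lemma~\ref{lem:sepqual} at all but imports it from~\cite{CHILP16}, accompanying it only with the algorithm sketch of Section~III (BFS layering, the thresholds $2^{i/q-1}$ and $2^{(\gamma-i)/q-1}$ recorded in $L$ and $R$, and the final $c < |X|/2$ test), which is precisely the structure you lay out. Since you defer the same quantitative counting core to~\cite[Lemma 6]{CHILP16} that the paper defers to, and your surrounding reasoning (forward/reverse BFS covering both notions of distance from $r$, the fact that forward edges cannot skip a layer so every SCC of $X \setminus L_i$ lies in one side, and the $i<\gamma$ guard forcing the empty-set return when no far vertex exists) is sound, this is essentially the same approach.
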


The following lemma bounds the symbolic resources of the algorithm. 

\begin{lemma}\label{lem:runspacesep}
	$\Call{Separator}{X,r,\gamma}$ runs in $O(|X|)$ symbolic operations and uses $O(1)$ symbolic space.
\end{lemma}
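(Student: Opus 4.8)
The plan is to bound the two resources (symbolic operations and symbolic space) separately by a direct inspection of the pseudocode in Figure~\ref{alg:separator}, since the procedure is essentially two successive BFS explorations followed by a constant amount of bookkeeping.

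\emph{Symbolic operations.}
First I would observe that the algorithm consists of two \textbf{while}-loops (lines \ref{alg:separator:while1} and \ref{alg:separator:while2}), and that each loop is a breadth-first exploration of $X$ (respectively, of $X$ in the reversed graph) in which every iteration adds at least one new vertex to the frontier set $K$, because the loop guard requires $(\Post{E}{K} \cap X) \not\subseteq K$ (resp.\ with $\Pre{}{}$). Hence each loop runs for at most $|X|$ iterations. Within a single iteration the algorithm performs only a constant number of symbolic operations: one $\Post{E}{K}$ (or $\Pre{E}{K}$) call, a constant number of basic set operations ($\setminus$, $\cap$, $\subseteq$), and a constant number of cardinality computations $|\cdot|$ for the tests and for updating the counter $c$. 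Summing $O(1)$ operations over at most $2|X|$ iterations, plus the $O(1)$ operations for the final comparison $c < |X|/2$ and the return, gives the claimed $O(|X|)$ symbolic operations.

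\emph{Symbolic space.}
Next I would count the sets stored simultaneously. The algorithm maintains only the variables $X$, $v$ (a single vertex, representable as a singleton set), $K$, and the candidate separators $L$ and $R$; the quantities $i$, $q$, $c$ are integers and do not count against symbolic space. Each of $K$, $L$, $R$, and the temporary $Z \gets \Post{E}{K}\setminus K \cap X$ is a single set, and at most a constant number of them are live at once. Therefore the procedure uses $O(1)$ symbolic space.

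\emph{Main obstacle.}
The only subtle point is the claim that each \textbf{while}-loop makes at most $|X|$ iterations: this requires that the frontier genuinely grows by at least one vertex per iteration, which follows from the loop guard together with the update $K \gets K \cup (\Post{E}{K} \cap X)$ guaranteeing $K$ strictly increases whenever the guard holds, and from the fact that $K$ stays within the finite set $X$. I would make this monotonicity-and-termination argument explicit, since it is what pins down the $O(|X|)$ iteration count; the per-iteration constant-operation bound and the constant-space bound are then immediate from reading off the line-by-line structure of the pseudocode.
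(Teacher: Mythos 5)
Your proof is correct and follows essentially the same argument as the paper: each while-loop performs $O(1)$ symbolic operations per iteration and strictly grows $K \subseteq X$ by at least one vertex per iteration (forced by the loop guard), giving $O(|X|)$ operations total, while only a constant number of sets ($K$, $L$, $R$, $Z$, and the input sets) are stored, giving $O(1)$ symbolic space. Your explicit monotonicity-and-termination remark makes precise what the paper states more briefly, but it is the same proof.
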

 \begin{proof}
 	The bound on the symbolic operations of the while loop at Line~\ref{alg:separator:while1} is clearly
 	in $O(|X|)$ because we perform $\Post{E}{K} \cap X$ operations until the set $K$ fully contains
 	$X$ or $\Post{E}{K} \cap X$ is fully contained in $K$. Note that we only perform a constant amount
 	of symbolic work in the body of the while-loop.
 	As each $\Post{E}{K} \cap X$ operation adds at least one vertex to $X$ until termination, we perform
 	$O(|X|)$ many operations in total. 
 	A similar argument holds for the while-loop at Line~\ref{alg:separator:while2}.
 	Note that we use a constant amount of sets in Algorithm~\ref{alg:separator} which implies 
 	$O(1)$ symbolic space.
 \end{proof}

\section{Symbolic MEC decomposition}
In this section, we first define how we collapse end-components.
Then we present the algorithm for the symbolic MEC decomposition.

\subsection{Collapsing End-components}
A key concept in our algorithm is to collapse a detected EC $X \subseteq V$ of an MDP $P =
(V,E,\langle V_1,V_R \rangle, \delta)$ to a single vertex $v \in X$ in order to speed up the
computation of end-components that contain $X$.
Notice that we do not have access to the graph directly, but only have access to the graph through symbolic operations.
\begin{figure}
	\begin{algorithmic}[1]
		\Procedure{CollapseEC}{$X,P=(V,E,\langle V_1,V_R\rangle, \delta)$}
		\If{$X \cap V_1 \neq \emptyset$}
		\State $v \gets \Pick{X \cap V_1}$;
		\Else
		\State $v \gets \Pick{X}$;
		\State $V_1 \gets V_1 \cup \{v\}$; $V_R \gets V_R \setminus \{v\}$;
		\EndIf	
		\State //~add incoming and outgoing edges of $X$ to $v$
		\State $E \gets E \cup (((\Pre{E}{X}\setminus X) \times \Set{v}) \cup (\Set{v} \times (\Post{E}{X}\setminus X)))$; 
		\State //~remove edges of $X\setminus \{v\}$ and remove $X\setminus \{v\}$ from vertex sets
		\State $E \gets E \setminus ((V  \times (X\setminus v)) \cup ((X\setminus v) \times V))$; 
		\State $V \gets V \setminus (X \setminus \{v\})$; $V_1 \gets V_1 \cap V$; $V_R \gets V_R \cap V$;
		\EndProcedure
	\end{algorithmic}
	\caption{Collapsing an end-component}\label{alg:CollapseEC}
\end{figure}

We define the \emph{collapsing} (see $\Call{CollapseEC}{X,P}$, Figure~\ref{alg:CollapseEC}) of an EC $X$ as picking a
vertex $v \in X$ (player-1 if possible) which \emph{represents} $X$, directing all the incoming edges $X$ to $v$, directing the outgoing edges of $X$
from $v$ and removing all edges to and from vertices in $X\setminus \{v\}$. The procedure removes the
vertices in $X\setminus\{v\}$ from the MDP $P$. For vertices not in $X$ we have that they are in a non-trivial MEC in the modified MDP 
iff they are in a non-trivial MEC in the original MDP (before collapsing).
We denote with $\AEC{P}$ the set of all ECs in $P$ (including nontrivial ECs).
The following lemma summarizes the property as observed in~\cite{CH11}.

\begin{lemma}\label{lem:collapse}
    For MDP $P = (V,E,\ls V_1, V_R \rs, \delta)$ with $X \in \AEC{P}$
	and the MDP $P'$ that results from collapsing $X$ to $v\in X$ with $\Call{CollapseEC}{X,P}$ we have: 
    (a) for $D \subseteq V \setminus X$ we have $D \in \AEC{P}$ iff $D \in \AEC{P'}$; 
    (b) for $D \in \AEC{P}$ with $D \cap X \not= \emptyset$ we have $(D \setminus X) \cup \{v\} \in
	\AEC{P'}$; 
	and (c) for $D \in \AEC{P'}$ with $v \in D$ we have $D \cup X \in \AEC{P}$.
\end{lemma}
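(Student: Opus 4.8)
The plan is to prove the three claims by directly invoking the definition of an end-component (EC): a set is an EC iff it is strongly connected in the induced subgraph and all random vertices in it send \emph{all} their edges inside the set. The collapse operation $\Call{CollapseEC}{X,P}$ contracts $X$ to the single representative vertex $v$, redirects every edge entering $X$ to $v$ and every edge leaving $X$ out of $v$, deletes the internal edges and the vertices $X\setminus\{v\}$, and (if $v$ was random) reclassifies $v$ as a player-1 vertex. The essential bookkeeping observation I would establish first, as a shared lemma for all three parts, is an \emph{edge-correspondence} between $P$ and $P'$: for any two vertices $a,b\in V\setminus X$, $(a,b)\in E$ iff $(a,b)\in E'$ (edges avoiding $X$ are untouched); moreover $(a,v)\in E'$ iff $a$ had an edge into $X$ in $P$, and $(v,b)\in E'$ iff some vertex of $X$ had an edge to $b$ in $P$. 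I would also record that because $X$ is itself an EC, every random vertex of $X$ keeps all its edges inside $X$, so the only random vertices of $X$ that have edges leaving $X$ are none — this is why redirecting the outgoing edges out of the \emph{player-1} representative $v$ preserves the random-closure property. The choice to prefer a player-1 vertex as $v$, or to reclassify $v$ as player-1, is exactly what makes part~(c) work.

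For part~(a) I would argue both directions using the edge-correspondence. If $D\subseteq V\setminus X$, then the induced subgraph of $D$ and the in/out edges of its random vertices involve only edges among vertices outside $X$, which are identical in $P$ and $P'$; hence $D$ is strongly connected and random-closed in $P$ iff it is in $P'$, giving $D\in\AEC{P}\iff D\in\AEC{P'}$. The only subtlety is checking random-closure: a random vertex $u\in D$ cannot, by definition of EC, have an edge into $X$ (such an edge would be an edge out of $D$), so none of its edges are the ones that get redirected, and its edge set is literally unchanged by the collapse.

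For part~(b), given $D\in\AEC{P}$ with $D\cap X\neq\emptyset$, I would show $D'=(D\setminus X)\cup\{v\}\in\AEC{P'}$. Since $D$ and $X$ are ECs sharing a vertex, $D\cup X$ is an EC (by the union property stated in the preliminaries), so without loss of generality $X\subseteq D$; then $D'$ is the image of $D$ under the contraction. Strong connectivity of $D'$ follows because any path in $D$ descends to a path in $D'$ by contracting the maximal $X$-subpaths to $v$ (using the edge-correspondence to see the redirected edges exist), and conversely; random-closure of $D'$ follows because each random vertex of $D'$ either lies in $D\setminus X$ — where its $P'$-edges go to targets that were already in $D$, possibly via the redirection to $v$ — or is $v$ itself, but $v$ is player-1 in $P'$ by construction, so it imposes no closure constraint. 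For part~(c), given $D\in\AEC{P'}$ with $v\in D$, I would symmetrically show $D\cup X\in\AEC{P}$: strong connectivity follows by \emph{expanding} $v$ back into the strongly connected set $X$ and reconnecting through the original edges, and random-closure holds because the random vertices of $(D\cup X)\setminus X = D\setminus\{v\}$ have their edges preserved, while the random vertices inside $X$ are closed within $X$ since $X$ is an EC.

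The main obstacle I anticipate is handling the random-closure condition cleanly around the boundary of $X$, particularly in parts~(b) and~(c) where an edge that is ``redirected to/from $v$'' must be matched back to a concrete original edge into or out of $X$. The reclassification of $v$ to player-1 is what removes the closure obligation on the single merged vertex, and I expect the cleanest write-up to isolate the edge-correspondence as an explicit preliminary claim so that each of the three parts reduces to a short verification of strong connectivity plus random-closure, rather than re-deriving the edge bookkeeping three times.
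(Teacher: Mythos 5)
Your proposal is correct, but note that there is nothing in the paper to compare it against: the paper states Lemma~\ref{lem:collapse} without proof, attributing the observation to~\cite{CH11}. Your argument supplies exactly the verification that citation stands in for, and it is sound in all three parts. The isolated edge-correspondence claim is the right organizing device: edges between vertices of $V\setminus X$ are untouched, $(a,v)\in E'$ iff $a$ has a $P$-edge into $X$, and $(v,b)\in E'$ iff some vertex of $X$ has a $P$-edge to $b$; this is precisely what $\Call{CollapseEC}{X,P}$ guarantees. In part~(a) the one point needing care is the backward direction of random-closure (a random $u\in D$ with an edge to $v$ in $P'$ would violate closure in $P'$, hence $u$ has no $P$-edge into $X$), and your correspondence handles it. In part~(b) the reduction to $X\subseteq D$ via the union property is legitimate, since $(D\cup X)\setminus X\cup\{v\}=(D\setminus X)\cup\{v\}$, and it also covers the degenerate case $D\subseteq X$, where the result is the trivial EC $\{v\}$ (valid because $v$ is player-1 in $P'$). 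In part~(c) the expansion step is where the strong connectivity of $X$ in $P$ is genuinely used (to reconnect the lifted endpoints $x_{\mathrm{in}},x_{\mathrm{out}}\in X$ of redirected edges), and the reclassification of $v$ to player-1 is correctly identified as what discharges the closure obligation at $v$ in $P'$, while closure of $v$ in $P$ (when $v$ was random) follows from $X$ being an EC. These are the same ingredients the original argument in~\cite{CH11} rests on, so your write-up would serve as a self-contained proof of the lemma.
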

It is a straightforward consequence of Lemma~\ref{lem:collapse}~(a) that ECs in $P'$ which do not include vertex
$v$ are an EC in $P$.

\begin{figure}
	\begin{algorithmic}[1]
		\Procedure{SymbolicMec}{$P' = (V',E',\ls V'_1, V'_R \rs, \delta'),\gamma$}
		\State $M \gets \emptyset, \textsc{MECs} \gets \emptyset, P \gets P'$;

		\While{$C \gets \SCCFind{P'}{V', \emptyset}$}\label{alg:metasymmec:while1} \Comment{stage 1}
		\State $M \gets M \cup \Call{SymMec}{C, \gamma,P}$\Comment{uses P instead of P'}\label{alg:metasymmec:rec1}
		\EndWhile	

		\While{$C \gets \SCCFind{P'}{M, \emptyset}$}\Comment{stage 2}\label{alg:metasymmec:while2}
			\State $\textsc{MECs} \gets \textsc{MECs} \cup \{C\}$\label{alg:metasymmec:addSCC}
		\EndWhile	
		\State \Return{$\textsc{MECs}$}\label{alg:metasymmec:returnMECS}
		\EndProcedure
	\end{algorithmic}
	\caption{Computing the MEC decomposition of an MDP $P'$.}\label{alg:metasymmec}
\end{figure}

\subsection{Algorithm Description}
The procedure \Call{SymbolicMec}{$P',\gamma$} has two stages.
In the first stage, we compute the set $M\subseteq V$ of all vertices that are in a non-trival MEC
and then, in the second stage, we use this set $M$
to compute the MECs with an SCC algorithm (see Figure~\ref{alg:metasymmec}).

In the first stage, we iteratively compute the SCCs 
$C_1, \dots, C_\ell$ of $P'$ and immediately call $\Call{SymMec}{C_i,\cdot,P}$ (cf. Figure~\ref{alg:symmec}) 
to compute the vertices $M_i$ of $C_i$ that are in a non-trival MEC\@. 
The set $M$ is the union over these sets, i.e., $M = \bigcup_{1 \leq i \leq \ell} M_i$.
$\Call{SymMec}{\cdot,P}$ applies collapsing operations and thus modifies the edge set of the MDP\@.
We hand a copy $P$ of the original MDP $P'$ to $\Call{SymMec}{\cdot,P}$.
Finally, to obtain all non-trivial MECs in $P'$ we restrict the graph to the vertices in $M$ and compute the SCCs which correspond to the MECs. Note that trivial MECs are player-1 vertices that are not contained in any non-trivial MEC and thus can be simply computed
by iterating over the vertices of $V_1 \setminus M$.

In the following, we focus on the procedure $\Call{SymMec}{\cdot,P}$ which is the core of our algorithm.
To this end, we introduce the operation $\ROut{S} = \Pre{E}{V \setminus S} \cap (S \cap V_{R})$, 
that computes the set of random vertices in $S$ with edges to $V \setminus S$. 

\para{The procedure $\Call{SymMec}{S,\gamma,P}$} works in a recursive fashion: The input is a set $S$ of strongly connected
vertices and a parameter $\gamma$ for separator computations that is fixed over all recursive calls (cf.\ Figure~\ref{alg:symmec}).
The main idea is to compute a separator to divide the original graph into smaller SCCs, recursively compute the vertices
which are in MECs of the smaller SCCs, and 
then compute the MECs of the original graph by incrementally adding the vertices of the separator
back into the recursively computed MEC decomposition.
In each recursive call, we first check whether the given set $S$ is larger than one 
(if not it cannot be a non-trivial MEC) and then 
if $S$ is a non-trivial EC by checking if $\ROut{S} = \emptyset$.
If $S$ is a non-trivial EC we collapse $S$ and return $M$. 
If $\ROut{S} \neq \emptyset$, $S$ is not an EC\@ but it may contain nontrivial ECs of $C$\@. 
We then try to compute a balanced separator $T$ of $S$ which is nonempty
if the diameter of $S$ is large enough ($\geq 2\gamma$). 
We further distinguish between the two cases: 
In the first case, we succeed to compute the balanced separator and
we recurse on the strongly connected components in $\SCCs{S\setminus\Attr{R}{P[S]}{T}}$.\vspace{0.05cm}
After computing and collapsing the ECs in $\SCCs{S\setminus\Attr{R}{P[S]}{T}}$
we start from $S \setminus T$ and incrementally add vertices of $T$ and compute the new ECs until all vertices of $T$ have been added.
In each incremental step, we add one vertex $v$ and find the SCC of $v$ in current set. 
If the random attractor of $\ROut{S'}$ (note that this computation now considers all vertices in $P$) 
does not contain the whole SCC $S'$ we are able to prove that we can identify a new EC of $C$. 
Otherwise, the vertex $v$ does not create a new EC in $S \setminus T$.
In the second case where we fail to compute the balanced separator we know
that the diameter of $S$ is small ($<2\gamma$). We remove the
random attractor $X$ of $\ROut{S}$ (this set cannot contain ECs due to Lemma~\ref{lem:attr_remove}), 
recompute the strongly connected components in the set $S \setminus X$ and recurse on 
each of them one after the other.
\begin{figure*}
	\begin{center}
		\includegraphics[width=\linewidth]{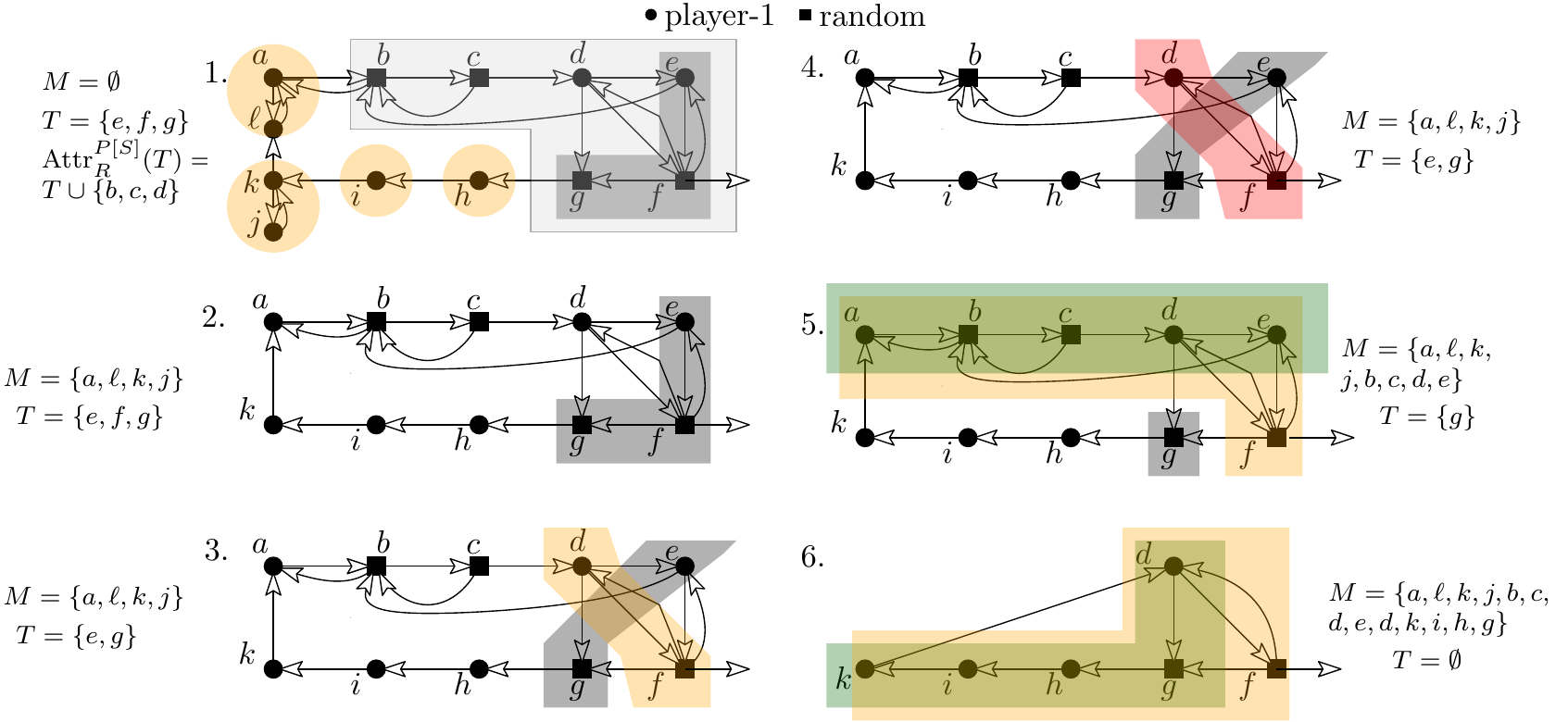}
	\end{center}
	\caption{Example Execution of $\protect\Call{SymMec}{\cdot}$ on a strongly connected set of
		vertices with a random edge
		leaving it.}\label{fig:symmec}
\end{figure*}

Figure~\ref{fig:symmec} illustrates a call to 
$\Call{SymMec}{}$ with a strongly connected set of vertices $S=\{a,b,c,d,e,f,g,h,i,j,k,\ell\}$.
In the first step, we compute that $\ROut{S}=\{f\}$ is nonempty, i.e., $S$ cannot be a MEC\@.
We then successfully compute a separator $T$ of $S$ and recurse on the SCCs of $S \setminus
\Attr{R}{P[S]}{T}=\{\{a,\ell\},\{k,j\},\{i\},\{h\}\}\}$ (orange circles).
In the second step, we collapse two nontrivial MECs identified by the recursion and add them to $M$. 
Next, we execute the incremental EC detection for all vertices in $T$:
In step three, we pick $f \in T$, remove it
from $T$ and compute its SCC $S'= \{d,f\}$ in $S\setminus T$. In step four, we compute the random attractor of
$\ROut{S'}=\{f\}$ (in the MDP $P[S']$) which is equal to $S'$ and,
thus, $M$ remains unchanged. 
In step five, we remove $e$ from $T$ and compute its SCC $S' = \{a,b,c,d,e,f\}$ in $S\setminus T$. In this iteration, the random
attractor $Z = \{f\}$ of $\ROut{S'}=\{f\}$ in $P[S']$ is not equal to $S'$ and we add the SCC of
$e$ in $S' \setminus Z$, i.e.,$\{a,b,c,d,e\}$ to the set of nontrivial MECs $M$ and collapse it. 
In step six, we analyze the SCC of the last vertex in the separator $T$, i.e., $g$
and identify an EC consisting of the vertices $\{d,g,h,i,k\}$ which we add to $M$.
Since $T$ is empty $\Call{SymMec}{S,\gamma,P}$ returns $M$. 

\begin{remarkb}
	Note that our symbolic algorithm does not require randomization which is in contrast to the best
	known MEC decomposition algorithms for the standard RAM model~\cite{BPW19}.
	The latter algorithms rely on decremental SCCs algorithms that are randomized as they maintain
	ES-trees~\cite{EvenS81} from randomly chosen centers. Instead, our symbolic algorithm relies on a
	deterministic incremental approach. 
\end{remarkb}

\begin{figure}[ht]
	\begin{algorithmic}[1]
		\Procedure{SymMec}{$S,\gamma,P$}
			\State $M \gets \emptyset$; \label{alg:symmec:initM}
			\lIf{$|S| \leq 1$} \Return{$\emptyset$}; \label{alg:symmec:ret0}
			\If{$\ROut{S} = \emptyset$}\Comment{check if $S$ is an EC}\label{alg:symmec:if1}
				\State $\Call{CollapseEC}{S,P}$; \label{alg:symmec:collapse1}
				\State \Return{$S$}; \label{alg:symmec:ret1}
			\EndIf	
			\State $(T = \Set{v_1, \dots, v_t}) \gets \Call{Separator}{S,\Pick{S},\gamma}$; \label{alg:symmec:sep}
			\If{$T \neq \emptyset$} \Comment{$T \neq \emptyset$ if $\Diam{S} \ge 2\gamma$}
			\State $A \gets \Attr{R}{P[S]}{T}$;\label{alg:symmec:attrSep}
				\While{$S_j \gets \SCCFind{P}{S \setminus A, \emptyset}$}\label{alg:symmec:while1}
				\State $M \gets M \cup \Call{SymMec}{S_j, \gamma,P}$; \label{alg:symmec:rec1}
				\EndWhile	
				\While{$T \neq \emptyset$}\label{alg:symmec:while2}\Comment{incremental EC detection for $v \in T$}
					\State $v \gets \Pick{T}$;\label{alg:symmec:pick}
					\State $T \gets T \setminus \{v\}$;\label{alg:symmec:removevfromT}
					\State $S' \gets \SCCFind{P}{S \setminus T,\Set{v}}$; \label{alg:symmec:computescc2}
					\lIf{$|S'| = 1$} \Continue; \Comment{EC is trivial}\label{alg:symmec:vtrivial}
					\State $Z \gets \Attr{R}{P[S']}{\ROut{S'}}$;\label{alg:symmec:attrIncr}
					\State $S' \gets S' \setminus Z$;\label{alg:symmec:snew}
					\If{$S' \neq \emptyset$}\Comment{$S'$ contains a non-trivial EC }
						\State $U \gets \SCCFind{P}{S',\Set{v}}$; \label{alg:symmec:bottomSCC}
						\State $\Call{CollapseEC}{U,P}$; \label{alg:symmec:collapse2}
						\State $M \gets M \cup U$; \label{alg:symmec:addMEC}
					\EndIf	
				\EndWhile	
				\State \Return{$M$};\label{alg:symmec:ret2}
				\Else \Comment{$T = \emptyset$ and thus  $\Diam{S} < 2\gamma$}
				\State $X \gets \Attr{R}{P[S]}{\ROut{S}}$;\label{alg:symmec:attrB}
				\While{$S_j \gets \SCCFind{P}{S\setminus X, \emptyset}$}\label{alg:symmec:while3}
				\State $M \gets M \cup \Call{SymMec}{S_j, \gamma,P};$\label{alg:symmec:rec2}
				\EndWhile	
				\State \Return{$M$};\label{alg:symmec:ret3}
			\EndIf	
		\EndProcedure
	\end{algorithmic}
	\caption{Recursively compute all vertices in nontrivial end-components in a strongly connected set of vertices.}\label{alg:symmec}
\end{figure}

\subsection{Correctness}

We prove the correctness of $\Call{SymMec}{S,\gamma,P}$ first which will then imply the correctness
of $\Call{SymbolicMec}{P',\gamma}$.
To this end, consider an MDP $P = (V,E,\ls V_1, V_R \rs, \delta)$ and an SCC $C$ of $P$.
The non-trivial end-components of a subset $S$ of $V$ are given by $\EC{S}$ and 
the set of vertices that appear in non-trivial (maximal) ECs of $C$ are denoted by 
$M_C=\bigcup_{Q \in \EC{C}} Q$.
We observe that in all calls to $\Call{SymMec}{S,\gamma,P}$ (see Figure~\ref{alg:symmec})
$S$ is strongly connected.
\begin{lemma}\label{lem:corr:0}
	For a strongly connected set $S \subseteq C$ we have 
	that in the computation of $\Call{SymMec}{S,\gamma,P}$ 
	for all the calls $\Call{SymMec}{S_j,\gamma,P}$ the set 
	$S_j$ is strongly connected and $S_j \subseteq S$.
\end{lemma}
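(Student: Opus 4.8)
The plan is to prove Lemma~\ref{lem:corr:0} by inspecting every point in $\Call{SymMec}{S,\gamma,P}$ where a recursive call $\Call{SymMec}{S_j,\gamma,P}$ is issued, and verifying the two claims---namely $S_j \subseteq S$ and $S_j$ strongly connected---for each such call site. Examining Figure~\ref{alg:symmec}, recursive calls occur in exactly two places: the while-loop at Line~\ref{alg:symmec:while1} (the separator-succeeds branch, where $S_j \gets \SCCFind{P}{S \setminus A, \emptyset}$ with $A = \Attr{R}{P[S]}{T}$), and the while-loop at Line~\ref{alg:symmec:while3} (the separator-fails branch, where $S_j \gets \SCCFind{P}{S\setminus X, \emptyset}$ with $X = \Attr{R}{P[S]}{\ROut{S}}$). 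The incremental EC-detection block (Lines~\ref{alg:symmec:while2}--\ref{alg:symmec:addMEC}) invokes $\SCCFind{}{\cdot}$ and $\Call{CollapseEC}{}{}$ but issues no recursive call to $\Call{SymMec}{}$, so it need not be considered.

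For both call sites the argument is identical in structure. First I would observe that $\SCCFind{P}{W, \cdot}$ returns, one at a time, the strongly connected components of the subgraph induced by $W$ (as described in Section~\ref{ss:symbolicscc}); hence each returned $S_j$ is by definition a strongly connected set, establishing the second claim immediately. For the containment $S_j \subseteq S$, note that in both cases the vertex set passed to $\SCCFind{}{}$ is a subset of $S$: in the first branch it is $S \setminus A$, and in the second it is $S \setminus X$, and each returned SCC $S_j$ satisfies $S_j \subseteq S \setminus A \subseteq S$ (respectively $S_j \subseteq S \setminus X \subseteq S$). This gives $S_j \subseteq S$ in both cases. A subtle point worth stating explicitly is that the SCCs are computed in the (possibly already collapsed) MDP $P$, whose vertex set has only shrunk relative to $S$; since collapsing only removes vertices of the EC being collapsed and redirects edges among remaining vertices (Figure~\ref{alg:CollapseEC}), it never introduces vertices outside $S$, so the containment is preserved even across the collapsing performed within the enclosing recursive call.

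I do not expect a genuine obstacle here, as the lemma is essentially a syntactic invariant of the pseudocode; the only thing requiring minor care is making precise that ``$S$ is strongly connected'' is an inductive hypothesis legitimately available at each recursive call. Concretely, the base case is the top-level invocation: in $\Call{SymbolicMec}{P',\gamma}$ (Figure~\ref{alg:metasymmec}) the argument $C$ passed to $\Call{SymMec}{}{}$ is an SCC of $P'$ and is therefore strongly connected. The inductive step is exactly the statement proved above: given that the current argument $S$ is strongly connected, each $S_j$ on which we recurse is an SCC of an induced subgraph and hence strongly connected, so the hypothesis is maintained down the recursion tree. I would conclude by noting that this closes the induction and yields the lemma, which in turn justifies the phrase in the setup that ``in all calls to $\Call{SymMec}{S,\gamma,P}$ the set $S$ is strongly connected.''
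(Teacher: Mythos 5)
Your proposal is correct and takes essentially the same approach as the paper: both arguments simply inspect the two recursive call sites (Lines~\ref{alg:symmec:rec1} and~\ref{alg:symmec:rec2}) and observe that each $S_j$ is an SCC of $S \setminus A$ (resp.\ $S \setminus X$), hence strongly connected and a subset of $S$, with the no-recursion cases ($|S| \leq 1$ or $\ROut{S} = \emptyset$) trivially fine. Your extra remarks---the explicit induction down the recursion tree and the note that collapsing in $P$ never introduces vertices outside $S$---only make explicit what the paper's terser proof leaves implicit.
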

 \begin{proof}
     If $|S| =1$ or $\ROut{S} = \emptyset$ there are no recursive calls and the statement is true.
     Now consider the case where $T \not= \emptyset$.
     At Line~\ref{alg:symmec:rec1}, we consider $S_j$ which is an SCC in 
 	the graph $S \setminus (T \cup A)$, obviously a subset of $S$ and strongly connected.
 	Now consider the case where $T = \emptyset$.
 	At Line~\ref{alg:symmec:rec2} we consider $S_j$ which is an SCC of 	$S \setminus X$, obviously a subset of $S$ and strongly connected. 
 \end{proof}

Let $M$ be the set returned by $\Call{SymMec}{S,\gamma,P}$.
Note that the ultimate goal of $\Call{SymMec}{S,\gamma,P}$ is to compute $M_C$ for a given SCC $C$ of $P'$.
The next lemma shows that every call to $\Call{SymMec}{S,\gamma,P}$ with a strongly connected set $S$
returns a set containing $M_S=\bigcup_{Q \in \EC{S}} Q$ and collapses all ECs $Q \in \EC{S}$ in $P$.

\begin{lemma}\label{lem:allECInM}  
	Let  $S$ be a strongly connected set of vertices and $M$ be the set returned by
	$\Call{SymMec}{S,\gamma,P}$, then for all non-trivial ECs $Q \in \EC{S}$ in $P'$
	we have $Q \subseteq M$ and $Q$ is collapsed in $P$.
\end{lemma}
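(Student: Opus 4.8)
The plan is to induct on $|S|$, following the recursive structure of $\Call{SymMec}{S,\gamma,P}$, and to prove the slightly stronger statement that every non-trivial EC $Q \in \EC{S}$ satisfies $Q \subseteq M$ and is collapsed in $P$ by the time the call returns. First I would dispose of the base cases: if $|S| \le 1$ there is no non-trivial EC and the statement holds vacuously; if $\ROut{S} = \emptyset$ then $S$ itself is the unique maximal EC (it is strongly connected and no random vertex leaves it), so the single EC is $S$, which is returned and collapsed at Lines \ref{alg:symmec:collapse1}--\ref{alg:symmec:ret1}. The interesting work is in the two recursive branches.

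For the branch $T \neq \emptyset$, I would first invoke Lemma \ref{lem:attr_remove} to argue that removing $A = \Attr{R}{P[S]}{T}$ only discards vertices that lie in no non-trivial EC of the relevant subgraph, so that every $Q \in \EC{S}$ is either entirely inside $S \setminus A$ or touches $T$. The recursive calls on the SCCs $S_j$ of $S \setminus A$ (Lines \ref{alg:symmec:while1}--\ref{alg:symmec:rec1}) handle, by the induction hypothesis, all ECs contained in $S \setminus A$: each such $Q$ lies in one $S_j$ (it is strongly connected), so $Q \subseteq M$ and $Q$ is collapsed. The crux is the incremental phase (Lines \ref{alg:symmec:while2}--\ref{alg:symmec:ret2}), which must recover any EC $Q$ that intersects $T$. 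Here I would set up a loop invariant: after the vertices already removed from $T$ have been processed, every EC of $S$ supported on the collapsed graph restricted to $S \setminus T$ (plus the reinserted separator vertices) has been collapsed. When a vertex $v$ is reinserted, its SCC $S'$ in $S \setminus T$ is computed, the attractor $Z = \Attr{R}{P[S']}{\ROut{S'}}$ of the random vertices leaving $S'$ is removed (again justified by Lemma \ref{lem:attr_remove} as EC-free), and the remaining part contains a genuine EC, which is identified via the SCC $U$ containing $v$ and collapsed. I would argue, using Lemma \ref{lem:collapse}, that collapsing ECs found in earlier iterations does not destroy or merge any EC of the original graph incorrectly, so that an EC $Q$ of $S$ appears, after the relevant collapses, as an EC through $v$ at exactly the iteration when the last of its separator vertices is reinserted.

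For the branch $T = \emptyset$ (diameter $< 2\gamma$), the argument is simpler: $X = \Attr{R}{P[S]}{\ROut{S}}$ is EC-free by Lemma \ref{lem:attr_remove}, so every $Q \in \EC{S}$ survives in $S \setminus X$ and lands in one of its SCCs $S_j$; the recursive calls at Lines \ref{alg:symmec:while3}--\ref{alg:symmec:rec2} then handle each by induction. In all branches I must confirm the recursive calls are on strictly smaller strongly connected sets, which is exactly Lemma \ref{lem:corr:0} together with the fact that $A$ (resp.\ $X$) is non-empty whenever $\ROut{S} \neq \emptyset$, guaranteeing termination and a valid induction.

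I expect the main obstacle to be the incremental phase: precisely stating and maintaining the loop invariant while separator vertices are added back one at a time, and arguing that the interaction between collapsing (which alters $E$ in $P$) and subsequent SCC/attractor computations correctly isolates each new EC exactly once. The delicate point is that an EC $Q$ of $S$ may straddle several separator vertices, so I must show it is \emph{not} prematurely detected (its pieces are still separated until the last relevant vertex of $T$ is reinserted) and is \emph{fully} detected once that vertex arrives, using part (c) of Lemma \ref{lem:collapse} to lift the collapsed-graph EC back to an EC of the original graph. Getting the bookkeeping of which vertices of $T$ remain versus which have been merged into collapsed representatives right is where the proof will require care.
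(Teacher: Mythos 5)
Your plan follows the paper's own proof in structure: induction on $|S|$, the same base cases, recursion on the SCCs of $S\setminus A$ handled by the induction hypothesis, a loop invariant for the incremental phase at Line~\ref{alg:symmec:while2} (this is the paper's Claim~\ref{claim:non-trivialEC}), Lemma~\ref{lem:collapse} to relate ECs of the collapsed MDP to ECs of the original one, and Lemma~\ref{lem:attr_remove} for the branch $T=\emptyset$. However, there is one genuine flaw in the branch $T\neq\emptyset$.

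You justify the key claim --- that every non-trivial EC $Q\in\EC{S}$ either lies entirely inside $S\setminus A$ or intersects $T$ --- by invoking Lemma~\ref{lem:attr_remove} and asserting that removing $A=\Attr{R}{P[S]}{T}$ ``only discards vertices that lie in no non-trivial EC of the relevant subgraph.'' Both parts fail. Lemma~\ref{lem:attr_remove} applies to the random attractor of the set of \emph{random vertices with edges leaving a strongly connected set}; here $T$ is a separator, an arbitrary set of vertices chosen for purely graph-theoretic (BFS-layer) reasons, so the lemma's hypothesis is not met. Moreover, the asserted intermediate statement is false: $A$ can very well contain vertices of non-trivial ECs of $S$, namely of those ECs that also meet $T$. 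In the example of Figure~\ref{fig:symmec}, the vertices $b,c,d\in A\setminus T$ belong to the non-trivial EC $\{a,b,c,d,e\}$ of $S$; recovering exactly such ECs is the whole purpose of the incremental phase. What is true is only the weaker statement that an EC $Q$ with $Q\cap T=\emptyset$ cannot meet $A$, and this needs its own argument (the paper's Claim~\ref{claim:TcontainsQ}), which is \emph{analogous to} but not a \emph{consequence of} Lemma~\ref{lem:attr_remove}: take the vertex $v\in Q\cap(A\setminus T)$ on the lowest attractor level; if $v\in V_1$, the attractor definition forces all its outgoing edges to leave $Q$, contradicting strong connectivity of $Q$; if $v\in V_R$, then $v$ has an edge to a strictly lower level, hence to a vertex outside $Q$, contradicting that $Q$ is an EC. With this argument supplied --- and with the incremental-phase invariant actually carried out, which you correctly identify but only sketch --- your plan coincides with the paper's proof.
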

\begin{proof}
	Let  $Q \in \EC{C}$ with $Q \subseteq S$.
	We prove the statement by induction on the size of $S$.
	For the base case, i.e., if $S$ is empty or contains only one vertex, 
	$\Call{SymMec}{S,\gamma,P}$ returns the empty set at Line~\ref{alg:symmec:ret0} and thus the condition holds.

	For the inductive step, let $|S| > 1$.
	If $Q = S$, we detect it at Line~\ref{alg:symmec:if1} 
	and return $S$ at Line~\ref{alg:symmec:ret1}. 
	Then we collapse the set of vertices at Line~\ref{alg:symmec:collapse1}.
	Thus, $M$ contains $Q$ and the claim holds.
	We distinguish two cases concerning $T$ at Line~\ref{alg:symmec:sep}:
	First, if $T$ is non-empty then the random attractor $A$ of $T$ at Line~\ref{alg:symmec:attrSep}
	is non-empty.
	Thus the SCCs of $S \setminus A$ are strictly smaller than $S$. 
	Thus, we can use the induction hypothesis to stipulate that if $Q$ is contained 
	in one of the SCCs computed at Line~\ref{alg:symmec:while1} 
	we have $Q \subseteq M$ after the while-loop at Lines~\ref{alg:symmec:while1}-\ref{alg:symmec:rec1}. 
	Additionally, by the induction hypothesis, they are collapsed. 
	Also, if $Q$ contains a subset $Q' \subset Q$ that is a non-trivial EC
	and contained in one of the SCCs of $S \setminus A$,
	we have that $Q' \subseteq M$ after the while-loop at Lines~\ref{alg:symmec:while1}-\ref{alg:symmec:rec1}.
	Again, by induction hypothesis, $Q'$ is collapsed.
	We proceed by proving that the separator $T$ must contain one of the vertices of $Q$.
	\begin{claim}\label{claim:TcontainsQ}
		For any non-trivial EC $Q \in \EC{S}$ which is not fully contained
		in one of the SCCs computed at Line~\ref{alg:symmec:while1} we have $T \cap Q \not= \emptyset$.
	\end{claim}
	\begin{proof}
		Let $Q$ be an arbitrary EC in $\EC{S}$.
		Observe that if $Q$ is not fully contained in an SCC of $S \setminus A$ 
		(computed at Line~\ref{alg:symmec:while1}), we have $Q \cap A \neq \emptyset$.
		We now show that $Q \cap T \neq \emptyset$:
		Assume the contrary, i.e., $Q \cap T = \emptyset$. 
		Consequently, $Q$ is a strongly connected set in $S \setminus T$ and not 
		fully contained in an SCC of $S \setminus A$. 
		Also, $Q$ contains a vertex of $A \setminus T$. 
		Let $v \in Q \cap (A \setminus T)$ such that there is no $v' \in Q \cap (A \setminus T)$ which 
		is on a lower level of the attractor $A$.
		If $v \in V_1$, by the definition of the attractor and the above assumption, all its outgoing edges leave the set $Q$, which is in contradiction to $Q$ being strongly connected.
		Thus we have $v \in V_R$ and, by the definition of the attractor,
		$v$ must have an edge to a smaller level and by the above assumption the vertex in the
		smaller level is not in $Q$.
		That is, $Q$ has an outgoing random edge which contradicts the assumption that $Q$ is an EC\@.		
	\end{proof}

	The following claim shows an invariant for the while-loop at Line~\ref{alg:symmec:while2}.
	\begin{claim}\label{claim:non-trivialEC}
		For iteration $j \geq 0$ of the while-loop at Line~\ref{alg:symmec:while2} holds: (a) There are
		no non-trivial ECs $Q \subseteq S \setminus T$ in $P$
		and (b) in each iteration $j$  every nontrivial maximal EC in $S \setminus T$ is collapsed and added to $M$.
	\end{claim}
	\begin{proof}
		The induction base $j=0$ holds 
		for the following reasons: $A \setminus T$ cannot include $Q$ due to Claim~\ref{claim:TcontainsQ}.
		$S \setminus A$ cannot contain $Q$ 
		due to the fact that $Q$ must contain a vertex 
		in $T$ and that we collapsed the ECs contained in SCCs computed at 
		Line~\ref{alg:symmec:while1}.
		For the induction step, assume that there are no non-trivial ECs in $S \setminus T$
		before iteration $j = \ell$.
		When we include one vertex $v$ of $T$ into $S$ 
		at Line~\ref{alg:symmec:pick} there can only be one new non-trivial maximal EC $Q_v$ in $S$, 
		i.e., the one including $v$, as the remaining graph is unchanged. 
		We argue that if such an EC $Q_v$ exists it is collapsed into a vertex and added into $M$ before iteration $\ell+1$ of the while-loop 
		at Line~\ref{alg:symmec:while2}:
		The EC $Q_v$ must be strongly connected, and thus it is contained in the SCC $S'$ of $v$ 
		at Line~\ref{alg:symmec:computescc2}. Note, that if $|S'| = 1$, the new EC is trivial, which 
		concludes the proof. Otherwise, we compute $Z$ at Line~\ref{alg:symmec:attrIncr}, which does not
		contain a vertex of $Q_v$ by Lemma~\ref{lem:attr_remove} and remove it from $S'$. 
		If $S'$ is non-empty, it contains exactly one maximal EC as we argued above. 
		That is the EC $Q_v$ and we thus add $Q_v$ to $M$ and collapse it. 
		As a consequence there is no non-trivial EC in $S \setminus T$ left.
		Thus the claim holds. 
	\end{proof}	
	
	We show that $\Call{SymMec}{S,\gamma,P}$ detects $Q$ 
	if it is not fully contained in one of the SCCs computated at Line~\ref{alg:symmec:while1}.
	By Claim~\ref{claim:TcontainsQ} and because we iteratively remove vertices from $T$
	(Lines~\ref{alg:symmec:pick}--\ref{alg:symmec:removevfromT}) there exists an iteration of the while loop at Line~\ref{alg:symmec:while2} such that $Q
	\subseteq S \setminus T$ holds at Line~\ref{alg:symmec:removevfromT}.
	We show that $Q$ is in $M$ after this iteration:
	Because $Q$ might contain non-trivial ECs $Q'_i$ that were already collapsed due to 
	the recursive call at Line~\ref{alg:symmec:rec1}, or in a prior iteration of the while-loop at Line~\ref{alg:symmec:while2},
	it follows from Lemma~\ref{lem:collapse}(b) that there exists 	
	an EC $Q' \subseteq Q$  in $P$ such that $Q' \supseteq (Q\setminus M)$
	and $Q'$ contains one vertex corresponding to each collapsed sub-ECs $Q'_i$.
	Due to Claim~\ref{claim:non-trivialEC} we collapse $Q'$ in the current iteration 
	and add $Q'$ to $M$. It follows that $Q \subseteq M$ and that $Q$ is collapsed.

	Now consider the case $T= \emptyset$ (i.e., $\Diam{S} < 2 \gamma$).
	As $\ROut{S} \not= \emptyset$ we have that the set $X$ computed at Line~\ref{alg:symmec:attrB} is non-empty. 
	Note that $Q$ cannot contain a vertex in $X$ by Lemma~\ref{lem:attr_remove} 
	and the fact that $S$ is strongly connected by Lemma~\ref{lem:corr:0}. 
	Consequently, $Q$ is contained in one of the SCCs of $S \setminus X$. 
	Note that each such SCC has size strictly smaller than $S$ due to the fact that $X$ is non-empty.
	The claim then holds by the induction hypothesis.
	This completes the proof of Lemma~\ref{lem:allECInM}.
\end{proof}

By the above lemma, we have that the algorithm finds all ECs\@. 
We next show that all vertices added to $M$ are actually contained
in some EC\@.

\begin{lemma}\label{lem:MValid}
	The set $M$ is a subset of $M_S$, i.e., $M \subseteq \bigcup_{Q \in \EC{S}} Q$.
\end{lemma}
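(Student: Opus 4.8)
The plan is to prove the converse inclusion of Lemma~\ref{lem:allECInM}: every vertex the algorithm places into $M$ genuinely belongs to some non-trivial EC of $S$. Since $M$ is built incrementally, I would argue by induction on the recursion together with an analysis of exactly the three places where $M$ grows, and crucially I would track how collapsing relates ECs of the modified MDP $P$ back to ECs of the original MDP $P'$ via Lemma~\ref{lem:collapse}(c).

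First I would identify the three sources of additions to $M$: (i) the base case at Line~\ref{alg:symmec:ret1}, where the whole set $S$ is returned after $\ROut{S}=\emptyset$ certifies it is a non-trivial EC; (ii) the recursive calls at Line~\ref{alg:symmec:rec1} and Line~\ref{alg:symmec:rec2}, where $M$ absorbs the return value of $\Call{SymMec}{S_j,\gamma,P}$ on a strictly smaller strongly connected $S_j\subseteq S$; and (iii) the incremental step at Line~\ref{alg:symmec:addMEC}, where the collapsed set $U$ is added. For (i), $\ROut{S}=\emptyset$ means no random vertex of $S$ has an edge leaving $S$, and $S$ is strongly connected with $|S|\geq 2$ by Lemma~\ref{lem:corr:0}, so $S$ is by definition a non-trivial EC of $S$; hence $S\subseteq M_S$. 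For (ii), by the induction hypothesis each $\Call{SymMec}{S_j,\gamma,P}$ returns a subset of $M_{S_j}=\bigcup_{Q\in\EC{S_j}}Q$, and since $S_j\subseteq S$ every EC of $S_j$ is an EC of $S$, giving $M_{S_j}\subseteq M_S$.

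The main work, and the main obstacle, is case (iii). Here $U=\SCCFind{P}{S',\Pick{S}}$ is a non-trivial SCC in the current, partially collapsed MDP $P$, after removing the attractor $Z=\Attr{R}{P[S']}{\ROut{S'}}$. I would argue that in $P$ the set $U$ is a non-trivial EC: it is strongly connected by construction, and because we subtracted the random attractor of $\ROut{S'}$, no random vertex of $U$ can have an edge leaving $U$ (any such vertex would lie in $Z$ and thus be excluded), so condition (2) of the EC definition holds. The difficulty is that $P$ is not the original MDP: earlier sub-ECs $Q'_i\subseteq S$ have already been collapsed to representative vertices. So $U$ is an EC in the collapsed MDP $P$, not literally in $P'$. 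I would resolve this by repeatedly applying Lemma~\ref{lem:collapse}(c): each collapsed representative vertex $v\in U$ lifts back to the full EC $X$ that was collapsed into it, and $U\cup X\in\AEC{P_{\text{before}}}$; iterating this un-collapsing over all representatives contained in $U$ yields a genuine non-trivial EC $\widehat U\subseteq S$ in the original $P'$ whose vertex set (after re-expanding) contains exactly the original vertices that $U$ accounts for. Since every vertex added to $M$ at Line~\ref{alg:symmec:addMEC} is either an original vertex of $\widehat U$ or a representative standing for already-collapsed vertices that were themselves shown (by the induction hypothesis) to lie in $M_S$, we conclude $U\subseteq M_S$.

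Combining the three cases, every addition to $M$ is contained in $M_S=\bigcup_{Q\in\EC{S}}Q$, so $M\subseteq M_S$, which is the claim. The only subtlety I would be careful to spell out is the bookkeeping in case (iii): making precise that a collapsed representative vertex "counts for" a set of original vertices already certified to be in $M_S$, so that the final un-collapsed EC covers precisely the original-vertex footprint of $M$ without over-counting. I expect this lift-back argument via Lemma~\ref{lem:collapse}(c), rather than any computation, to be the crux of the proof.
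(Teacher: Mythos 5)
Your decomposition into the three sources of additions to $M$ (Line~\ref{alg:symmec:ret1}, the recursive calls at Lines~\ref{alg:symmec:rec1} and~\ref{alg:symmec:rec2}, and Line~\ref{alg:symmec:addMEC}) matches the paper's induction on $|S|$, and your cases (i) and (ii) are handled essentially as in the paper. Your explicit lift-back through Lemma~\ref{lem:collapse}(c) to relate ECs of the partially collapsed MDP $P$ to ECs of the original MDP is also sound (the paper is in fact terser about this point than you are). However, your case (iii) contains a genuine gap, and it sits exactly where the paper invests its main effort. You justify that $U$ is an EC by claiming that any random vertex of $U$ with an edge leaving $U$ ``would lie in $Z$ and thus be excluded.'' This is false as a property of the attractor. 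The set $Z=\Attr{R}{P[S']}{\ROut{S'}}$ absorbs a random vertex only if it has an edge out of $S'$ or an edge into $Z$; it does \emph{not} absorb a random vertex whose edge goes to a \emph{different SCC} of $S'\setminus Z$. So if the SCC $U$ of $v$ is not a sink in the SCC-DAG of $S'\setminus Z$, a random vertex of $U$ may have an edge to a downstream SCC, $U$ is then not an EC, and your argument gives no reason to exclude this configuration.

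What closes the gap is the invariant the paper proves as Claim~\ref{claim:non-trivialEC} inside the proof of Lemma~\ref{lem:allECInM}, which your proposal never invokes: at this point of the while-loop at Line~\ref{alg:symmec:while2}, every non-trivial EC contained in $S\setminus T$ must contain the vertex $v$ just re-inserted. One then argues: (a) every \emph{bottom} (sink) SCC $B$ of $S'\setminus Z$ is an EC, because random vertices of $B$ can have edges neither out of $S'$ (they would be in $\ROut{S'}\subseteq Z$), nor into $Z$ (attractor closure), nor into another SCC (bottomness); (b) no bottom SCC is trivial, since every vertex of the non-trivial SCC $S'$ has an outgoing edge inside $S'$, and a vertex all of whose residual out-edges lead into $Z$ is itself absorbed into $Z$; (c) hence every bottom SCC is a non-trivial EC in $S\setminus T$ and must contain $v$ by the invariant, so there is exactly one bottom SCC, it equals the SCC $U$ of $v$ computed at Line~\ref{alg:symmec:bottomSCC}, and therefore $U$ is an EC. Without step (c) — i.e., without Claim~\ref{claim:non-trivialEC} — the attractor removal alone cannot certify that what the algorithm adds to $M$ at Line~\ref{alg:symmec:addMEC} is an end-component, so the ``crux'' of the proof is this invariant-plus-bottom-SCC argument, not the collapsing bookkeeping you anticipated.
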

\begin{proof}
	We show the claim by induction over the size of $S$.
	For the base case, i.e., $|S| \leq 1$ the claim holds trivially because we return
	the empty set at Line~\ref{alg:symmec:ret0} and any set of size less than two only contains
	trivial ECs.
	For the inductive step, let $|S| > 1$. 
	For the return statement at Line~\ref{alg:symmec:ret1} we argue as follows:
	Because $S$ has no random vertices with edges out of $S$ 
	(considering $P'$, Line~\ref{alg:symmec:if1}), $S$ is a non-trivial EC\@. 
	Thus, in Line~\ref{alg:symmec:ret1} we correctly return $S$.

	For the case $T \not=\emptyset$ we argue as follows:
	Let $T$, $A$ be the separator and its attractor as computed in Line~\ref{alg:symmec:sep} and
	Line~\ref{alg:symmec:attrSep}. Note that for all SCCs $S_j$ in the graph $S \setminus A$
	we have $|S_j| < |S|$ because $|A| > 0$. Thus, by induction hypothesis, all vertices added 
	in the recursive call at Line~\ref{alg:symmec:rec1} are in $M_S$.
	We claim that for each iteration of the while loop in Line~\ref{alg:symmec:while2} 
	we add only non-trivial ECs to $M$:
	Let $v$ be the vertex we choose to remove from $T$ at Line~\ref{alg:symmec:pick}.
	Let $S_v$ be the SCC of $v$ computed at Line~\ref{alg:symmec:computescc2}.
	If $|S_v| = 1$ we continue to the next iteration without adding anything to $M$ and 
	the claim holds.
	Otherwise, there are two cases based on the 
	computation of the attractor $Z$ of random vertices 
	with edges out of $S_v$ at Line~\ref{alg:symmec:attrIncr}:
	If $Z$ contains $S_v$, we do not add vertices to $M$ and the claim holds.
	If $S_v \setminus Z \neq \emptyset$ we add the SCC $S'_v$ of $v$ in $S_v \setminus Z$ to $M$.
	It remains to show that the non-trivial SCC $S'_v$ as computed at Line~\ref{alg:symmec:snew} is a non-trivial EC\@:

	Recall that by Claim~\ref{claim:non-trivialEC} a maximal non-trivial EC within the SCC $S'$ at
	Line~\ref{alg:symmec:computescc2} has to contain $v$. When we remove the random attractor $Z$ of
	$\ROut{S'}$ then we either end up with the empty set or with a set $S' \gets S' \setminus Z$
	that has some non-trivial SCCs with no random outgoing edges. 
	Note that for $u \in S'_v \cap V_1$ we have $\Out{u} \cap S'_v \neq \emptyset$ (otherwise
	$u \in Z$) and for $u \in S'_v \cap V_R$ we must have $\Out{u} \subseteq S'_v$ (otherwise
	$u \in Z$). 
	These bottom SCCs are also ECs and, by
	the above, contain $v$. Thus there is a unique maximal bottom SCC and Line~\ref{alg:symmec:bottomSCC}
	returns that bottom SCC.
	
	Note that for all $v \in S'_v \cap V_1$ we have $\Out{v} \cap S'_v \neq \emptyset$, otherwise
	$v \in Z$. Also, for all $u \in S'_v \cap V_R$ we must have $\Out{u} \subseteq S'_v$, otherwise
	$u \in Z$. Thus, $|S'_v| > 1$ and all random edges again go to $S'_v$. 

    For the case $T =\emptyset$ note that each SCC found at Line~\ref{alg:symmec:while3} are of size strictly
	less than $|S|$ because $|X| \geq 1$. 
	Thus, by induction hypothesis, the vertices added at Line~\ref{alg:symmec:rec2}
	are in $M_S$.
\end{proof}

Lemma~\ref{lem:allECInM} and Lemma~\ref{lem:MValid} imply the correctness of 
$\Call{SymMec}{S,\gamma,P}$ and $\Call{SymbolicMec}{P',\gamma}$.

\begin{proposition}[Correctness]\label{prop:symmec:corr}
  Given an SCC $S$ of an MDP, $\Call{SymMec}{S,\gamma,P}$ returns the set $M_S=\bigcup_{Q \in \EC{S}} Q$, i.e., the set of vertices that are contained in a non-trivial MEC of $S$.
\end{proposition}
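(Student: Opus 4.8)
The plan is to derive the proposition as an immediate consequence of the two preceding lemmas by establishing the set equality $M = M_S$ through mutual inclusion, where $M$ denotes the set returned by $\Call{SymMec}{S,\gamma,P}$. Since $S$ is assumed to be an SCC, it is in particular strongly connected, so both Lemma~\ref{lem:allECInM} and Lemma~\ref{lem:MValid} apply directly to $S$ without any additional preprocessing.

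For the inclusion $M_S \subseteq M$, I would invoke Lemma~\ref{lem:allECInM}, which guarantees that every non-trivial EC $Q \in \EC{S}$ satisfies $Q \subseteq M$. Taking the union over all $Q \in \EC{S}$ then yields $M_S = \bigcup_{Q \in \EC{S}} Q \subseteq M$. Here I would remark that it does not matter whether $\EC{S}$ is read as ranging over all non-trivial ECs or only over the maximal ones: since any two overlapping ECs merge into a larger EC, every non-trivial EC is contained in a unique maximal one, and hence both unions coincide with the set of vertices lying in some non-trivial MEC of $S$.

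For the reverse inclusion $M \subseteq M_S$, Lemma~\ref{lem:MValid} states exactly $M \subseteq \bigcup_{Q \in \EC{S}} Q = M_S$, so nothing further is required. Combining the two inclusions gives $M = M_S$, which is precisely the claim that $\Call{SymMec}{S,\gamma,P}$ returns the set of vertices contained in a non-trivial MEC of $S$. The correctness of $\Call{SymbolicMec}{P',\gamma}$ then follows by applying this to each SCC $C$ computed in stage~1 and observing that stage~2 recovers the individual MECs as the SCCs of the graph restricted to $M$.

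I do not expect a genuine obstacle at the level of this proposition: all the real work has already been discharged in the two lemmas—the inductive completeness argument that every EC is eventually detected and collapsed (via the separator recursion and the incremental detection loop, resting on Claims~\ref{claim:TcontainsQ} and~\ref{claim:non-trivialEC}), and the soundness argument that no spurious vertices are ever added to $M$. The only point meriting care is the consistent bookkeeping between the working MDP $P$, in which ECs are progressively collapsed, and the reference MDP $P'$ in which membership in $\EC{S}$ is defined; but since both lemmas are phrased with respect to this same pair, the two inclusions line up without further adjustment.
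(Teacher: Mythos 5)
Your proposal matches the paper exactly: the paper derives Proposition~\ref{prop:symmec:corr} precisely by combining Lemma~\ref{lem:allECInM} (every non-trivial EC $Q \in \EC{S}$ satisfies $Q \subseteq M$, giving $M_S \subseteq M$) with Lemma~\ref{lem:MValid} (giving $M \subseteq M_S$), and your observations about maximal versus non-maximal ECs and the $P$/$P'$ bookkeeping are correct but not needed beyond what the lemmas already encode. No gaps; this is the paper's argument.
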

\begin{proposition}[Correctness]
 $\Call{SymbolicMec}{P',\gamma}$ returns the set of non-trivial MECs of an MDP $P'$.
\end{proposition}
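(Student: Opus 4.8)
The plan is to reduce the correctness of $\Call{SymbolicMec}{P',\gamma}$ to the already-established correctness of $\Call{SymMec}{\cdot,P}$ (Proposition~\ref{prop:symmec:corr}), and then argue that the two-stage structure of the algorithm correctly reassembles the non-trivial MECs. First I would recall that the MEC decomposition of $P'$ consists of the non-trivial MECs together with the trivial ones (single player-1 vertices belonging to no non-trivial MEC). Since every non-trivial EC is strongly connected, it lies entirely within a single SCC of $P'$; hence computing $M_C$ for each SCC $C$ and taking the union $M = \bigcup_C M_C$ yields exactly the set of all vertices contained in some non-trivial MEC of $P'$. This is the role of stage~1 (the loop at Line~\ref{alg:metasymmec:while1}): it iterates over all SCCs $C$ of $P'$ via $\SCCFind{P'}{V',\emptyset}$ and, by Proposition~\ref{prop:symmec:corr}, each call $\Call{SymMec}{C,\gamma,P}$ returns $M_C=\bigcup_{Q\in\EC{C}}Q$. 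Thus after stage~1, $M$ equals the set of all vertices lying in a non-trivial MEC of $P'$.

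Next I would verify stage~2 (the loop at Line~\ref{alg:metasymmec:while2}), which restricts attention to the subgraph induced by $M$ and outputs its SCCs as the non-trivial MECs. The key observation is that when we intersect the graph with $M$, two vertices $u,v\in M$ lie in the same SCC of $P'[M]$ if and only if they lie in the same non-trivial MEC of $P'$. One direction is immediate: a non-trivial MEC $Q$ is strongly connected and $Q\subseteq M$, so $Q$ is contained in a single SCC of $P'[M]$. For the converse I would argue that distinct non-trivial MECs of $P'$ cannot be merged by the SCC computation on $M$: since MECs are maximal and vertex-disjoint, any path in $P'[M]$ connecting two different MECs and back would exhibit a larger strongly connected set all of whose random vertices keep their edges inside $M$, contradicting maximality. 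Hence each SCC found at Line~\ref{alg:metasymmec:addSCC} is exactly one non-trivial MEC, and all of them are enumerated. I should note that $\SCCFind{P'}{M,\emptyset}$ uses the \emph{original} edge relation of $P'$ restricted to $M$, not the collapsed graph $P$; this is important because the collapsing inside $\Call{SymMec}{\cdot,P}$ only modifies the working copy $P$, while stage~2 operates on $P'$.

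The main obstacle I anticipate is the converse direction of the stage~2 claim, namely ruling out that two genuinely distinct MECs become strongly connected once we drop all vertices outside $M$. The subtle point is that removing vertices could in principle create a strongly connected set in $P'[M]$ that is \emph{not} an EC of $P'$ (because a random vertex may have edges leaving $M$), yet still lumps two MECs together as one SCC. To handle this I would show that no such spurious merging occurs: if an SCC $R$ of $P'[M]$ strictly contained two maximal ECs, then $R$ itself, viewed in $P'$, would be a strongly connected superset; and since every vertex of $R$ lies in $M$ and hence in some non-trivial EC, one can argue using the EC-union property (stated in the introduction: overlapping ECs combine into a single EC) together with the absence of problematic random outgoing edges within $M$ that $R$ would be a single EC, contradicting the assumed distinctness. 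Finally, I would close by noting that the trivial MECs are recovered separately as the vertices of $V_1\setminus M$, as remarked in the algorithm description, so that combining the output of Line~\ref{alg:metasymmec:returnMECS} with these trivial components yields the complete MEC decomposition of $P'$.
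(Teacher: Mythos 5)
Your proposal is correct and follows essentially the same route as the paper's proof: stage~1 is reduced to Proposition~\ref{prop:symmec:corr}, and for stage~2 you rule out spurious merging by observing that an SCC of $P'[M]$ strictly containing a MEC would---since every vertex of $M$ lies in a non-trivial EC whose random vertices have no edges leaving it---itself be an EC, contradicting maximality, which is exactly the paper's contradiction argument. Your additional remarks (that stage~2 uses the original edge relation of $P'$ rather than the collapsed copy $P$, and that trivial MECs are recovered as $V_1\setminus M$) match the paper's surrounding discussion rather than adding a different technique.
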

\begin{proof}
	Due to Proposition~\ref{prop:symmec:corr} $M$ contains all vertices in nontrivial ECs.
	It remains to show that the nontrivial MECs are the SCCs of $M$. 
	By definition, each nontrivial EC is strongly connected. Towards a contradiction assume
	that a nontrivial MEC $Q$ is not an SCC by itself but part of some larger SCC $C$ of $M$. 
	But then, because $M$ contains only vertices in nontrivial ECs, we have that $C$ is the union of several ECs, i.e., 
	it is strongly connected and has no random outgoing edges and thus $C$ is an $EC$. This is in contradiction to $Q$ being a MEC\@.
	Thus, each nontrivial MEC is an SCC and because $M$ contains only vertices in nontrivial ECs the union of the MECs covers
	$M$. Thus there are no further SCCs.
\end{proof}

\subsection{Symbolic Operations Analysis}
We first bound the total number of symbolic operations for computing the separator $T$ at 
Line~\ref{alg:symmec:sep}, recursing upon the SCCs $S \setminus T$ at Line~\ref{alg:symmec:rec1} and 
adding the vertices in $T$ back to compute the rest of the ECs of $S$ at 
Lines~\ref{alg:symmec:while2}--\ref{alg:symmec:ret2} during all calls to $\Call{SymMec}{S,\gamma,P}$.
\begin{lemma}\label{lem:running_rec2}
	The total number of symbolic operations of Lines~\ref{alg:symmec:sep}--\ref{alg:symmec:ret2}
	in all calls to $\Call{SymMec}{S,\gamma,P}$ is in $O\left(\frac{n^2}{\lfloor \gamma /(2\log n)\rfloor}\right)$. 
\end{lemma}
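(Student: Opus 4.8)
The plan is to charge the work at Lines~\ref{alg:symmec:sep}--\ref{alg:symmec:ret2} against an amortized potential that drops whenever a non-empty separator is computed. First I would bound the cost of a single call on a set $S$ for which the separator $T$ is non-empty. By Lemma~\ref{lem:runspacesep} the separator at Line~\ref{alg:symmec:sep} costs $O(|S|)$, by Lemma~\ref{lem:attrRunning} the attractor $A$ at Line~\ref{alg:symmec:attrSep} costs $O(|S|)$, and by Theorem~\ref{thm:SCCs} decomposing $S \setminus A$ into its SCCs in the loop at Line~\ref{alg:symmec:while1} costs $O(|S|)$ (the recursive invocations spawned at Line~\ref{alg:symmec:rec1} are separate calls and are accounted for when they are themselves charged). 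The incremental loop at Lines~\ref{alg:symmec:while2}--\ref{alg:symmec:ret2} runs $|T|$ times, and each iteration performs a constant number of $\SCCFind{P}{\cdot}$, $\ROut{\cdot}$, attractor and $\Call{CollapseEC}{\cdot}$ operations, each costing $O(|S|)$; hence the loop costs $O(|T|\cdot|S|)$. Since $T \neq \emptyset$ forces $|T| \geq 1$, the whole per-call cost is dominated by $O(|T|\cdot|S|)$.

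The heart of the argument is the amortized analysis. I would define the potential $\Phi = \sum_{S' \in \mathcal{F}} |S'|^2$, where $\mathcal{F}$ is the current collection of (pairwise disjoint) sets on which $\Call{SymMec}{\cdot}$ still has to be invoked; initially $\mathcal{F} = \{S\}$, so $\Phi \leq n^2$, and $\Phi$ never increases, since every split replaces a set by disjoint subsets of no larger total size. When a call with $T \neq \emptyset$ on $S$ finishes, $S$ is replaced in $\mathcal{F}$ by the SCCs $S_1,\dots,S_r$ of $S \setminus A$ produced at Line~\ref{alg:symmec:rec1}. Because $T$ is a $q$-separator with $q = \lfloor \gamma/(2\log n)\rfloor$, every $S_j$ has $|S_j| \le |S| - q|T|$, and $\sum_j |S_j| \le |S| - |T|$ since $A \supseteq T$. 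Therefore
\begin{align*}
\sum_j |S_j|^2 \le \Bigl(\max_j |S_j|\Bigr)\sum_j |S_j| \le (|S|-q|T|)(|S|-|T|),
\end{align*}
and, invoking Observation~\ref{obs:SepSize} (which gives $q|T| < |S|$), a short calculation yields $\Delta\Phi = \sum_j |S_j|^2 - |S|^2 \le -\,q\,|T|\,|S|$.

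Combining the two steps, the work of a call with $T \neq \emptyset$ is $O(|T|\cdot|S|) = O(|\Delta\Phi|/q)$; that is, it is at most a $1/q$ fraction of the potential drop it causes. Summing over all calls, using that $\Phi$ only decreases (splits with $T = \emptyset$ contribute extra, unneeded drops) and starts below $n^2$, the total cost of Lines~\ref{alg:symmec:sep}--\ref{alg:symmec:ret2} telescopes to $O(\Phi/q) = O\!\left(n^2/\lfloor\gamma/(2\log n)\rfloor\right)$. The main obstacle I anticipate is verifying the potential-drop inequality $\Delta\Phi \le -q|T||S|$ cleanly — in particular correctly coupling the $q$-separator size bound $|S_j| \le |S| - q|T|$ with the bound $|T| < |S|/q$ from Observation~\ref{obs:SepSize} — and ensuring that the \emph{internal}, non-recursive work of the incremental loop is legitimately amortized against the single frontier split in which $S$ is removed, so that no symbolic operation is charged twice.
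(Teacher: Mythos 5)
Your proof is correct and is essentially the paper's own argument in different packaging: the paper bounds the same per-call cost $O(|T|\cdot|S|)$ and then solves the recurrence $F(S)\le c\,|T|\,|S|+\sum_i F(S_i)$ by induction (Claim~\ref{claim:fs}, showing $F(S)\le c'(|S|^2/Z+|S|)$ with $Z=\lfloor\gamma/(2\log n)\rfloor$), which is exactly the telescoped form of your potential $\Phi=\sum_{S'}|S'|^2$ scaled by $1/q$. The key step is also identical in both: coupling the $q$-separator bound $|S_j|\le|S|-q|T|$ on the resulting SCCs with the size bound $|T|<|S|/q$ from Observation~\ref{obs:SepSize} so that the drop in quadratic mass pays for the incremental loop at rate $1/q$.
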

\begin{proof}
	In Lemma~\ref{lem:runspacesep} we proved that computing the separator at 
	Line~\ref{alg:symmec:sep} takes $O(|S|)$ symbolic operations. 
	The same holds for computing the attractor at Line~\ref{alg:symmec:attrSep} due to 
	Lemma~\ref{lem:attrRunning} and computing the SCCs at Line~\ref{alg:symmec:while1}. 
	Each iteration of the while-loop at Line~\ref{alg:symmec:while2}
	takes $O(|S|)$ symbolic operations: Computing the SCC of $v$ twice can be done in $O(|S|)$ symbolic operations 
	due to Theorem~\ref{thm:SCCs}. Similarly, computing the random attractor at 
	Line~\ref{alg:symmec:attrIncr} is in $O(|S|)$ symbolic operations due to Lemma~\ref{lem:attrRunning}.
	The remaining lines can be done in a constant amount of symbolic operations.
	It remains to bound the symbolic operations of the while-loop at Line~\ref{alg:symmec:while1} where
	we call $\Call{SymMec}{S_j,\gamma,P}$ recursively at Line~\ref{alg:symmec:rec1} for each 
	SCC $S_j$ in $S \setminus A$. Note that the size of $T$ determines the number of iterations the while-loop 
	at Line~\ref{alg:symmec:while2} has, and how big the SCCs in $S \setminus A$ are. 
	We obtain that $|T|$ is of size at most
	$\frac{|S|}{\lfloor \gamma /(2 \log |S|)\rfloor}$ combining 
	Observation~\ref{obs:SepSize} and Lemma~\ref{lem:sepqual}.
	Due to the argument above the following equation bounds the 
	running time of Lines~\ref{alg:symmec:sep}--\ref{alg:symmec:ret2} for some constant $c$ which is
	greater than the number of constant symbolic operations in $\Call{SymMec}{C,\gamma,P}$ if $|S| \geq \gamma$.
	\begin{equation*}
		F(S)  \leq |T| \cdot |S| \cdot c + 
			\sum_{S_i \in \SCCs{S \setminus T}} F(S_i)
\end{equation*}
If $|S| < \gamma$ we only have the costs for computing the separator, i.e., $F(S) \leq c \cdot |S|$. 
 Next, we prove the bound in Claim~\ref{claim:fs}.
 	\begin{claim}\label{claim:fs}
 		$F(S) \in O\left(\frac{|S|^2}{\lfloor\gamma/(2 \log{|S|}) \rfloor} + |S|\right)$.
 	\end{claim}
 	\begin{proof}	
 		We prove the  inequality m by induction on the size of $S$.
 		That is we show $F(S) \leq \frac{|S|^2}{Z} c'$ where $Z = \lfloor\gamma/(2 \log |S|)\rfloor$ for some $c' > c$.
 		Obviously the inequality is true for $|S| < \gamma$, i.e., the base case is true.
 		For the inductive step, consider 
 		$F(S) \leq \frac{|S|^2}{Z}c'$ for $|S| \geq \gamma$.
  		Note that
 		\begin{align*}
			 \sum_{S_i \in \SCCs{S \setminus T}} F(S_i) &\leq \sum_{S_i \in \SCCs{S \setminus T}} c' \frac{|S_i|^2}{\lfloor\gamma/(2 \log |S_i|)\rfloor} + |S_i| \leq c'|S| + \sum_{S_i \in \SCCs{S \setminus T}} c' |S_i|^2/Z \\
			&\leq c'|S| + \sum_{S_i \in \SCCs{S \setminus T}} c' |S_i| \cdot (|S|-|T|Z) /Z \leq c'|S| + c'(|S|-|T|)(|S|-|T|Z) /Z
 		\end{align*}
 		The first inequality is due to the induction hypothesis and the third inequality is 
 		due to the fact that we have a $Z$-separator and Lemma~\ref{lem:sepqual}.
 		It remains to add $|S||T|c$:
 		\begin{align*}
 			F(S) &\leq c'|S| +  c'(|S|-|T|)(|S|-|T|Z) /Z + |S||T|c \leq c'|S| + \frac{c'}{Z} (|S|^2 -|S||T|-|T||S|Z+|T||S|) + |S||T|c\\
				 &= c'|S| + |S|^2/Zc' -|S||T|c' + |S||T|c \leq c'|S| + \frac{|S|^2}{Z}c'.
 		\end{align*}
 		The first inequality is due to the fact that $|T| \leq |S|/Z$ (Observation~\ref{obs:SepSize}).
 		This concludes our proof by induction of $F(S) \leq c' \cdot (\frac{|S|^2}{Z} + |S|)$.
 	\end{proof}
	Using this upper bound and $|S| \leq n$ we obtain a $O\left( n^2 / \lfloor\gamma/(2 \log{n}) \rfloor + n\right)$ bound which can be simplified to $O\left(n^2/\lfloor\gamma/(2 \log{n}) \rfloor\right)$ as $\gamma \leq n$.
\end{proof}

The second part of our analysis bounds the symbolic operations of the case when 
$\Diam{S} < 2\gamma$ at Lines~\ref{alg:symmec:attrB}--\ref{alg:symmec:rec2} and the work done from Line~\ref{alg:symmec:initM} to Line~\ref{alg:symmec:ret1}.

\begin{lemma}\label{lem:running_rec3}
	The total number of symbolic operations of Lines~\ref{alg:symmec:attrB}--\ref{alg:symmec:rec2}
	in all calls to $\Call{SymMec}{S,\gamma,P}$ is in $O(n \cdot \gamma + \frac{n^2}{\lfloor \gamma /(2 \log n)\rfloor})$. 
\end{lemma}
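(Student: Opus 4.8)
The plan is to analyze the two phases of the $T=\emptyset$ branch (Lines~\ref{alg:symmec:attrB}--\ref{alg:symmec:rec2}) separately, and then separately account for the constant work at Lines~\ref{alg:symmec:initM}--\ref{alg:symmec:ret1}. The key observation is that the branch where $T=\emptyset$ is only ever entered when $\Diam{S}<2\gamma$ (by Lemma~\ref{lem:sepqual}, since the separator is nonempty whenever a vertex at distance $\geq\gamma$ from the root exists). So each such call operates on a strongly connected set of small diameter, and I expect the cost of the branch to split into (i) the cost of computing the attractor $X$ at Line~\ref{alg:symmec:attrB} together with the SCC computation at Line~\ref{alg:symmec:while3}, and (ii) the recursive calls at Line~\ref{alg:symmec:rec2}.

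First I would bound the local (non-recursive) work at Lines~\ref{alg:symmec:attrB}--\ref{alg:symmec:while3}. By Lemma~\ref{lem:attrRunning} computing $X=\Attr{R}{P[S]}{\ROut{S}}$ costs $O(|X|)=O(|S|)$ symbolic operations, and by Theorem~\ref{thm:SCCs} the SCC computation on the strongly connected set $S\setminus X$ of diameter $<2\gamma$ costs $O(|\SCCs{S\setminus X}|\cdot D)=O(|S|\cdot\gamma)$ operations, using that each SCC has diameter bounded by the diameter of $S$ which is $<2\gamma$. The crucial accounting step is then to sum this local cost over \emph{all} calls that reach the $T=\emptyset$ branch. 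Since each such call removes at least the nonempty attractor $X$ before recursing (so sizes strictly decrease along the recursion by Lemma~\ref{lem:corr:0}), and since the sets $S$ arising across sibling recursive calls at a given level are vertex-disjoint SCCs of a common parent, the sizes $|S|$ of the sets handled in this branch sum to $O(n)$ along any root-to-leaf chain and $O(n)$ across each ``level'' of the recursion. Multiplying the per-call $O(|S|\cdot\gamma)$ cost and summing telescopes to $O(n\cdot\gamma)$ for the diameter-driven SCC term.

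Next I would fold in the remaining contribution. The attractor-removal term and the bookkeeping at Lines~\ref{alg:symmec:initM}--\ref{alg:symmec:ret1} contribute only $O(|S|)$ per call and thus $O(n)$ summed over a level; but the recursive calls at Line~\ref{alg:symmec:rec2} feed back into the $T\neq\emptyset$ machinery already analyzed in Lemma~\ref{lem:running_rec2}, which is where the $\frac{n^2}{\lfloor\gamma/(2\log n)\rfloor}$ term enters. The cleanest route is to set up a recurrence analogous to the one in Claim~\ref{claim:fs}: letting $G(S)$ denote the total symbolic operations charged to Lines~\ref{alg:symmec:attrB}--\ref{alg:symmec:rec2} in the subtree rooted at $S$, I would write
\begin{equation*}
	G(S) \leq c\cdot|S|\cdot\gamma + \sum_{S_i\in\SCCs{S\setminus X}} G(S_i),
\end{equation*}
and argue by induction on $|S|$, using $\sum_i|S_i|\leq|S|$ and the depth bound on the recursion, that $G(S)\in O(|S|\cdot\gamma+\frac{|S|^2}{\lfloor\gamma/(2\log|S|)\rfloor})$. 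Substituting $|S|\leq n$ yields the claimed $O(n\gamma+\frac{n^2}{\lfloor\gamma/(2\log n)\rfloor})$ bound.

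The main obstacle I anticipate is the level-by-level charging argument for the diameter-dependent SCC cost. The naive bound, $O(|S|\cdot\gamma)$ per call summed over all calls, could blow up to $\Theta(n\cdot\gamma\cdot\text{depth})$ if one is not careful, since a single vertex may reappear in many nested recursive calls along one branch. The resolution is that whenever we enter the $T=\emptyset$ branch we strictly shrink $S$ by removing the nonempty attractor $X$, and more importantly that the SCCs recursed upon at Line~\ref{alg:symmec:rec2} are vertex-disjoint; so the total size processed at any fixed recursion depth is at most $n$, and the recursion depth on this branch is bounded because the diameter-$<2\gamma$ structure forces rapid size reduction. Making this telescoping precise—so that the $\gamma$ factor multiplies only a single $O(n)$ rather than compounding with depth—is the delicate part of the argument, and it is essentially what the induction in the recurrence above is designed to certify.
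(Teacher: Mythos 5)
Your proposal contains a genuine gap, and it sits exactly at the step the paper's proof is built to handle. You assert that the SCC computation at Line~\ref{alg:symmec:while3} costs $O(|S|\cdot\gamma)$ because ``each SCC has diameter bounded by the diameter of $S$, which is $<2\gamma$.'' This is false: deleting the attractor $X$ can destroy the short paths that made $\Diam{S}$ small, so an SCC of $S\setminus X$ can have diameter far larger than $2\gamma$ (picture a long cycle all of whose shortcuts pass through $X$). Since the cost in Theorem~\ref{thm:SCCs} is $O\bigl(\sum_{C}(D_C+1)\bigr)$ with $D_C$ the diameter of each \emph{output} SCC, your per-call bound does not hold. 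The paper's proof splits precisely on this: for SCCs $S_j$ with $\Diam{S_j}<2\gamma$ it charges the $O(\gamma)$ cost to a vertex that is then permanently removed (at most $n$ such computations, yielding the $O(n\gamma)$ term); for SCCs with $\Diam{S_j}\geq 2\gamma$ it charges $O(1)$ per vertex and observes that such a vertex can only be charged again after the separator in the ensuing $T\neq\emptyset$ call removes at least $\lfloor\gamma/(2\log n)\rfloor$ vertices from its SCC, so each vertex is charged $O\bigl(n/\lfloor\gamma/(2\log n)\rfloor\bigr)$ times. That amortization is where the $n^2/\lfloor\gamma/(2\log n)\rfloor$ term of \emph{this} lemma comes from; your proposal instead attributes it to Lemma~\ref{lem:running_rec2}, but that lemma bounds the disjoint line range \ref{alg:symmec:sep}--\ref{alg:symmec:ret2}, so the term must be produced within the present analysis.

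The recurrence you propose also does not certify the bound you want. From $G(S)\leq c\,|S|\,\gamma+\sum_i G(S_i)$ with only $\sum_i|S_i|\leq|S|-1$ guaranteed, the worst case is $\Theta(|S|^2\gamma)$ (one vertex removed per level, depth $\Theta(|S|)$), which exceeds $O\bigl(|S|\gamma+|S|^2/\lfloor\gamma/(2\log|S|)\rfloor\bigr)$ in the relevant parameter range. Vertex-disjointness of siblings bounds each recursion level by $n$, but the depth of the $T=\emptyset$ recursion can indeed be $\Theta(n)$ --- the paper says so explicitly in its symbolic-space discussion --- so the $\gamma$ factor \emph{would} compound with depth under your accounting. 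The missing idea is that the $O(\gamma)$ cost must be charged per small-diameter SCC computation to a permanently removed vertex (not $O(|S|\gamma)$ per call), with the large-diameter costs amortized via the separator as above. Your handling of the attractor computations (total $O(n)$ by charging removed vertices) does match the paper.
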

\begin{proof}
	If a vertex is in the set $X$ at Line~\ref{alg:symmec:attrB} it is not recursed upon or ever looked at again,
	thus we charge the symbolic operations of all attractor computations to the vertices in the attractor. 
	This adds up to a total of $O(n)$ symbolic operations by Lemma~\ref{lem:attrRunning}.
	Additionally, note that $|X|$ is non-empty, as otherwise $|S|$ is declared as EC in the 
	if-condition at Line~\ref{alg:symmec:if1}. Thus, Lines~\ref{alg:symmec:attrB}--\ref{alg:symmec:rec2}
	occur at most $n$ times.
	The number of symbolic operations for computing SCCs is in time $O(\sum_{C \in \SCCs{G}} (D_C +1))$ due to 
	Theorem~\ref{thm:SCCs}. We can distribute the costs to the SCCs according to the diameters of the SCCs.	
	We provide separate arguments for counting the costs for SCCs $S_j$ with  $\Diam{S_j} < 2 \gamma$ 
	and SCCs $S_j$ with  $\Diam{S_j} \geq 2 \gamma$.
    First, we consider the costs for SCCs $S_j$ with  $\Diam{S_j} < 2 \gamma$. 
    Computing the SCC $S_j$ costs $O(\gamma)$ operations and  
    the algorithm either terminates in the next step or at least one vertex is removed from the SCC via a separator or attractor.
    That is we have at most $n$ such SCCs computations and thus an overall cost of $O(n \gamma)$.
	Now we consider the costs for SCCs $S_j$ with  $\Diam{S_j} \geq 2 \gamma$. 
	Whenever computing such an SCC, we simply charge all its vertices for the costs of computing the SCC, 
	i.e., $O(1)$ for each vertex.
	For such an SCC the algorithm either terminates in the next step or a separator is computed and removed.
	We thus have that each vertex is charged again after at least $\lfloor \gamma /(2 \log |S|)\rfloor$
	many nodes are removed from its SCC\@. In total, each vertex is charged at most $\frac{|S|}{\lfloor \gamma /(2 \log |S|)\rfloor}$ many times.
	We get an $O(|S|\frac{|S|}{\lfloor \gamma /(2 \log |S|)\rfloor})$ upper bound for the SCCs with large diameter,
\end{proof}

Putting Lemma~\ref{lem:running_rec2} and Lemma~\ref{lem:running_rec3} together,
we obtain the $O(n \cdot \gamma + \frac{n^2}{\lfloor \gamma/(2\log n) \rfloor})$ bound for $\Call{SymMec}{S,\gamma,P}$ 
which  also applies to $\Call{SymbolicMEC}{P',\gamma}$ as the SCC-computations only require $O(n)$ operations.
\begin{proposition}\label{prop:symmec:time}\label{prop:metasymmec:time}
	$\Call{SymMec}{S,\!\gamma,\!P}$~and~$\Call{SymbolicMEC}{P',\!\gamma}$ both need $O(n \cdot \gamma + \frac{n^2}{\lfloor \gamma/(2\log n) \rfloor})$ symbolic operations.
\end{proposition}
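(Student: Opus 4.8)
The plan is to assemble the bound directly from Lemma~\ref{lem:running_rec2} and Lemma~\ref{lem:running_rec3}, which together already account for every line of $\Call{SymMec}{S,\gamma,P}$ summed over the whole recursion. I would first observe that the lines of Figure~\ref{alg:symmec} split into two disjoint and exhaustive groups: the preamble and EC-detection of Lines~\ref{alg:symmec:initM}--\ref{alg:symmec:ret1} together with the small-diameter branch of Lines~\ref{alg:symmec:attrB}--\ref{alg:symmec:rec2}, whose global cost is bounded in Lemma~\ref{lem:running_rec3} by $O(n\gamma + n^2/\lfloor \gamma/(2\log n)\rfloor)$; and the separator branch of Lines~\ref{alg:symmec:sep}--\ref{alg:symmec:ret2}, whose global cost is bounded in Lemma~\ref{lem:running_rec2} by $O(n^2/\lfloor \gamma/(2\log n)\rfloor)$. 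Since no line is counted twice and none is omitted, the total number of symbolic operations of $\Call{SymMec}{S,\gamma,P}$ is at most the sum of these two bounds.

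Adding them and simplifying gives $O(n^2/\lfloor \gamma/(2\log n)\rfloor) + O(n\gamma + n^2/\lfloor \gamma/(2\log n)\rfloor) = O(n\gamma + n^2/\lfloor \gamma/(2\log n)\rfloor)$, which is the claimed bound for $\Call{SymMec}{}$.

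For $\Call{SymbolicMec}{P',\gamma}$ (Figure~\ref{alg:metasymmec}) I would then account for the only work not performed inside $\Call{SymMec}{}$, namely the two SCC-decomposition loops. Stage~1 computes the SCCs $C_1,\dots,C_\ell$ of $P'$ and stage~2 computes the SCCs of $M$; by Theorem~\ref{thm:SCCs} each of these decompositions costs $O(n)$ symbolic operations, and the per-SCC processing (the union into $M$ or into $\textsc{MECs}$) is constant. Hence the overhead beyond the $\Call{SymMec}{}$ calls is $O(n)$, which is dominated by the bound above. It remains to note that stage~1 issues one top-level call $\Call{SymMec}{C_i,\gamma,P}$ per SCC $C_i$ rather than a single call on all of $V'$; since the $C_i$ are pairwise disjoint, summing the per-component estimate of Claim~\ref{claim:fs} over them and using $\sum_i |C_i| \le n$, $\sum_i |C_i|^2 \le n^2$, together with $\lfloor\gamma/(2\log|C_i|)\rfloor \ge \lfloor\gamma/(2\log n)\rfloor$ (as $|C_i|\le n$), keeps the total within $O(n\gamma + n^2/\lfloor \gamma/(2\log n)\rfloor)$.

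The argument carries essentially no risk, since all the amortized charging and the recurrence solved in Claim~\ref{claim:fs} have been discharged in the two preceding lemmas. The only points that require care are the exhaustiveness and disjointness of the line-partition, so that operations are neither double-counted nor dropped, and the convexity step $\sum_i |C_i|^2 \le (\sum_i |C_i|)^2 \le n^2$ used to reduce the $\ell$ top-level calls of $\Call{SymbolicMec}{}$ to the single-call bound. These are the parts I would write out explicitly.
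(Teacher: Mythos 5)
Your proposal is correct and follows essentially the same route as the paper, which obtains the bound by combining Lemma~\ref{lem:running_rec2} (the separator branch) with Lemma~\ref{lem:running_rec3} (the preamble and small-diameter branch) and noting that the extra SCC computations in $\Call{SymbolicMec}{P',\gamma}$ cost only $O(n)$ symbolic operations by Theorem~\ref{thm:SCCs}. Your added care in summing the per-component bound over the $\ell$ disjoint top-level SCCs, using $\sum_i |C_i|^2 \le n^2$ and $\lfloor\gamma/(2\log|C_i|)\rfloor \ge \lfloor\gamma/(2\log n)\rfloor$, is a harmless elaboration of a step the paper leaves implicit.
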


\subsection{Symbolic Space} 
Symbolic space usage counted as the maximum number of sets (and not their size) at any point in time is a crucial metric and limiting factor of symbolic computation in practice~\cite{ClarkeGP99}.
In this section, we consider the symbolic space usage of $\Call{SymMec}{S,\gamma,P}$, highlight a key issue
and present a solution to the issue. 

\para{Key Issue.}
Even though $\Call{SymMec}{S,\gamma,P}$ beats the current best \emph{symbolic} Algorithm for computing
the MEC decomposition in the number of symbolic operations (current best: $O(n \sqrt{m})$, space: $O(\sqrt{n})$~\cite{CHLOT18}),
without further improvements, $\Call{SymMec}{S,\gamma,P}$ requires $O(n)$ symbolic space as we discuss in the following.
First, note that each call of $\Call{SymMec}{S,\gamma,P}$, when excluding the sets stored by recursive calls, only stores a constant number of sets and requires a logarithmic number of sets to execute $\SCCFind{}{\cdot}$.
That is, the recursion depth is the crucial factor here. 
As we show below, by the $\lfloor \gamma/(2\log n)\rfloor $-separator property, the recursion depth due to the case $T\not=\emptyset$ is  $O\left( n / \lfloor \gamma /(2\log n)\rfloor\right)$.
However, the case $T=\emptyset$ might lead to a recursion depth of $O(n)$ when 
in each iteration only a constant number of vertices is removed
and the diameter of the resulting SCC is still smaller than $2\gamma$.
As $\Call{SymMec}{S,\gamma,P}$ uses a constant amount of sets for each recursive call, 
it uses $\O(n)$ space in total.

\para{Reducing the symbolic space.}
We resolve the above space issue by modifying $\Call{SymMec}{S,\gamma,P}$ for the case $T=\emptyset$
(see Figure~\ref{alg:symmec_reduced_space}. ):
At the while loop at Line~\ref{alg:symmec:while3} we first consider the SCCs $S_j$ 
with less than $|S|/2$ vertices and recurse on them.
If there is an SCC $S'$ with more than $|S|/2$ vertices we process it at the end, i.e., we use one additional set to store that SCC until the SCC algorithm terminates.
As now all the computations of the current call to $\Call{SymMec}{S,\gamma,P}$ are done we can 
simply reuse the sets of the current calls to start the computation for $S'$. 
We do so by setting $S$ to $S'$ and continuing in the Line~1 of $\Call{SymMec}{S,\gamma,P}$.
Using that we only recurse on sets which are of size $\leq |S|/2$
and thus get a recursion depth of $O(\log n)$ for this case.
Moreover, the modified algorithm has the same computation operations as the original one and thus 
the bounds for the number of symbolic operations apply as well. 

\begin{figure}[t]
\begin{algorithmic}[1]
	\setalglineno{25}
	\EElse{} \Comment{$T = \emptyset$ and thus  $\Diam{S} < 2\gamma$}
	\State $X \gets \Attr{R}{P[S]}{\ROut{S}}$; $S_{tmp} = \emptyset$;\label{alg:symmec_reduced_space:attrB}
		\While{$S_j \gets \SCCFind{P}{S\setminus X, \emptyset}$}\label{alg:symmec_reduced_space:while3}
			\lIf{$S_j \geq |S|/2$ and $|S_j|>1$} $S_{tmp} \gets S_j$; \Continue;
			\State $M \gets M \cup \Call{SymMec}{S_j,M, \gamma,P}$\label{alg:symmec_reduced_space:rec2}
		\EndWhile	
		\lIf{$S_{tmp} \neq \emptyset$} $S\gets S_{tmp}$; goto Line~3;
		\State \Return{$M$}; \label{alg:symmec_reduced_space:ret3}
		\EndEElse{}
\end{algorithmic}
\caption{Reducing the symbolic space of the $\protect\Call{Symmec}{\cdot}$ procedure}\label{alg:symmec_reduced_space}
\end{figure}

\begin{lemma}\label{lem:recursiondepth}
	The maximum recursion depth of the modified algorithm is in  
	$O(\frac{n}{\lfloor \gamma /(2\log n)\rfloor} + \log n)$.
\end{lemma}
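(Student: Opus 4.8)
The plan is to bound the recursion depth by analyzing the two sources of recursion separately and showing they compose additively rather than multiplicatively. There are two recursive call sites in the modified algorithm: the calls at Line~\ref{alg:symmec:rec1} in the case $T \neq \emptyset$, and the calls at Line~\ref{alg:symmec_reduced_space:rec2} together with the \texttt{goto} in the case $T = \emptyset$. I would first establish a size-decrease guarantee for each type of step and then argue that any root-to-leaf path in the recursion tree can be decomposed into a bounded number of ``$T \neq \emptyset$'' steps and a bounded number of ``$T = \emptyset$'' steps.

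First I would handle the case $T \neq \emptyset$. By Lemma~\ref{lem:sepqual} the separator $T$ is a $\lfloor \gamma/(2\log n)\rfloor$-separator of $S$, so by the separator definition each SCC $S_j$ of $S \setminus \Attr{R}{P[S]}{T}$ has at most $|S| - \lfloor \gamma/(2\log n)\rfloor \cdot |T|$ vertices, and since $|T| \geq 1$ this is at most $|S| - \lfloor \gamma/(2\log n)\rfloor$. Thus every recursive call of this type strictly reduces the vertex count by at least $\lfloor \gamma/(2\log n)\rfloor$. Consequently, along any path in the recursion tree, there can be at most $O\!\left(n / \lfloor \gamma/(2\log n)\rfloor\right)$ recursion steps that arise from a $T \neq \emptyset$ call, because the size can drop by the full additive amount only that many times before reaching the base case $|S| \leq 1$.

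Next I would handle the case $T = \emptyset$. Here the key is the space-reducing modification: for SCCs $S_j$ with $|S_j| < |S|/2$ we recurse normally, while the unique large SCC $S'$ with $|S'| \geq |S|/2$ is processed in place via the \texttt{goto} rather than through a new recursive frame. The recursive calls at Line~\ref{alg:symmec_reduced_space:rec2} are therefore always on sets of size at most $|S|/2$, so each such step halves the vertex bound; this contributes at most $O(\log n)$ steps along any path. The in-place \texttt{goto} reuses the current frame and so does not increase recursion depth at all, though I must note that the set $S$ it continues on is smaller than the previous one, so these loops also terminate.

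The main obstacle, and the point requiring care, is arguing that these two bounds \emph{add} rather than \emph{multiply} to give $O\!\left(n/\lfloor \gamma/(2\log n)\rfloor + \log n\right)$. I would argue this by observing that every step either strictly reduces $|S|$ by at least $\lfloor \gamma/(2\log n)\rfloor$ (the $T\neq\emptyset$ case) or at least halves $|S|$ (the $T=\emptyset$ recursive case), while the \texttt{goto} steps add no depth. Since $T\neq\emptyset$ steps and halving steps each reduce size along a root-to-leaf path, and the two potential-style measures $n / \lfloor \gamma/(2\log n)\rfloor$ (counting additive decreases) and $\log n$ (counting halvings) are bounded independently, their contributions to the depth of any single path sum to the claimed bound. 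I would therefore conclude that the total recursion depth is $O\!\left(\frac{n}{\lfloor \gamma/(2\log n)\rfloor} + \log n\right)$, completing the proof of Lemma~\ref{lem:recursiondepth}.
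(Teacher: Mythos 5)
Your proof is correct and takes essentially the same approach as the paper's: you bound the $T\neq\emptyset$ steps by $O\!\left(n/\lfloor\gamma/(2\log n)\rfloor\right)$ via the separator's size-decrease guarantee, and the $T=\emptyset$ steps by $O(\log n)$ via the halving property of the modified algorithm (with the in-place goto adding no depth). Your explicit argument that the two counts add along any root-to-leaf path—since both step types only shrink $|S|$ and are bounded by independent potential measures—is a nice elaboration of a point the paper leaves implicit.
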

 \begin{proof}
     Consider the recursion occurring due to Line~\ref{alg:symmec:rec1}.
 	Because $T$ is a $\lfloor \gamma/(2\log n)\rfloor$-separator, an SCC in $S \setminus T$
 	contains at most $n-\lfloor \gamma/(2\log n)\rfloor \cdot |T| \leq n-\lfloor \gamma/(2\log n)\rfloor$
 	 vertices as $|T| \geq 1$. We determine how often we can remove
 	$\lfloor \gamma/(2\log n)\rfloor$ from $n$ until there are no vertices 
 	left which gives the recursion depth $k$.  It follows that $k=
 	\frac{n}{\lfloor \gamma /(2\log n)\rfloor}$.
 	Now consider the recursion occurring due to Line~\ref{alg:symmec:rec2}. In the modified version we have that $|S_j| < |S|/2$ 
 	and thus this kind of recursion is bounded by $O(\log n$).
 \end{proof}

$\Call{SymMec}{S,\gamma,P}$ has $O(n\gamma + n^2/\lfloor \gamma/(2\log n) \rfloor)$ many symbolic
operations due to Lemma~\ref{lem:running_rec2} and Lemma~\ref{lem:running_rec3} and the number of sets is
in $O(n/\lfloor \gamma /(2\log n)\rfloor + \log n)$ due to Lemma~\ref{lem:recursiondepth}.
In symbolic algorithms it is of particular interest to optimize symbolic space resources. Note that we obtain
a space-time trade-off when setting the parameter $\gamma$ such that $(2\sqrt{n}+ 2) \log n \leq
\gamma \leq n$.

For the symbolic space of $\Call{SymbolicMEC}{P',\gamma}$ notice that the SCC algorithms are in logarithmic symbolic space and the algorithm itself only needs to store the set $M$ and the current SCC\@. 
Thus, when we immediately output the computed MECs it only requires $O(\log n)$ additional space.

\begin{theorem}\label{thm:symbolicmecTST}
	The MEC decomposition of an MDP can be computed in $O\left(n^{2-\epsilon} \log n\right)$
	symbolic operations and with symbolic space $O\left(n^{\epsilon} \log n \right)$ for $0 < \epsilon \leq 0.5$.
\end{theorem}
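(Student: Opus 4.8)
The plan is to obtain this theorem as an almost immediate consequence of the two quantitative facts already in hand, so that the entire argument reduces to a single good choice of the separator parameter $\gamma$ followed by routine arithmetic. The two facts are: (i) the operation bound of Proposition~\ref{prop:symmec:time}, namely $O(n\gamma + n^2/\lfloor\gamma/(2\log n)\rfloor)$ symbolic operations for $\Call{SymbolicMEC}{P',\gamma}$ (unaffected by the space modification, since that modification performs the same symbolic operations); and (ii) the space bound $O(n/\lfloor\gamma/(2\log n)\rfloor + \log n)$ stated just before the theorem, which follows from the recursion-depth estimate of Lemma~\ref{lem:recursiondepth} together with the facts that each active call keeps only $O(1)$ working sets, that a single $\SCCFind{}{\cdot}$ needs $O(\log n)$ sets, and that the outer loops of $\Call{SymbolicMEC}{\cdot}$ add only $O(\log n)$. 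The only remaining freedom is $\gamma$, so I would simply select it as a function of $\epsilon$.

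Concretely, given $0<\epsilon\le 1/2$ I would set $\gamma=\lceil n^{1-\epsilon}\rceil$ and first check admissibility. Since $\epsilon\le 1/2$ gives $\gamma\ge\sqrt{n}\ge 4\log n$ for all sufficiently large $n$, the quantity $q:=\lfloor\gamma/(2\log n)\rfloor$ satisfies $q\ge 1$, so the separator-quality guarantee of Lemma~\ref{lem:sepqual} and the recursion-depth bound of Lemma~\ref{lem:recursiondepth} both apply; moreover $\lfloor\gamma/(2\log n)\rfloor=\Theta(\gamma/\log n)=\Theta(n^{1-\epsilon}/\log n)$, so the floor may be dropped up to constants. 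Substituting into the space bound then yields $O(n/q+\log n)=O(n^{\epsilon}\log n+\log n)=O(n^{\epsilon}\log n)$, which is exactly the claimed symbolic space.

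For the operation count I would substitute the same $\gamma$ into Proposition~\ref{prop:symmec:time}: the first term gives $n\gamma=O(n^{2-\epsilon})$ and the second gives $n^2/q=O(n^{1+\epsilon}\log n)$, so the total is $O(n^{2-\epsilon}+n^{1+\epsilon}\log n)$. The crux of the whole theorem is the observation that the restriction $\epsilon\le 1/2$ is precisely what makes the second term absorbed by the first: from $\epsilon\le 1/2$ we get $1+\epsilon\le 2-\epsilon$, hence $n^{1+\epsilon}\log n\le n^{2-\epsilon}\log n$, and the bound collapses to $O(n^{2-\epsilon}\log n)$. This single inequality both proves the operation bound and explains why the trade-off can only be stated on the range $\epsilon\in(0,1/2]$.

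I do not expect any deep obstacle here, since the heavy lifting is done by Proposition~\ref{prop:symmec:time} and Lemma~\ref{lem:recursiondepth}; the only delicate point is the bookkeeping of competing terms and of the $\log n$ factors introduced by $q=\lfloor\gamma/(2\log n)\rfloor$. I would take care to verify that rounding $\gamma$ to an integer and replacing the floor by $\Theta(\gamma/\log n)$ leaves the asymptotics unchanged, and that the additive $O(\log n)$ space contributions (from $\SCCFind{}{\cdot}$ and from the outer loops of $\Call{SymbolicMEC}{\cdot}$) are genuinely dominated by $n^{\epsilon}\log n$. Once these checks are in place, combining the space estimate $O(n^{\epsilon}\log n)$ with the operation estimate $O(n^{2-\epsilon}\log n)$ for the chosen $\gamma=\lceil n^{1-\epsilon}\rceil$ gives the stated time--space trade-off for every $0<\epsilon\le 1/2$.
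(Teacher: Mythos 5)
Your proof is correct and follows essentially the same route as the paper: the theorem is obtained by instantiating the operation bound of Proposition~\ref{prop:symmec:time} and the stated space bound $O\bigl(n/\lfloor\gamma/(2\log n)\rfloor+\log n\bigr)$ with a suitable $\gamma$, and your inequality $1+\epsilon\le 2-\epsilon$ is precisely why the trade-off is restricted to $\epsilon\le 1/2$. The only cosmetic difference is the parametrization: the paper suggests $\gamma$ in the range $(2\sqrt{n}+2)\log n\le\gamma\le n$ (i.e., roughly $\gamma=\Theta(n^{1-\epsilon}\log n)$, giving recursion depth $O(n^{\epsilon})$ with a $\log n$ of slack), whereas your $\gamma=\lceil n^{1-\epsilon}\rceil$ gives depth $\Theta(n^{\epsilon}\log n)$ and meets the claimed space bound exactly, with no slack.
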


By setting $\epsilon = 0.5$ we obtain 
that the MEC decomposition of an MDP can be computed in $\O(n\sqrt{n})$ symbolic operations and with symbolic space $\O(\sqrt{n})$.

\section{Symbolic Qualitative Analysis of Parity Objectives}
In this section we present symbolic algorithms for the qualitative analysis of parity objectives. 
To present the algorithms we first introduce further definitions and notation related to parity
objectives.

\paragraph{Plays and strategies}
An infinite path or a \emph{play} of an MDP $P$ is an infinite sequence $\omega = \langle v_0, v_1,
\dots \rangle$ of vertices such that $(v_k, v_{k+1}) \in E$ for all $k \in \mathbb{N}$.
We write $\Omega$ for the set of all plays.
A strategy for player~1 is a function $\sigma: V^* \cdot V_1 \rightarrow V$ that chooses the 
successor for all finite sequences $\vec{w} \in V^* \cdot V_1$ of vertices ending in a player-1
vertex (the sequence represents a prefix of a play). A strategy must respect the edge relation: for
all $\vec{w} \in V^*$ and $v \in V_1$ we have $(v, \sigma(\vec{w} \cdot v)) \in E$.
Player~1 follows the strategy $\sigma$ if, in each player-1 move, given that the current history of
the game is $\vec{w} \in V^* \cdot V_1$, she chooses the next vertex according to $\sigma(\vec{w})$.
We denote by $\Sigma$ the set of all strategies for player~1. A \emph{memoryless} player-1 strategy
does not depend on the history of the play but only on the current vertex: For all $\vec{w},
\vec{w'} \in V^*$ and for all $v\in V_1$ we have $\sigma(\vec{w} \cdot v) = \sigma(\vec{w'}\cdot
v)$. A memoryless strategy can be represented as a function $\sigma: V_1 \rightarrow V$. From now
on,
we consider only the class of memoryless strategies. Once a starting vertex $v \in V$ and a strategy
$\sigma \in \Sigma$ is fixed, the outcome of the MDP $P$ is a random walk $\omega_v^\sigma$ for which
the probabilities of events are uniquely defined. An \emph{event} $\A \subseteq \Omega$ is a
measurable set of plays. For a vertex $v \in V$ and an event $\A \subseteq \Omega$, we write
$\Pr_v^{\sigma}(\A)$ for the probability that a play belongs to $\A$ if the game starts from the
vertex $v$ and player~1 follows the strategy $\sigma$.

\paragraph{Objectives}
We define \emph{objectives} for player~1 as a set of plays $\Phi \subseteq
\Omega$. We say that a play $\omega$ satisfies the objective $\Phi$ if $\omega \in \Phi$. We
consider $\omega$-regular objectives~\cite{thomas97}, specified as parity conditions and reachability
objectives which are an important subset of $\omega$-regular objectives. 

\smallskip\noindent\emph{Reachability objectives.}
Given a set $T \subseteq V$ of ``target'' vertices, the reachability objective requires that
\emph{some} vertex of $T$ be visited. The set of winning plays is in $\Reach{T} = \{ \langle v_0,
	v_1, \dots \rangle \in \Omega \mid v_k \in T \text{ for some } k \geq 0 \}$.

\smallskip\noindent\emph{Parity objectives.}
A parity objective consists of a priority function which assigns an \emph{priority} to each vertex,
i.e, $p:V \mapsto \{0,1,\dots,2d\}$ where $d\in \mathbb{N}$. For a play $\omega = \langle v_0,v_1,
\dots \rangle \in \Omega$, we define $\Inf{\omega} = \{ v \in V \mid v_k = v \text { for infinitely
		many k }\}$ to be the set of vertices that occur infinitely often in $\sigma$. The
\emph{parity objective} is defined as the set of plays such that the minimum priority occurring
infinitely often is even, i.e., $\Parity{p} = \{ \omega \in \Omega \mid \min_{v \in
		\Inf{\omega}}{p(v)}
	\text { is even} \}$. 

\smallskip\noindent\emph{Qualitative analysis: almost-sure winning.}
Given a parity objective $\Parity{p}$, a strategy $\sigma \in \Sigma$ is \emph{almost-sure winning}
for player~1 from the vertex $v$ if $\Pr_v^\sigma(\Parity{p}) = 1$.
The \emph{almost-sure winning set} $\ASW{\Parity{p}}$ for player-1 is the set of vertices
from which player~1 has an almost-sure winning strategy. The qualitative analysis of MDPs
corresponds to the computation of the almost-sure winning set for $\Parity{p}$. It follows from the
results of~\cite{CY95,deAlfaroThesis} that for all MDPs and parity objectives, if there is an almost-sure winning
strategy, then there is a memoryless almost-sure winning strategy.

\begin{theorem}[\cite{CY95,deAlfaroThesis}]
	For all MDPs $P$, and all parity objectives $\Parity{p}$, there exists a memoryless strategy 
	$\sigma$ such that for all $v \in \ASW{\Parity{p}}$ we have $\Pr_v^\sigma(\Parity{p}) = 1$.
\end{theorem}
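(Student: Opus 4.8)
The plan is to reduce almost-sure satisfaction of the parity objective to almost-sure reachability of a ``good'' end-component, and then to assemble a single memoryless strategy from two layers: (i) memoryless strategies that are recurrent inside good end-components, and (ii) a memoryless almost-sure reachability strategy toward the union of all good end-components. First I would invoke the fundamental property of MDPs (from \cite{CY95,deAlfaroThesis}): under any strategy $\sigma$ and from any start vertex, with probability $1$ the set $\Inf{\omega}$ of vertices visited infinitely often is an end-component, and conditioned on $\Inf{\omega}=U$ the play visits every vertex of $U$ infinitely often. Call an end-component $U$ \emph{good} if $\min_{v\in U}p(v)$ is even. Then a play $\omega$ satisfies $\Parity{p}$ iff $\Inf{\omega}$ is a good end-component, so $\Pr_v^\sigma(\Parity{p})=1$ iff under $\sigma$ the play almost-surely settles in a good end-component. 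Hence $v\in\ASW{\Parity{p}}$ iff player~1 can almost-surely reach the set $Z$ of vertices lying in good end-components and thereafter remain winning.

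Next I would build the witnessing memoryless strategy in the two layers. Inside a good end-component $U$, since $U$ is strongly connected and every random vertex keeps all its edges inside $U$, player~1 can fix a memoryless choice at each of her vertices in $U$ that keeps the play in $U$ and makes the induced Markov chain on $U$ irreducible; such a chain is recurrent, so every vertex of $U$---in particular one of minimum (even) priority---is visited infinitely often almost surely, and $U$ is won. For the reachability layer I would use the standard fact that almost-sure reachability objectives in MDPs admit a uniform memoryless optimal strategy, yielding a memoryless $\sigma_{\mathrm{reach}}$ that almost-surely drives every vertex of the reachability set to $Z$. The final strategy $\sigma$ is then defined per vertex: on the chosen good end-component containing each vertex of $Z$ use the recurrence strategy, and on $\ASW{\Parity{p}}\setminus Z$ use $\sigma_{\mathrm{reach}}$. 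Because the assignment is made vertex-by-vertex it is memoryless and uniform over all $v\in\ASW{\Parity{p}}$.

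The main obstacle is making the two layers consistent so that the single memoryless strategy neither leaks probability out of the winning region nor oscillates between different good end-components. I would handle this via the MEC decomposition: restrict attention to the sub-MDP induced by $\ASW{\Parity{p}}$, which player~1 can confine the play within (random vertices in the winning region have all edges inside it, a standard closure property). Within each maximal end-component that contains a good sub-end-component I would first assign an almost-sure reachability strategy toward a fixed minimum-even-priority good sub-end-component and then the recurrence strategy inside it, so that every vertex of $Z$ is committed to a single good end-component; elsewhere in $\ASW{\Parity{p}}$ I would use $\sigma_{\mathrm{reach}}$. Verifying that this committed, per-vertex assignment keeps the play inside the chosen good end-component once entered, and that the reachability phase reaches $Z$ with probability one from all of $\ASW{\Parity{p}}$, is the crux. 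The classical alternative---an induction on the number of priorities that peels off the least priority together with its attractor---avoids the stitching but relies on the same fundamental end-component lemma as its engine.
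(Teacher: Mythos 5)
Note first that the paper does not prove this theorem: it is imported from \cite{CY95,deAlfaroThesis}, so your attempt must be measured against the classical proof. Your architecture is exactly that classical route, and it matches the machinery the paper itself deploys later: your set $Z$ of vertices in good end-components plays the role of $\WE$, your reduction $\ASW{\Parity{p}}=\ASW{\Reach{Z}}$ is the paper's Lemma~\ref{lem:paritywinningmec}, and the two-layer stitching (memoryless almost-sure reachability toward $Z$, then a committed memoryless strategy inside one good end-component per MEC) is the standard construction. Your treatment of the consistency issue is also sound: since the in-$Z$ choices only matter after the play enters $Z$, replacing the reachability strategy there does not perturb the distribution of the prefix, so the composition stays almost-surely winning.

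One step, however, is wrong as stated: inside a good end-component $U$, a memoryless player-1 selection cannot in general make the induced Markov chain on $U$ \emph{irreducible}. Take $U=\{a,b,c\}$ with all three vertices in $V_1$ and edges $a\to b$, $a\to c$, $b\to a$, $c\to a$; this is strongly connected, but any memoryless choice at $a$ leaves one of $b,c$ unreachable, so no memoryless selection induces an irreducible chain, and your appeal to recurrence of \emph{every} vertex of $U$ fails. The gap is local and repairable, because irreducibility is more than you need: fix a single vertex $u\in U$ of minimum (even) priority and use the distance-decreasing memoryless strategy toward $u$ within $U$ (available since $U$ is strongly connected and, $U$ being an end-component, random vertices keep all their edges inside $U$). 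From every vertex of $U$ the play then remains in $U$ and reaches $u$ within $|U|$ steps with probability bounded below by a positive constant, hence visits $u$ infinitely often almost surely; since $\Inf{\omega}\subseteq U$ contains $u$ and $p(u)=\min_{w\in U}p(w)$ is even, the minimum priority occurring infinitely often is even almost surely. With that substitution (recurrence of the one minimum-priority vertex rather than irreducibility of the chain), your proposal is the complete standard proof.
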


\subsection{Almost-sure Reachability.}
In this section we present a symbolic algorithm which computes reachability 
objectives $\Reach{T}$ in an MDP\@. 
The algorithm is a symbolic version of~\cite[Theorem 4.1]{CDHL16}.

\smallskip\noindent\emph{Graph Reachability.}
For a graph $G = (V,E)$ and a set of vertices $S$, the set 
$\GraphReach{G}{S}$ is the set of vertices $V$ that can reach a vertex of $S$ within $G$. We compute it 
by repeatedly calling $S \gets \Pre{E}{S}$ until we reach a fixed point.
In the worst case, we add one vertex in each such call and need $|\GraphReach{G}{S} \setminus S| +1
= O(n)$ 
many \Pre{E}{\cdot} operations to reach a fixed point.

\smallskip\noindent\emph{Algorithm Description.}
Given an MDP $P$ we compute the set $\ASW{\Reach{T}}$ as follows:
First, if player~1 can reach one vertex of a MEC he can reach all the vertices of a MEC and thus 
we can collapse each MEC $M$ to a vertex.
If $M$ contains a vertex of $T$, we set $T \gets T \cup M$, i.e. $T$ contains the vertex which
represents $M$.
$P' = (V',E',\langle V'_1,V'_2 \rangle, \delta')$ is the MDP where the MECs of $P$ are collapsed as described above.
Next, we compute the set of vertices $S$ which can reach $T$ in the graph induced by $P'$. 
A vertex in $V' \setminus S$ cannot reach $T$ almost-surely because there is no path to $T$.
Note that a play starting from a vertex in the random attractor $A$ of $V' \setminus S$ might also end up in $V' \setminus S$. We
thus remove $A$ from $V'$ to obtain the set $R$, where vertices can almost-surely
reach $T$ in $P'$. Finally, to transfer the result back to $P$ we include all MECs with a vertex in $R$. 

We implement $\Call{SymbolicMEC}{P',\gamma}$ to compute the MEC decomposition of $P$ but note that we could use any symbolic MEC algorithm. 
To minimize the extra space usage we also assume that the algorithm which computes the
MEC decomposition outputs one MECs after the
other instead of all MECs at once. 
Note that we can easily modify $\Call{SymbolicMEC}{P',\gamma}$ to output one MEC after the
other by iteratively returning each SCC found at Line~\ref{alg:metasymmec:while2}. 
Moreover, we only require logarithmic space to maintain the state of the SCC
algorithm~\cite{ChatterjeeDHL18}.

\begin{figure}
	\begin{algorithmic}[1]
		\Procedure{SymASReach}{$T, P$}
		\State $V' \gets V; E' \gets E; V'_1 \gets V_1; V'_R \gets V_R;  \delta' \gets \delta$;
		\State $P'=(V',E', \langle V'_1, V'_R \rangle, \delta')$;
		\For{$M \gets \Call{ComputeMECs}{P}$}
			\State$\Call{CollapseEC}{M,P'}$; 
			\lIf{$M \cap T \neq \emptyset$} $T \gets T \cup M$;
		\EndFor	
		\State $S \gets \GraphReach{P'}{T \cap V'}$; $A \gets \Attr{R}{P'}{V'\setminus S}$; $R \gets V' \setminus A$;
		\For{$M \gets \Call{ComputeMECs}{P}$}
			\lIf{$M \cap R \neq \emptyset$} $R \gets R \cup M$;
		\EndFor	
		\State \Return $R$;
		\EndProcedure
	\end{algorithmic}
	\caption{Computing $\ASW{\Reach{T}}$ symbolically}\label{alg:symasreach}
\end{figure}

We prove the following two propositions for $\Call{SymASReach}{T,P}$.
Let $P$ be an MDP, $T$ a set of vertices and $\mathbf{MEC}$ 
be the number of symbolic operations we need to compute the MEC decomposition.
Let $\aspace(\mathbf{MEC})$ denote the space of computing the MEC decomposition.

\begin{proposition}[Correctness~\cite{ChatterjeeDHL18}]\label{prop:asreachcorr}
	$\Call{SymASReach}{T,P}$ correctly computes the set $\ASW{\Reach{T}}$.
\end{proposition}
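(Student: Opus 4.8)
The plan is to view $\Call{SymASReach}{T,P}$ as a symbolic realization of the classical almost-sure reachability algorithm (\cite[Theorem 4.1]{CDHL16}) and to verify (i) that each symbolic operation computes the intended set and (ii) that the computed set is exactly $\ASW{\Reach{T}}$. Write $P'=(V',E',\ldots)$ for the MDP obtained after the first loop, where every MEC of $P$ has been collapsed via $\Call{CollapseEC}{\cdot}$ and $T$ has been extended so that $T\cap V'$ contains the representative of every MEC meeting the original target. By repeatedly applying Lemma~\ref{lem:collapse}(a), $P'$ has no non-trivial end components, and for a vertex $u\notin\bigcup\AEC{P}$ its almost-sure status in $P$ coincides with that of $u$ in $P'$, while for $u$ in a MEC $M$ the status in $P$ equals that of the representative of $M$ in $P'$. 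Thus it suffices to prove two facts: (A) in the \emph{quotient} $P'$ the almost-sure winning set equals $R=V'\setminus\Attr{R}{P'}{V'\setminus S}$ with $S=\GraphReach{P'}{T\cap V'}$, the set the algorithm actually computes before the final loop; and (B) the final loop correctly lifts $R$ back to $P$.

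For fact (A) I would argue both inclusions. For completeness (every vertex outside $R$ is losing): by definition $V'\setminus S$ is the set of vertices with \emph{no} path to the target, so once a play enters it the target is never reached. A straightforward induction on the attractor levels of $A=\Attr{R}{P'}{V'\setminus S}$ shows that from any $v\in A$, \emph{under every} player-1 strategy, the play reaches $V'\setminus S$ with positive probability: random vertices in $A$ have, by the attractor definition, an edge to a lower level (taken with positive probability), and player-1 vertices in $A$ have \emph{all} edges into $A$, so they cannot escape. Hence $\Pr_v^\sigma(\Reach{T})<1$ for all $\sigma$. For soundness (every $v\in R$ is winning): the key structural observation is that $R$ is \emph{closed}, i.e.\ every random vertex of $R$ has all its edges in $R$ and every player-1 vertex of $R$ has at least one edge in $R$ (otherwise it would lie in $A$). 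Player~1 fixes a memoryless strategy that always stays in $R$; if such a play never reached $T$, the set of infinitely visited vertices would, almost surely, form an end component of $P'$ contained in $R\setminus T$. Since $P'$ has no non-trivial end components and vertices have no self-loops, no such end component exists, so the play reaches $T$ with probability~$1$.

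Fact (B) and the MEC reduction rest on the defining property of a MEC, that player~1 can almost-surely reach every vertex of a MEC from every other vertex of it; this gives soundness of collapsing (a winning representative makes the whole MEC winning, so adding $M$ whenever $M\cap R\neq\emptyset$ is correct, using that $M\cap V'$ is exactly the representative), and Lemma~\ref{lem:collapse} gives the converse. I expect the main obstacle to be the soundness direction of fact~(A): turning the closedness of $R$ into a genuine \emph{probability-one} guarantee requires the end-component characterization of the limit behavior of MDPs (almost surely the set of infinitely-visited vertices is an end component) rather than a mere reachability-within-$R$ argument, and care is needed because a shortest path from a vertex of $R$ to $T$ need not stay inside $R$. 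The remaining bookkeeping---checking that $\GraphReach{P'}{\cdot}$, $\Attr{R}{P'}{\cdot}$ and $\Call{CollapseEC}{\cdot}$ implement the intended set operations---is routine given Lemma~\ref{lem:attr_remove} and the semantics of the one-step operations.
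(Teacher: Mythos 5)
The paper never actually proves this proposition: it is stated with a citation to \cite{ChatterjeeDHL18} (the algorithm being a symbolic rendering of \cite[Theorem~4.1]{CDHL16}), so there is no in-paper argument to compare yours against, and a full reconstruction is exactly what is called for. Your reconstruction has the right skeleton, and your fact~(A) is argued with the correct tools: the attractor-level induction gives that every vertex of $A=\Attr{R}{P'}{V'\setminus S}$ reaches the no-path region $V'\setminus S$ with positive probability under \emph{every} strategy, and for the other inclusion the closedness of $R=V'\setminus A$, the fundamental theorem of end components (almost surely $\Inf{\omega}$ is an end component), the absence of non-trivial ECs in the quotient, and the no-self-loop assumption (which rules out singleton limit sets) combine exactly as you say. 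You are also right that an EC-based argument, not a path-following one, is what this soundness direction requires.

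Two steps, however, are not justified as written. The minor one: ``$P'$ has no non-trivial ECs by repeatedly applying Lemma~\ref{lem:collapse}(a)'' is not enough, since part~(a) only covers ECs disjoint from the collapsed sets; for an EC $D'$ of $P'$ containing a representative $v$ of a MEC $X$ you need part~(c), which un-collapses $D'$ to $D'\cup X\in\AEC{P}$, a strict superset of $X$, contradicting maximality of $X$. The substantive one is your fact~(B): Lemma~\ref{lem:collapse} is a purely structural statement about which vertex sets are ECs; it carries no information about strategies or probabilities, so it cannot ``give the converse'', i.e., the implication that a vertex that is almost-sure winning in $P$ has its image inside the set $R$ computed for $P'$ (equivalently, that the returned set is not too small). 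Indeed, the status-transfer claim you assert in your first paragraph is precisely what fact~(B) has to establish, and only its easy half (a winning representative makes its whole MEC winning, via within-MEC almost-sure reachability) is argued. The missing half needs a projection argument: any play of $P$ that reaches $T$ contracts, by collapsing its excursions inside MECs, to a path of $P'$ reaching $T\cap V'$ (using that MECs meeting $T$ were added to $T$), which shows that preimages of $V'\setminus S$ cannot reach $T$ even in the graph of $P$; then the attractor induction of fact~(A) must be repeated in $P$, with the extra case of plays that remain forever inside a single MEC --- such a MEC is disjoint from $T$, so those plays lose. With these two repairs your proof is complete and is, in substance, the standard argument of the cited works.
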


\begin{proposition}[Running time and Space]\label{prop:asreachrunning}
	The total number of symbolic operations of $\Call{SymASReach}{T,P}$ is in $O(\mathbf{MEC} +
	n)$. $\Call{SymASReach}{T,P}$ uses $\O(\aspace(\mathbf{MEC}))$ symbolic space.
\end{proposition}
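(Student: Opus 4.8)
The plan is to analyze $\Call{SymASReach}{T,P}$ (Figure~\ref{alg:symasreach}) line by line, accounting separately for the symbolic operations and the symbolic space, and then combine the two accountings into the claimed bounds $O(\mathbf{MEC}+n)$ and $\O(\aspace(\mathbf{MEC}))$.

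First I would bound the number of symbolic operations. The algorithm makes two passes over the MECs via $\Call{ComputeMECs}{P}$; by definition each full MEC computation costs $\mathbf{MEC}$ symbolic operations, so the two \textbf{for}-loops contribute $O(\mathbf{MEC})$. Within each loop iteration the per-MEC work is a call to $\Call{CollapseEC}{M,P'}$ together with a constant number of set operations; here I would invoke Lemma~\ref{lem:attrRunning}-style charging: each collapse touches only the incoming/outgoing edges of the collapsed component, and since every vertex is removed from $V'$ at most once across all collapses, the total extra cost of all $\Call{CollapseEC}{\cdot}$ calls is $O(n)$. The single call to $\GraphReach{P'}{T\cap V'}$ costs $O(n)$ symbolic operations as established in the Graph Reachability paragraph (at most $|\GraphReach{G}{S}\setminus S|+1=O(n)$ many $\Pre{E}{\cdot}$ operations to reach the fixed point), and the random attractor computation $\Attr{R}{P'}{V'\setminus S}$ costs $O(|A\setminus(V'\setminus S)|+1)=O(n)$ by Lemma~\ref{lem:attrRunning}. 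Summing these contributions gives $O(\mathbf{MEC}+n)$ symbolic operations.

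Next I would bound the symbolic space. The key observation is that the MEC decomposition is consumed in a streaming fashion: as noted in the discussion preceding the algorithm, $\Call{SymbolicMEC}{P',\gamma}$ (or any MEC routine) can be modified to output one MEC after another, so the \textbf{for}-loops never store the entire decomposition simultaneously. Thus at any moment the algorithm holds only the sets $V',E',V'_1,V'_R,\delta'$ defining $P'$, the current single MEC $M$, and the working sets $T,S,A,R$ — a constant number of sets — on top of whatever space the underlying MEC routine uses, namely $\aspace(\mathbf{MEC})$. This yields the $\O(\aspace(\mathbf{MEC}))$ bound (the $\O(\cdot)$ absorbing the $O(\log n)$ needed to maintain the SCC-algorithm state during streaming).

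The main obstacle I anticipate is the charging argument for the collapse operations together with the subtlety that the \emph{two} separate invocations of $\Call{ComputeMECs}{P}$ must each be counted as a full $\mathbf{MEC}$ (rather than assuming the decomposition is computed once and cached), since the second loop in Figure~\ref{alg:symasreach} recomputes MECs to map the result $R$ back to the original MDP $P$; I would make explicit that running the MEC computation twice only doubles the $\mathbf{MEC}$ term and hence is absorbed into $O(\mathbf{MEC}+n)$. Correctness, Proposition~\ref{prop:asreachcorr}, is inherited from~\cite{ChatterjeeDHL18} via the standard argument that collapsing MECs preserves almost-sure reachability and that removing the random attractor of the non-reaching vertices leaves exactly the almost-sure winning region, so for the running-time proposition I would not re-derive it but simply rely on the stated correctness.
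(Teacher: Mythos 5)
Your proof is correct and is essentially the argument the paper intends: the paper states Proposition~\ref{prop:asreachrunning} without an explicit proof, and your line-by-line accounting---two invocations of the MEC routine at cost $\mathbf{MEC}$ each, $O(1)$ symbolic operations per $\Call{CollapseEC}{\cdot}$ call summing to $O(n)$ over at most $n$ MECs, $O(n)$ operations for $\GraphReach{P'}{T \cap V'}$ and for the random attractor via Lemma~\ref{lem:attrRunning}, and streaming the MECs one at a time so that only $\O(\aspace(\mathbf{MEC}))$ sets are stored simultaneously---is precisely the justification assembled from the paper's surrounding text. Your explicit note that the second MEC computation is a full recomputation (merely doubling the $\mathbf{MEC}$ term) matches the algorithm as written in Figure~\ref{alg:symasreach}, so there is no gap.
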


Proposition~\ref{prop:asreachrunning} and Proposition~\ref{prop:asreachcorr} together with Theorem~\ref{thm:symbolicmecTST} yield the following theorem.

\begin{theorem}\label{thm:asreach}
	The set $\ASW{\Reach{T}}$ of an MDP can be computed with $\O(n^{2-\epsilon})$ 
	many symbolic operations and $\O(n^\epsilon)$ 
	symbolic space for $0 < \epsilon \leq 0.5$.
\end{theorem}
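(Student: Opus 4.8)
The plan is to derive Theorem~\ref{thm:asreach} as a direct corollary of the three ingredients already established: Proposition~\ref{prop:asreachcorr} for correctness, Proposition~\ref{prop:asreachrunning} for the parametrized resource bounds of $\Call{SymASReach}{T,P}$, and Theorem~\ref{thm:symbolicmecTST} for the cost of the MEC decomposition subroutine. First I would dispose of correctness: Proposition~\ref{prop:asreachcorr} already guarantees that $\Call{SymASReach}{T,P}$ computes $\ASW{\Reach{T}}$ independently of which symbolic MEC-decomposition routine is plugged into the two $\Call{ComputeMECs}{P}$ calls, so nothing further is needed on that front.

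For the resource bounds I would instantiate the MEC-decomposition subroutine with $\Call{SymbolicMEC}{P',\gamma}$ and the parameter choice of Theorem~\ref{thm:symbolicmecTST}. For the fixed $0 < \epsilon \leq 1/2$ this yields $\mathbf{MEC} = O(n^{2-\epsilon}\log n) = \O(n^{2-\epsilon})$ symbolic operations and $\aspace(\mathbf{MEC}) = O(n^{\epsilon}\log n) = \O(n^{\epsilon})$ symbolic space. Substituting the first bound into Proposition~\ref{prop:asreachrunning} gives a symbolic-operations count of $O(\mathbf{MEC} + n) = \O(n^{2-\epsilon}) + O(n)$; since $2-\epsilon \geq 3/2 > 1$, the additive $O(n)$ term is dominated by $\O(n^{2-\epsilon})$, so the overall bound is $\O(n^{2-\epsilon})$ as claimed. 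Substituting the space bound gives $\O(\aspace(\mathbf{MEC})) = \O(n^{\epsilon})$ symbolic space.

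The only step requiring genuine care is the space accounting, which is where I expect the main (mild) obstacle to lie. Proposition~\ref{prop:asreachrunning} asserts the space bound $\O(\aspace(\mathbf{MEC}))$, but this relies on two structural facts about $\Call{SymASReach}{T,P}$ that I would make explicit: the two MEC computations occur in separate \textbf{for}-loops and can reuse the same sets, so their space does not accumulate; and, as noted in the algorithm description, the MEC routine is run in \emph{streaming} mode, emitting one MEC at a time, so at most one collapsed component is held in addition to the subroutine's working sets. The intermediate $\GraphReach{P'}{\cdot}$ and random-attractor computations each use only $O(1)$ additional sets and thus do not affect the asymptotic bound. With these observations the space is dominated by $\aspace(\mathbf{MEC}) = \O(n^{\epsilon})$, completing the proof.

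Finally, I would remark that setting $\epsilon = 1/2$ recovers the headline bound of $\O(n^{1.5})$ symbolic operations and $\O(\sqrt{n})$ symbolic space for almost-sure reachability, matching the improvement advertised in the introduction over the previous $O(n\sqrt{m})$ operation bound.
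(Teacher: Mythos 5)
Your proposal is correct and follows exactly the paper's argument: Theorem~\ref{thm:asreach} is obtained by combining Proposition~\ref{prop:asreachcorr}, Proposition~\ref{prop:asreachrunning}, and Theorem~\ref{thm:symbolicmecTST}, instantiating $\mathbf{MEC}$ and $\aspace(\mathbf{MEC})$ with the bounds $\O(n^{2-\epsilon})$ and $\O(n^{\epsilon})$. Your additional remarks on the streaming MEC output and the absorption of the additive $O(n)$ term merely make explicit what the paper leaves implicit, so nothing further is needed.
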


\subsection{Parity Objectives.}
In this section we consider the qualitative analysis of MDPs with parity objectives.
We present an algorithm for computing the winning region which is based on the algorithms we present
in the previous sections and the
algorithm presented in~\cite[Section 5]{CH11}.
The algorithm presented in~\cite[Section 5]{CH11} draws ideas from a hierarchical clustering technique~\cite{Tarjan82, KingKV01}.
Without loss of generality, we consider the parity objectives $\Parity{p}$ where 
$p: V \rightarrow \{0,1,\dots,2d\}$.
In the symbolic setting, instead of $p$, we get the sets $\P_{\geq i} = \{ v \in V \mid p(v) \geq i \}$ where $(1 \leq i \leq 2d)$ as part of the input.  
We abbreviate the family $\{ \P_{\geq i} \mid 1 \leq i \leq 2d\}$ as $(\P_{\geq k})_{1\leq k\leq 2d}$. 
Let $\P_{\leq m} = V \setminus \P_{\geq m+1}$ and  $\P_{m} = \P_{\geq m} \setminus \P_{\geq m+1}$. 
Given an MDP $P$, let $P_i$ denote the MDP obtained by removing $\Attr{R}{P}{\P_{\leq i-1}}$, the set
of vertices with priority less than $i$ and its random attractor.
A MEC $M$ is a \emph{winning MEC} in $P_i$ if there exists a vertex $u \in M$ such that $p(u) = i$ and $i$ is even, i.e., the smallest priority in the MEC is even. Let $\WE_i$ be the union of
vertices of winning maximal end-components in $P_i$, and let $\WE = \bigcup_{0\leq i \leq 2d} \WE_i$.
Lemma~\ref{lem:paritywinningmec} says that computing $\ASW{\Parity{p}}$ is equivalent to computing almost-sure
reachability of $\WE$. Intuitively, player 1 can infinitely often satisfy the parity condition after
reaching an end-component which satisfies the parity condition. 

\begin{lemma}[\cite{CH11}]\label{lem:paritywinningmec}
	Given an MDP $P$ we have $\ASW{\Parity{p}} = \ASW{\Reach{\WE}}$.
\end{lemma}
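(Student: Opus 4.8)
The plan is to establish the two inclusions $\ASW{\Parity{p}} \subseteq \ASW{\Reach{\WE}}$ and $\ASW{\Reach{\WE}} \subseteq \ASW{\Parity{p}}$ separately, relying on the MEC-decomposition structure of the MDP and the characterization of winning strategies via end-components. The central object is the set $\WE = \bigcup_{0 \leq i \leq 2d} \WE_i$ of vertices lying in winning MECs across the peeled MDPs $P_i$, so the heart of the argument is to show that reaching $\WE$ almost surely is both necessary and sufficient for satisfying the parity condition almost surely.

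For the direction $\ASW{\Reach{\WE}} \subseteq \ASW{\Parity{p}}$, I would argue that from any winning MEC $M \subseteq \WE_i$ in $P_i$, player~1 has a strategy that stays within $M$ forever with probability~1 and visits the minimum-priority vertex (of even priority $i$) infinitely often. The key facts are: (a) within a MEC, player~1 can almost-surely reach every vertex from every vertex (stated in the Preliminaries), so a strategy cycling through the priority-$i$ vertex makes $\min_{v \in \Inf{\omega}} p(v) = i$ almost surely; and (b) since $M$ lives in $P_i = P \setminus \Attr{R}{P}{\P_{\leq i-1}}$, no vertex of priority strictly less than $i$ is reachable while remaining in $M$, because by the random-attractor construction any escape to lower priorities would have pulled $M$ into the attractor. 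Combining this with almost-sure reachability of $\WE$, player~1 first reaches some winning MEC and then switches to the recurrent strategy, yielding $\Pr_v^\sigma(\Parity{p}) = 1$.

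For the converse $\ASW{\Parity{p}} \subseteq \ASW{\Reach{\WE}}$, I would use the standard fact that under a memoryless almost-sure winning strategy $\sigma$, the limit behavior of the induced Markov chain is concentrated on its closed recurrent sets (bottom strongly connected components), and each such recurrent set is contained in an end-component of $P$. For the parity condition to hold almost surely, the minimum priority visited infinitely often in each reachable recurrent set must be even; I would show that such a recurrent end-component, say with minimum priority $i$ (even), is a winning MEC in $P_i$ — it avoids $\Attr{R}{P}{\P_{\leq i-1}}$ because visiting a lower priority would violate minimality, and it is winning by definition since it contains a priority-$i$ vertex. Hence almost all plays reach $\WE$, giving $v \in \ASW{\Reach{\WE}}$.

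The main obstacle I anticipate is the careful bookkeeping around the attractor-removal and the priority ordering: one must verify that a recurrent end-component with even minimal priority $i$ genuinely survives in $P_i$ (i.e., is disjoint from $\Attr{R}{P}{\P_{\leq i-1}}$) and is maximal there, rather than being a sub-component of a larger MEC whose minimum priority is odd or smaller. This requires showing that the attractor removal cannot intersect a recurrent set whose plays never visit priorities below $i$, which follows from the definition of the random attractor (random vertices escaping to lower-priority attractors would be drawn in, contradicting recurrence within the set). Since the statement is attributed to~\cite{CH11}, I would expect the cleanest route is to invoke that the reduction is exactly the classical parity-to-reachability reduction for MDPs and to present the inductive peeling argument on the priorities $2d, 2d-1, \dots, 0$, but the genuinely technical step is the equivalence between recurrent behavior under optimal strategies and membership in the winning end-components $\WE_i$.
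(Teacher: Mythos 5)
Your proposal is correct and takes essentially the approach the paper intends: the paper itself gives no proof of this lemma---it is imported from~\cite{CH11} with only the one-line intuition that player~1 can satisfy the parity condition after reaching a winning end-component---and your two-inclusion argument (reach a winning MEC and then stay inside it visiting an even-minimum-priority vertex infinitely often, noting that the attractor construction makes such a MEC of $P_i$ a genuine EC of $P$ with all priorities at least $i$; conversely, memoryless almost-sure winning strategies, recurrent classes of the induced Markov chain, and the attractor-disjointness induction for the other direction) is exactly the argument of the cited source. The maximality obstacle you flag at the end is in fact vacuous: every vertex of $P_i$ has priority at least $i$, so the MEC of $P_i$ containing your recurrent set (which contains a priority-$i$ vertex with $i$ even) has minimum priority exactly $i$ and is therefore winning, and membership in $\WE_i$ only requires containment in some winning MEC, not being maximal oneself.
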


We describe in Figure~\ref{alg:winpec} how to compute $\WE$ symbolically.

\subsubsection{Algorithm Description.}
The algorithm uses a key idea which we describe first. 
Recall that $P_i$ denotes the MDP obtained by removing $\Attr{R}{P}{\P_{\leq i-1}}$, i.e., the set
of vertices with priority less than $i$ and its random attractor.

\smallskip\noindent\emph{Key Idea.}
If $u,v$ are in a MEC in $P_i$, then they are in the same MEC in $P_{i-1}$. 
The key idea implies that if a vertex is in a winning MEC of $P_i$, it is also in a winning MEC of
$P_{i-1}$.
Intuitively, this holds due to the following two facts: 
(1)~Because $P_{i-1}$ contains all edges and
vertices of $P_i$ the MECs $P_i$ are still strongly connected in $P_{i-1}$. 
(2)~Because $\Attr{R}{P}{\P_{\leq i-1}}$ makes sure that no MEC $M$ in $P_i$ 
has a random vertex with an edge leaving $P_i$ in $M$, the same is true for the set $M$ in $P_{i-1}$. 

We next present the recursive algorithm $\Call{WinPEC}{(\P_{\geq k})_{1\leq k \leq 2d},i,j,P}$ which,
for a MDP $P$, computes the set $\bigcup_{i \leq \ell \leq j} \WE_i$ of winning MECs for priorities between $i$ and $j$.
\begin{enumerate}
	\item \emph{Base Case:} If $j < i$, return $\emptyset$.
	\item Compute $m \gets \lceil (i+j)/2 \rceil$.
	\item Compute the MECs of $P_m = V \setminus \Attr{R}{P}{\P_{\leq m-1}}$ and for each MEC $M \in P_m$
		compute the minimal priority $min$ among all vertices in that MEC.
		\begin{itemize}
			\item If $min$ is even then add $M$ to the set $W$ of vertices in winning MECs.
			\item If $min$ is odd we recursively call $\Call{WinPEC}{(\P_{\geq k})_{1\leq k \leq
						2d},min+1,j,P^u}$ where $P^u$ is the sub-MDP containing only vertices and edges inside $M$.
				This call applies the key idea and refines the MECs of $P_m$ 
				and  computes the set $\bigcup_{min+1\leq \ell \leq j} \WE_\ell$.	   
		\end{itemize}
	\item Call $\Call{WinPEC}{(\P_{\geq k})_{1\leq k \leq 2d},i,m-1,P^\ell}$ where $P^\ell$ is the MDP where all MECs in $P_m$ are
		collapsed into a single vertex and thus only the edges outside the MECs of $P_m$ are considered.
		This call computes the set $\bigcup_{i\leq k\leq m-1} \WE_k$.
\end{enumerate}
The initial call is $\Call{WinPEC}{\P_{\geq	k})_{1\leq k \leq 2d},0,2d,P}$. 
Figure~\ref{alg:winpec} illustrates the formal version of the sketched algorithm.

\begin{figure}
	\begin{algorithmic}[1]
		\Procedure{WinPEC}{$(\P_{\geq k})_{1\leq k \leq 2d},i,j,P$}
			\State $W \gets \emptyset$;
			\lIf{$j<i$} \Return $W$;
			\State $m \gets \lceil (i+j)/2 \rceil$;
			\State $X_m \gets \Attr{R}{P}{\P_{\leq m -1}}$; \label{alg:winpec:attrxm}
			\State $Z_m \gets V \setminus X_m$; $E_m \gets E \cap (Z_m \times Z_m)$; 
			\State $P' \gets (Z_m,E_m,\langle V_1 \cap Z_m, V_R \cap Z_m\rangle, \delta')$;
			\For{$M \gets \Call{ComputeMECs}{P'}$}\label{alg:winpec:computemecs1}
				\State $min \gets minPriority(M)$;\label{alg:winpec:minpriority}
				\If{$min$ is even}
					\State$W \gets W \cup M$;
				\Else
					\State $V^u \gets M \setminus \Attr{R}{P}{\P_{min}}$;\label{alg:winpec:rmPmin}
					\State $P^u \gets (V^u,(V^u \times V^u) \cap E,\langle V_1 \cap V^u, V_R \cap
					V^u\rangle, \delta^u)$;
					\State $W \gets W \cup \Call{WinPEC}{(\P_{\geq k})_{1\leq k \leq	2d},min+1,j,P^u}$;\label{alg:winpec:rec1}
				\EndIf	
			\EndFor	
			\For{$M \gets \Call{ComputeMECs}{P'}$}\label{alg:winpec:computemecs2}
				\State $\Call{CollapseEC}{M,P};$\label{alg:winpec:collapse}	\Comment{MDP with MECs collapsed is $P^\ell$}
			\EndFor	
			\State $W \gets W \cup \Call{WinPEC}{(\P_{\geq k})_{1\leq k \leq 2d},i, m-1,P}$;\label{alg:winpec:rec2}
			\State \Return $W$;
		\EndProcedure
	\end{algorithmic}	
	\caption{Algorithm to compute $\WE$ of the MDP $P$ recursively}\label{alg:winpec}
\end{figure}

\subsubsection{Correctness and Number of Symbolic Steps.}
In this section we argue that $\Call{WinPEC}{\cdot}$ is correct and bound the number of symbolic
steps and the symbolic space usage. 
A key difference in the analysis of $\Call{WinPEC}{\cdot}$ and~\cite{CH11} 
is that we aim for a symbolic step bound that is independent from the number of edges in $P$
and, thus, we cannot use the argument from \cite{CH11} which charges the cost of each recursive call
to the edges of $P$.
The key argument in~\cite{CH11} is that the sets of edges in the different branches of the recursions do not overlap.
For vertices it is not that simple, as we do not entirely remove vertices that appear in a MEC but merge the MEC and represent it by a single vertex.
That is, a vertex can appear in both $P^\ell$ and in $P^u$ corresponding to the MEC. 
In order to accomplish our symbolic step bound we adjusted the algorithm.
At Line~\ref{alg:winpec:rmPmin} we \emph{always} remove the minimum priority vertices
instead of removing the vertices with priority $m$ to ensure that we remove at least one vertex.
Intuitively, by always removing at least one vertex from a MEC we ensure that the total number of vertices processed
at each recursion level does not grow.
Note that these changes of the algorithm do not affect the  correctness argument of~\cite{CH11} as we always compute the same sets $\WE_m$ 
but avoid calls to $\Call{WinPEC}{\cdot}$ with no progress on some MECs.

\begin{proposition}[Correctness]\label{prop:parity:corr}
	$\Call{WinPEC}{\cdot}$ returns the set of winning end-components $\WE$.
\end{proposition}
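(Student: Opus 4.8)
The plan is to prove that $\Call{WinPEC}{(\P_{\geq k})_{1\leq k \leq 2d},i,j,P}$ returns exactly $\bigcup_{i \leq \ell \leq j} \WE_\ell$ by induction on the size of the interval $j-i$, and then specialize to the initial call $(i,j)=(0,2d)$ to obtain $\WE$. I would set up the induction so that the statement being proved is the interval-parametrized version, since the recursion splits the priority interval at the midpoint $m=\lceil(i+j)/2\rceil$; the base case $j<i$ returns $\emptyset$, which matches the empty union.

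\emph{Inductive step.} First I would fix the MDP $P_m = V \setminus \Attr{R}{P}{\P_{\leq m-1}}$ and analyze the MECs computed at Line~\ref{alg:winpec:computemecs1}. For each MEC $M$ with minimal priority $\mathit{min}$, there are two cases. If $\mathit{min}$ is even then $M \subseteq \WE_{\mathit{min}}$ by definition of a winning MEC, and $\mathit{min} \in [i,j]$ since $\mathit{min} \geq m \geq i$ (as all lower priorities were attracted away) and $\mathit{min} \leq j$; so adding $M$ to $W$ is correct and contributes exactly the winning vertices at this priority. If $\mathit{min}$ is odd, then no vertex of $M$ lies in $\WE_{\mathit{min}}$, and I would invoke the key idea (a MEC of $P_{\mathit{min}}$ restricted to $M$ is a MEC of the sub-MDP $P^u$ obtained by deleting $\Attr{R}{P}{\P_{\mathit{min}}}$ inside $M$) so that the recursive call $\Call{WinPEC}{\cdot,\mathit{min}+1,j,P^u}$ returns $\bigcup_{\mathit{min}+1 \leq \ell \leq j}\WE_\ell \cap M$ by the induction hypothesis. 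The remaining priorities below $m$ are handled by the second loop, which collapses all MECs of $P_m$ to form $P^\ell$ and recurses via $\Call{WinPEC}{\cdot,i,m-1,P^\ell}$, returning $\bigcup_{i \leq \ell \leq m-1}\WE_\ell$ by the induction hypothesis and Lemma~\ref{lem:collapse} (collapsing preserves the EC structure, hence the winning MECs, for priorities in $[i,m-1]$).

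\emph{The main obstacle.} The delicate point is showing that the union of the three contributions equals $\bigcup_{i\leq \ell \leq j}\WE_\ell$ with no vertex counted incorrectly and none omitted, particularly across the boundary where $M$ appears both in the collapsed lower-interval recursion (as a single collapsed vertex in $P^\ell$) and in the upper-interval recursion (as the sub-MDP $P^u$). I would argue that a vertex $v \in \WE_\ell$ for some $\ell \in [i,j]$ lies in a unique winning MEC at its own priority level: if $\ell \geq m$ then $v$ is covered either directly (even $\mathit{min}=\ell$) or by the $P^u$ recursion (the winning MEC at level $\ell$ is contained in the MEC $M$ of $P_m$ by the key idea); if $\ell < m$ then $v$'s winning MEC survives the collapsing step by Lemma~\ref{lem:collapse} and is found by the $P^\ell$ recursion. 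The subtlety that the algorithm \emph{always} removes the minimal-priority vertices at Line~\ref{alg:winpec:rmPmin} (rather than only priority-$m$ vertices) must be reconciled with correctness: I would observe, following the remark preceding the proposition, that $\Attr{R}{P}{\P_{\mathit{min}}}$ inside a MEC $M$ with minimal odd priority $\mathit{min}$ removes exactly the vertices that cannot participate in any winning MEC of lower-than-$\mathit{min}$ minimum priority, so the sets $\WE_\ell$ for $\ell > \mathit{min}$ computed in $P^u$ are unchanged compared to the original algorithm of~\cite{CH11}. Finally I would conclude by setting $(i,j)=(0,2d)$ to obtain $\WE = \bigcup_{0 \leq \ell \leq 2d}\WE_\ell$, as required.
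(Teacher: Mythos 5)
There is a genuine gap: your induction statement is too strong, and it is in fact false for the intermediate recursive calls. You propose to show that $\Call{WinPEC}{(\P_{\geq k})_{1\leq k \leq 2d},i,j,P}$ returns \emph{exactly} $\bigcup_{i \leq \ell \leq j} \WE_\ell$, and your inductive step relies on the claim that every MEC $M$ computed at Line~\ref{alg:winpec:computemecs1} with even minimal priority $min$ satisfies $min \leq j$. This claim fails: the sub-MDPs passed to recursive calls still contain vertices of priority larger than $j$. For instance, in the call $\Call{WinPEC}{\cdot,i,m-1,P^\ell}$, the MDP $P^\ell$ retains (collapsed or non-MEC) vertices of every priority up to $2d$, so a MEC computed inside that call can consist entirely of vertices with priority strictly greater than its interval bound $m-1$; if its minimum priority is even, the algorithm adds it to $W$ even though none of its vertices need lie in $\bigcup_{i \leq \ell \leq m-1}\WE_\ell$. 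Hence the returned set can be a strict superset of the interval union $\bigcup_{i\leq\ell\leq j}\WE_\ell$, and the exact-equality invariant you plan to prove by induction is unprovable --- the inductive step breaks precisely when you try to match the algorithm's output against the interval union.

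The paper avoids this by proving a weaker, two-sided invariant: $\bigcup_{i \leq \ell \leq j} \WE_\ell \subseteq \Call{WinPEC}{(\P_{\geq k})_{1\leq k \leq 2d},i,j,P} \subseteq \bigcup_{1 \leq \ell \leq 2d} \WE_\ell$. The lower bound guarantees that nothing in the target interval is missed; this part of your argument --- the case analysis on $min$ even/odd, the key idea relating winning MECs of $P^u$ to winning MECs of $P$, and the one-to-one correspondence of winning MECs under collapsing for priorities below $m$ --- matches the paper's and is sound. The upper bound only requires that every vertex ever added lies in \emph{some} winning MEC, of arbitrary priority, which is exactly what handles the problematic case (the paper notes explicitly that ``it might be that $min > j$''). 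At the root call $(i,j)=(0,2d)$ the two bounds coincide and yield $\WE$. Your plan can be repaired by replacing exact equality with this sandwich invariant, but as written the core induction hypothesis is false.
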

\begin{proof}
	The correctness of the algorithm is by induction on $j-i$
	for the induction hypothesis 
	$\bigcup_{i \leq \ell \leq j} \WE_\ell \subseteq \Call{WinPEC}{(\P_{\geq k})_{1\leq k \leq
			2d},i,j,P} \subseteq \bigcup_{1 \leq \ell \leq 2d} \WE_\ell$. 
	
	First consider the induction base cases:
	If $j>i$, the algorithm correctly returns the empty set.
	Next, consider the induction step. 
	Assume that the results hold for $j-i \leq k$, and we consider $j-i = k+1$. 
	If $m$ is even, then
	\begin{equation*}
		\bigcup_{i \leq k \leq j} \WE_k = \WE_{m} \cup \bigcup_{i \leq k \leq m-1} \WE_k \cup
		\bigcup_{m+1 \leq k \leq j} \WE_k,
	\end{equation*}
	otherwise ($m$ is odd), then
	\begin{equation*}
		\bigcup_{i \leq k \leq j} \WE_k = \bigcup_{i \leq k \leq m-1} \WE_k \cup \bigcup_{m+1 \leq k \leq j} \WE_k,
	\end{equation*}	
	Consider an arbitrary winning MEC $M_k$ in $P_k$, i.e., the lowest even priority is $k$. 
	We consider the following cases.
	\begin{enumerate}
		\item For all $k \geq m$ we have that $M_k$ is contained in a MEC $M_m$ of $P_m$. 
			Additionally, no random vertex in $M_m$ can have an edge leaving $M_m$ and thus no
			random vertex in $M_k$ can have a random edge leaving $M_m$. 
			Moreover, for the minimum priority $min$ of $M_m$ we have $k \geq min \geq m$.
			If $min$ is even then $M_m$ is itself winning and thus $M_k \subseteq
			\Call{WinPEC}{(\P_{\geq k})_{1\leq k \leq 2d},i,j,P}$ and $M_m \subseteq \bigcup_{1 \leq \ell
				\leq 2d} \WE_\ell$ (note that it might be that $min > j$).
			
			If $min$ is odd 
			we have that $M_k$ is a winning MEC of $P$ iff 
			it is a winning MEC of $P^u$ and thus by the induction
			hypothesis $M_k \subseteq \Call{WinPEC}{(\P_{\geq k})_{1\leq k \leq
					2d},min+1,j,P^u}$.
			It follows that also $M_k \subseteq \Call{WinPEC}{(\P_{\geq k})_{1\leq k \leq 2d},i,j,P}$.
		\item For $k < m$ consider a MEC $M_k$ in $P_k$. If $M_k$ contains a vertex $v$ that
			belongs to a MEC $M_m$ of $P_m$, then $M_m \subset M_k$ (i.e., all vertices of the
			MEC in $P_m$ of $v$ also belong to $M_k$ and $M_k$ has at least one additional vertex
			with priority $<m$).
			We thus have that for $k \leq m$  the winning MECs $M_k$ in $P_k$ are in one-to-one correspondence
			with the winning MECs $M'_k$ of the modified MDP where all MECs of $P_m$ are collapsed.
			From the induction hypothesis 
			it follows that $\bigcup_{i\leq k \leq m-1} \WE_k = \Call{WinPEC}{(\P_{\geq k})_{1\leq k
					\leq 2d},i, m-1,P}$
	\end{enumerate}
	
	Hence we have that $\bigcup_{i \leq k \leq j} \WE_k  \subseteq \Call{WinPEC}{(\P_{\geq
			k})_{1\leq k \leq 2d},i,j,P} \subseteq  
	\bigcup_{1 \leq \ell \leq 2d} \WE_\ell$. The statements follows from setting $i=0$ and $j=2d$.
\end{proof}

\begin{proposition}[Symbolic Steps]\label{prop:parity:symbsteps}
	The total number of symbolic operations for $\Call{WinPEC}{\cdot}$ is $O(\mathbf{MEC} \cdot \log d)$ for
	$0 < \epsilon \leq 0.5$.
\end{proposition}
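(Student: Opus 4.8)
The plan is to treat \Call{WinPEC}{\cdot} as a divide-and-conquer recursion on the priority interval $[i,j]$ and to prove two facts whose product gives the bound: (i)~the recursion tree has depth $O(\log d)$, and (ii)~the symbolic work summed over \emph{all} subproblems at any single level of the tree is $O(\mathbf{MEC})$, regardless of how many subproblems that level contains. Here I write $\mathbf{MEC}(n')$ for the cost of an MEC decomposition on $n'$ vertices, so that $\mathbf{MEC}=\mathbf{MEC}(n)$.

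First I would bound the recursion depth. A call on $[i,j]$ sets $m=\lceil (i+j)/2\rceil$ and spawns one lower call on $[i,m-1]$ and, for every MEC of $P_m$ whose minimum priority $\mathit{min}$ is odd, an upper call on $[\mathit{min}+1,j]$. Since $P_m=V\setminus\Attr{R}{P}{\P_{\le m-1}}$ discards every vertex of priority $<m$ together with its attractor, each MEC of $P_m$ satisfies $\mathit{min}\ge m\ge (i+j)/2$; hence the upper interval has length $j-\mathit{min}-1\le (j-i)/2-1$ and the lower interval has length $m-1-i\le (j-i)/2-1$. Both branches at least halve the interval length, so starting from $[0,2d]$ the recursion bottoms out in $O(\log d)$ levels.

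The hard part will be the per-level bound, because—unlike the edge-disjointness argument of~\cite{CH11}—a vertex of a MEC $M$ of $P_m$ can appear both in the upper subproblem $P^u$ (as a member of $M$) and in the lower subproblem $P^\ell$ (as the single vertex representing the collapse of $M$). I would control this by establishing the invariant that the number of vertices summed over all subproblems at a fixed level is at most $n$. Fix a single parent with $n_p$ vertices and let $M_1,\dots,M_k$ be the MECs of $P_m$, of which $k_{\mathit{even}}$ have even and $k_{\mathit{odd}}$ have odd minimum priority. The lower subproblem $P^\ell$ collapses each $M_i$ to one vertex and so has $n_p-\sum_i(|M_i|-1)$ vertices; each odd-minimum upper subproblem has at most $|M_i|-1$ vertices, using the adjusted Line~\ref{alg:winpec:rmPmin} that forms $V^u=M_i\setminus\Attr{R}{P}{\P_{\mathit{min}}}$ together with $\P_{\mathit{min}}\cap M_i\neq\emptyset$, which guarantees at least one vertex of $M_i$ is deleted. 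Summing the children contributes at most $n_p-\sum_{i:\,\mathit{min}\ \mathrm{even}}(|M_i|-1)\le n_p$ vertices. Summing this bound over all parents at a level shows the total vertex count never increases from one level to the next, hence stays $\le n$ throughout.

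Finally I would bound the cost per subproblem and combine. A subproblem on $n'$ vertices runs a constant number of MEC decompositions (Lines~\ref{alg:winpec:computemecs1} and~\ref{alg:winpec:computemecs2}) at cost $O(\mathbf{MEC}(n'))$, plus $O(n')$ operations for the attractor of $\P_{\le m-1}$ and for the attractors $\Attr{R}{P}{\P_{\mathit{min}}}$ taken inside the respective MECs, plus $O(\log d)$ set operations per MEC to compute $\mathit{min}$ by binary search over the sets $\P_{\ge k}$, i.e.\ a side cost of $O(n'\log d)$. Because the chosen cost $\mathbf{MEC}(n')=O(n'^{\,2-\epsilon}\log n')$ is superadditive (as $a^{2-\epsilon}+b^{2-\epsilon}\le (a+b)^{2-\epsilon}$ for $\epsilon\le 1$) and dominates the $O(n'\log d)$ side terms whenever $d\le n$, the invariant $\sum n'\le n$ yields total work $O(\mathbf{MEC})$ on each level. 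Multiplying by the $O(\log d)$ levels gives the claimed $O(\mathbf{MEC}\cdot\log d)$ bound.
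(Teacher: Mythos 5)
Your overall structure --- recursion depth $O(\log d)$ multiplied by per-level work $O(\mathbf{MEC})$, with the per-level vertex invariant $\sum n' \le n$ secured by the adjusted Line~\ref{alg:winpec:rmPmin} and by superadditivity of $n'^{\,2-\epsilon}\log n'$ --- is essentially the paper's argument: the paper phrases it as the recurrence $T(n,x)\le c\,T_M(n)+\sum_{i}T(n_i,x-1)+T(n_\ell,x-1)$ with $\sum_i n_i+n_\ell\le n$, and your explicit superadditivity observation is exactly what makes its ``straightforward'' unrolling to $O(T_M(n)\log d)$ valid. The one place where you genuinely deviate is the accounting of the minimum-priority binary searches at Line~\ref{alg:winpec:minpriority}, and that is where your proof has a gap.

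You fold the $O(\log d)$-per-MEC search cost into a side term of $O(n'\log d)$ per subproblem and claim it is dominated by $\mathbf{MEC}(n')$ ``whenever $d\le n$''. There are two problems. First, per-subproblem domination is false for small subproblems even under $d\le n$: for $n'=O(1)$ you are comparing $\log n$ against a constant. This is repairable by aggregating over a whole recursion level, since $\sum_{n'} n'\log d \le n\log d \le n\log n = O(\mathbf{MEC})$, so it is only a matter of phrasing. Second, and more substantively, the hypothesis $d\le n$ is not part of the proposition: the priorities enter as the input sets $(\P_{\geq k})_{1\leq k\leq 2d}$, so $d$ is a parameter independent of $n$, and the paper explicitly flags the regime where your assumption fails (``when $\log d > n$ we cannot charge the cost to computing the MEC decomposition''). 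With your per-level accounting the total search cost is $O(n\log d)$ \emph{per level}, hence $O(n\log^2 d)$ overall, which is not $O(\mathbf{MEC}\cdot\log d)$ once $\log d$ grows faster than $n^{1-\epsilon}\log n$. The paper closes this case with an amortized charging argument (Claim~\ref{claim:minpriority}): each invocation of the binary search is charged to a vertex that disappears from all future calls of $\Call{WinPEC}{\cdot}$ --- either a vertex of an even-minimum MEC that is subsequently collapsed, or the minimum-priority vertex deleted at Line~\ref{alg:winpec:rmPmin} --- so the total search cost over the \emph{entire execution} is $O(n\log d)$, which is $O(\mathbf{MEC}\cdot\log d)$ for every $d$ because $\mathbf{MEC}\ge n$. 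Adopting that global charging argument (or first compressing the priorities to at most $n$ distinct values, which itself costs symbolic operations you would need to account for) would complete your proof.
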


\begin{proof}
	Given an MDP $P$ with $n$ vertices and $d$ priorities, let us denote by
	$T(n,x)$ the number of symbolic steps of $\Call{WinPEC}{\cdot}$ at recursion depth $x$ 
	and with $T_M(n)$ the number of symbolic steps incurred by the symbolic MEC Algorithm. 
	As shown in~\cite{CH11}, note that the recursion depth of $\Call{WinPEC}{\cdot}$ 
	is in $O(\log d)$ because we recursively consider either $(min + 1,j)$ or $(i, m-1)$ where $min
	\geq m = \lceil(i+j/2)\rceil$ until $j>i$, where $j=2d$ initially.
	First, we argue that there exists $c > 0$ such that
	$
	T(n,x) \leq c \cdot T_M(n) + (\sum_{i=1,\dots,t} T\left(n_i, x-1\right)) \\ +T(n- (\sum_{i=1,\dots,t}
	n_i) + t , x-1)  \text{ if } x > 1$ and	$T(n,0) \leq c. 
	$
	The attractors computed at Line~\ref{alg:winpec:attrxm} and Line~\ref{alg:winpec:rmPmin} 
	can be done in $O(n)$ symbolic steps as the set of vertices in the attractors are all disjunct.
	Clearly, this is cheaper than computing the MEC decomposition. 
	To extract the minimum priority of a set of nodes $X
	\subseteq V$ we apply a binary search procedure which takes $O(\log d)$ symbolic steps at Line~\ref{alg:winpec:minpriority}.
	Note that when $\log d > n$ we cannot charge the cost to computing the MEC decomposition.
	Thus, we argue in Claim~\ref{claim:minpriority} that the total number of symbolic steps for Line~\ref{alg:winpec:minpriority} in $\Call{WinPEC}{\cdot}$ is less than $O(n \log d)$.
	The rest of the symbolic steps in $\Call{WinPEC}{\cdot}$, (except the recursive calls and
	computing the MEC decomposition) in $\Call{WinPEC}{\cdot}$ can be done in a constant amount of symbolic steps.
	Note that when $x = 0$, i.e., in the case $j < i$, we only need a constant amount of symbolic steps.	
	Let $t$ be the number of MECS in $P'$. 
	When $x > 0$, consider the following argumentation for the
	number of symbolic steps of the recursive calls:
	\begin{itemize}
		\item $\Call{WinPEC}{(\P_{\geq k})_{1\leq k \leq 2d},min,j,P^u} $:
			We perform the recursive call for each MEC $M_i \in P'$ $(1 \leq i \leq t)$ 
			where the vertex with minimum priority is odd.
			The total cost incurred by all such recursive calls is $\sum_{i=1,\dots,t} T(n_i,
			x-1)$ where $n_i \leq |M_i|-1$ because we always remove the
			vertices with minimum priority at Line~\ref{alg:winpec:rmPmin}.
		\item $\Call{WinPEC}{(\P_{\geq k})_{1\leq k \leq 2d},i, m-1,P^\ell}$:
			$P^\ell$ consists of the vertices representing the collapsed MECs, the vertices not
			in $P'$ and the vertices which are not in a MEC of $P'$. 
			The number of vertices in $P^\ell$ is thus $n_\ell = n - \sum_{i=1,\dots,t} |M_i| + t$
			and we obtain $T(n_\ell,x-1)$. 
	\end{itemize}
	Note that $\sum_{i=1, \dots, t} n_i + n_\ell 
	\leq n$.
	We choose $c$ such that $c T_M(n)$ is greater than the number of symbolic steps for computing
	the MECs twice and the rest of the work in the current iteration of $\Call{WinPEC}{\cdot}$.
	It is straightforward to show that $T(n,d) = O(T_M(n) \log d)$.
	
	The following claim shows that the total number of symbolic steps incurred by
	Line~\ref{alg:winpec:minpriority} for all calls to $\Call{WinPEC}{\cdot}$ is only $O(n \log d)$.
	\begin{claim}\label{claim:minpriority}
		The total amount of symbolic steps used by Line~\ref{alg:winpec:minpriority} is in
		$O(n \log d)$.
	\end{claim}
	\begin{proof}
		To obtain the set of vertices with minimum priority from a set of vertices $X \subseteq V$ the function
		$\minPriority{X}$ performs a binary search using the sets $(\P_{\geq k})_{1\leq k \leq 2d}$.
		This can be done in $O(\log d)$ many symbolic steps.
		To prove that the number of symbolic steps used by Line~\ref{alg:winpec:minpriority} in
		total is in
		$O(n \log d)$ note that each time the function is performed we either:
		(i) Remove all
		vertices in $M$, and we never perform the function on the vertices in $M$ again. We charge the cost to an arbitrary vertex in $M$.
		(ii) Remove at least one vertex at Line~\ref{alg:winpec:rmPmin} and we never perform
				the function on a MEC containing this vertex again. We charge the cost to this vertex.
		As there are only $n$ vertices we obtain that the total amount of symbolic steps used by
		Line~\ref{alg:winpec:minpriority} is in $O(n \log d)$.
	\end{proof}
	The symb. bound of $\Call{WinPEC}{\cdot}$ follows by Claim~\ref{claim:minpriority}.
\end{proof}
	
	\begin{proposition}\label{prop:parity:symbspace}
		$\Call{WinPEC}{\cdot}$ uses $O(\aspace(\mathbf{MEC}) + \log n \log d)$ space.
	\end{proposition}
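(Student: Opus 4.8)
The plan is to split the symbolic space of $\Call{WinPEC}{\cdot}$ into the \emph{transient} space consumed while a single MEC decomposition is being computed and the \emph{persistent} space that must survive on the recursion stack, and to show that the former is $O(\aspace(\mathbf{MEC}))$ while the latter is $O(\log n \cdot \log d)$. First I would recall from the proof of Proposition~\ref{prop:parity:symbsteps} that the recursion depth of $\Call{WinPEC}{\cdot}$ is $O(\log d)$, since each recursive call replaces the priority interval $[i,j]$ either by $[\mathit{min}+1,j]$ or by $[i,m-1]$ with $m=\lceil (i+j)/2\rceil$, halving the interval length. Although a single invocation may spawn one recursive call per MEC of odd minimum priority in the first loop, these calls are executed sequentially and hence reuse the same space; only the depth, not the width, of the recursion tree contributes to the space bound, so at most $O(\log d)$ activation frames are simultaneously live.

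Next I would argue that each live frame retains only $O(\log n)$ persistent sets. The crucial point is that $\Call{ComputeMECs}{P'}$ is run as a resumable iterator that emits one MEC at a time: as in $\Call{SymbolicMEC}{P',\gamma}$, it first computes the set $M$ of MEC-vertices (the only phase that uses the full $\aspace(\mathbf{MEC})$ space) and then emits the individual MECs as the SCCs of $M$, maintaining between successive outputs only $M$ together with the $O(\log n)$ state of the SCC iterator. Consequently, once a frame has produced a MEC and descended into a recursive call, its working MEC-space has already been released, and its footprint is just $M$, the accumulator $W$, the current MEC, the $O(\log n)$ iterator state, and a constant number of auxiliary sets (including those for the attractor computations at Lines~\ref{alg:winpec:attrxm} and~\ref{alg:winpec:rmPmin}), i.e.\ $O(\log n)$ in total. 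Because both loops of $\Call{WinPEC}{\cdot}$ recompute the MEC decomposition, the MECs are never stored all at once, and the collapsing in the second loop is performed in place on the vertex and edge sets and therefore adds no asymptotic space.

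Finally I would combine these observations. Since the recursion is strictly sequential, at any instant at most one frame is in the phase that computes its set $M$ and thus uses $O(\aspace(\mathbf{MEC}))$ transient space; every other frame on the stack is paused in its SCC-emission phase and occupies only $O(\log n)$ sets. Summing the persistent cost over the $O(\log d)$ frames of the stack yields $O(\log n \cdot \log d)$, and adding the single active MEC computation gives the claimed bound $O(\aspace(\mathbf{MEC}) + \log n \log d)$. The binary search for the minimum priority at Line~\ref{alg:winpec:minpriority} takes $O(\log d)$ symbolic steps but touches only a constant number of sets at a time, so it does not affect the bound.

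The main obstacle is the second step: justifying that the expensive $\aspace(\mathbf{MEC})$ space is genuinely transient and does not accumulate across the $O(\log d)$ levels of recursion. This hinges on the MEC routine behaving as a resumable iterator whose paused state fits in $O(\log n)$ sets—so that a parent frame has already discarded the working space used to compute $M$ by the time it pauses to recurse—and on the sequential (rather than simultaneous) processing of the recursive calls generated inside the first loop.
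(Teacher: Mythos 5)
Your proof is correct and takes essentially the same route as the paper: both bound the recursion depth by $O(\log d)$ via the halving of the priority interval, charge each paused frame only the $O(\log n)$ resumable state of the SCC-based iterator that emits MECs one by one, and observe that the $O(\aspace(\mathbf{MEC}))$ working space is occupied only by the single currently active MEC computation and is reused across the sequential recursive calls. If anything, your explicit transient/persistent split states the reuse of the MEC space more carefully than the paper's terse phrasing.
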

	\begin{proof}
		Let $P$ be an MDP with $n$ vertices and $p$ a parity objective with $d$ priorities.
		We denote with $\aspace(\mathbf{MEC})$ the symbolic space used by the algorithm that computes the MEC decomposition.
		Observe that all computation steps in $\Call{WinPEC}{\cdot}$ need constant space except
		for the recursions and computing the MEC decomposition. 
		Both at Line~\ref{alg:winpec:computemecs1} and Line~\ref{alg:winpec:computemecs2}
		we first compute the MEC decomposition and then, to minimize extra space, 
		we output one MEC after the other by returning each SCC found given the set of vertices in nontrivial MECs.
		Note that we only require logarithmic space to maintain the state of the SCC algorithm~\cite{ChatterjeeDHL18}.  
		As argued in~\cite{CH11} the recursion depth of $\Call{WinPEC}{\cdot}$ is $O(\log d)$.
		Thus, we need $O(\log n \log d)$ space for maintaining the state of the SCC algorithm 	
		at Line~\ref{alg:winpec:attrxm} until we reach a leaf of the recursion tree. 
		At each	recursive call, we need additive $O(\aspace(\mathbf{MEC}))$ space to compute the MEC
		decomposition of $P$. The claimed space bound follows.
	\end{proof}

	Given an MDP, we first compute the set $\WE$ with $\Call{WinPEC}{\cdot}$ which is correct
	due to Proposition~\ref{prop:parity:corr}.
	We instantiate $\mathbf{MEC}$ and $\aspace(\mathbf{MEC})$ in Proposition~\ref{prop:parity:symbsteps} and
	Proposition~\ref{prop:parity:symbspace} respectively with Theorem~\ref{thm:symbolicmecTST} and thus need 
	$O(n^{2-\epsilon} \log n \log d)$ symbolic steps and $O(n^\epsilon \log n + \log n \log d)$
	(where $0 < \epsilon \leq 0.5$)	symbolic space for computing $\WE$.
	Then, we compute almost-sure reachability of $\WE$ with Theorem~\ref{thm:asreach}. 
	Finally, using Lemma~\ref{lem:paritywinningmec} we obtain the following theorem.
	\begin{theorem}
		The set $\ASW{\Parity{p}}$ of an MDP $P$ can be computed with $\O(n^{2-\epsilon})$ 
		many symbolic operations and $\O(n^{\epsilon})$ symbolic space for $0 < \epsilon \leq 0.5$.
	\end{theorem}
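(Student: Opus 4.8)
The plan is to obtain the final bound purely by composition, since this theorem is the culmination of the chain $\ASW{\Parity{p}} \to \ASW{\Reach{\WE}} \to$ almost-sure reachability, and all three links have already been established. First I would invoke Lemma~\ref{lem:paritywinningmec} to reduce the qualitative parity problem to an almost-sure reachability problem: it suffices to (a) compute the set $\WE$ of vertices in winning end-components, and then (b) compute $\ASW{\Reach{\WE}}$, since these two sets coincide with $\ASW{\Parity{p}}$.

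For step (a) I would run $\Call{WinPEC}{(\P_{\geq k})_{1\leq k \leq 2d},0,2d,P}$, whose output correctly equals $\WE$ by Proposition~\ref{prop:parity:corr}. The resource accounting here is the crux of the bookkeeping: the bounds in Proposition~\ref{prop:parity:symbsteps} and Proposition~\ref{prop:parity:symbspace} are stated abstractly in terms of the cost $\mathbf{MEC}$ and the space $\aspace(\mathbf{MEC})$ of an arbitrary symbolic MEC routine, so I would instantiate these with the trade-off guarantees of Theorem~\ref{thm:symbolicmecTST}, namely $\mathbf{MEC} = O(n^{2-\epsilon}\log n)$ and $\aspace(\mathbf{MEC}) = O(n^{\epsilon}\log n)$ for any $0 < \epsilon \leq 1/2$. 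Substituting into $O(\mathbf{MEC}\cdot \log d)$ and $O(\aspace(\mathbf{MEC}) + \log n \log d)$ gives $O(n^{2-\epsilon}\log n \log d)$ symbolic operations and $O(n^{\epsilon}\log n + \log n \log d)$ symbolic space to produce $\WE$.

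For step (b) I would apply Theorem~\ref{thm:asreach} with target set $T = \WE$, costing $\O(n^{2-\epsilon})$ symbolic operations and $\O(n^{\epsilon})$ symbolic space for the same range of $\epsilon$. Summing the two phases, the dominant terms are $O(n^{2-\epsilon}\log n \log d)$ operations and $O(n^{\epsilon}\log n + \log n \log d)$ space, and by Lemma~\ref{lem:paritywinningmec} the resulting set is exactly $\ASW{\Parity{p}}$.

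The last step, and the only place demanding any care, is absorbing the priority-dependent factor $\log d$ into the $\O(\cdot)$ notation. Here I would observe that we may assume without loss of generality $d \leq n$ (priorities realized by no vertex can be discarded, leaving at most $n$ distinct priorities), so $\log d = O(\log n)$. Consequently every $\log d$ and $\log n$ factor above is poly-logarithmic in $n$ and is hidden by $\O(\cdot)$, collapsing the bounds to $\O(n^{2-\epsilon})$ symbolic operations and $\O(n^{\epsilon})$ symbolic space. I do not expect a genuine obstacle in this proof — it is an assembly of prior results — and the single subtlety is precisely this $d \leq n$ observation that justifies treating the $\log d$ factor as hidden rather than letting it appear explicitly in the statement.
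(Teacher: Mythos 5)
Your proposal is correct and matches the paper's own proof essentially step for step: the paper likewise computes $\WE$ via $\Call{WinPEC}{\cdot}$ (Proposition~\ref{prop:parity:corr}), instantiates $\mathbf{MEC}$ and $\aspace(\mathbf{MEC})$ in Propositions~\ref{prop:parity:symbsteps} and~\ref{prop:parity:symbspace} with Theorem~\ref{thm:symbolicmecTST}, applies Theorem~\ref{thm:asreach} to compute $\ASW{\Reach{\WE}}$, and concludes with Lemma~\ref{lem:paritywinningmec}. The only addition is your explicit observation that w.l.o.g.\ $d \leq n$, so that $\log d = O(\log n)$, which the paper leaves implicit when absorbing the $\log d$ factors into the $\O(\cdot)$ notation.
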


\section{Conclusion}
We present a faster symbolic algorithm for the MEC decomposition and we 
improve the fastest symbolic algorithm for verifying MDPs with $\omega$-regular properties. There are
several interesting directions for future work. On the practical side, implementations and
experiments with case studies is an interesting direction.
On the theoretical side, improving upon the $\O(n^{1.5})$ bound for MECs is an interesting open
question which would also, using our work, improve the presented algorithm for verifying
$\omega$-regular properties of MDPs.

\section*{Acknowledgements}
The authors are grateful to the anonymous referees for their valuable comments.
A.\ S.\ is fully supported by the Vienna Science and Technology Fund (WWTF) through project ICT15--003. K.\ C.\ is supported by the 
Austrian Science Fund (FWF) NFN Grant No S11407-N23 (RiSE/SHiNE) and by the ERC CoG 863818 (ForM-SMArt). 
For M.\ H.\ the research leading to these results has received funding from the European Research Council under the European Union’s Seventh 
Framework Programme (FP/2007--2013) / ERC Grant Agreement no. 340506.
\bibliographystyle{abbrv}
\clearpage
\bibliography{symbolicMEC}
\end{document}